\def\doi{8 (1:09) 2012}
\newtheorem{definition}{Definition}
\newtheorem{proposition}{Proposition}
\newtheorem{theorem}{Theorem}
\newtheorem{corollary}{Corollary}
\newtheorem{notation}{Notation}
\newenvironment{example}{\vspace*{0ex}\noindent\textbf{Example.}}{\hfill$\square$\vspace*{1ex}}
\definecolor{Grey}{rgb}{0.9,0.9,0.9}
\renewcommand{\i}{\iota}
\newcommand{\cal}{\mathcal}
\newcommand{\always}{\mbox{{\scriptsize\mbox{$\square$}}}}
\newcommand{\eventuallyPast}{\mbox{$\diamond\!\!\!\!\;\cdot\,$}}
\newcommand{\previously}{\mbox{$\circ\!\!\!\:\:\!\!\cdot\,$}}
\newcommand{\since}{\,{\cal S}\,}
\newcommand{\parametric}[2]{\Lambda{#1}\,.\,{#2}}
\newcommand{\partialf}[2]{[{#1}\!\rightharpoondown\!{#2}]}
\newcommand{\totalf}[2]{[{#1}\stackrel{}{\!\rightarrow\!}{#2}]}
\newcommand{\Dom}{\textsf{\footnotesize Dom}}
\newcommand{\paramevents}{{\cal E}\langle X \rangle}
\newcommand{\A}{\mathbb{A}\langle X \rangle}
\newcommand{\T}{\mathbb{T}}
\newcommand{\B}{\mathbb{B}\langle X \rangle}
\newcommand{\C}{\mathbb{C}\langle X \rangle}
\newcommand{\XV}{\partialf{X}{V}}
\newcommand{\property}[1]{\text{\sc #1}}
\newcommand{\unsafemap}{\property{UnsafeMapIter}}
\newcommand{\unsafeiter}{\property{UnsafeIter}}
\newcommand{\hasnext}{\property{HasNext}}
\newcommand{\unsafesynccoll}{\property{UnsafeSyncColl}}
\newcommand{\unsafesyncmap}{\property{UnsafeSyncMap}}
\newcommand{\idea}[1]{
\vspace*{1ex}

\noindent
\rule{0.48\textwidth}{1pt}
\textit{#1}
\\
\noindent
\rule{0.48\textwidth}{1pt}

\vspace*{1ex}
}
\newcommand{\comment}[1]{}
\begin{document}

\title[Parametric Monitoring]{Semantics and Algorithms for Parametric
  Monitoring\rsuper*} 
\author[G.~Ro{\c s}u]{Grigore Ro{\c s}u\rsuper a}
\address{{\lsuper{a,b}}Department of Computer Science, University of Illinois at Urbana-Champaign}
\email{grosu@illinois.edu}
\thanks{{\lsuper a}Supported in part by NSF grants CCF-0448501, CNS-0509321,
  CNS-0720512 and CCF-0916893, and by NASA contract NNL08AA23C.}
\author[F.~Chen]{Feng Chen\rsuper b}

\keywords{runtime verification, monitoring, trace slicing}
\subjclass{
D.1.5, 
D.2.1, 
D.2.4, 
D.2.5, 
D.3.1, 
F.3.1, 
F.3.2
}
\titlecomment{{\lsuper*}A preliminary version of this paper was published 
as a 2008 technical report \cite{rosu-chen-2008-tr-a}, and an extended
abstract was presented at TACAS 2009 and published in its conference proceedings
\cite{chen-rosu-2009-tacas}.
}

\begin{abstract}
Analysis of execution traces plays a fundamental role in many program
analysis approaches, such as runtime verification, testing, monitoring,
and specification mining.  Execution traces are frequently parametric,
i.e., they contain events with parameter bindings.  Each parametric
trace usually consists of many meaningful {\em trace slices} merged
together, each slice corresponding to one parameter binding.  For
example, a Java program creating iterator objects $i_1$ and $i_2$ over
collection object $c_1$ may yield a trace 
$\textsf{\scriptsize createIter}\langle c_1\,i_1\! \rangle\ 
\textsf{\scriptsize next}\langle i_1 \!\rangle\ 
\textsf{\scriptsize createIter}\langle c_1\,i_2 \!\rangle\ 
\textsf{\scriptsize updateColl}\langle c_1 \!\rangle\ 
\textsf{\scriptsize next}\langle i_1 \!\rangle
$
parametric in collection $c$ and iterator $i$, whose slices
corresponding to instances ``$c,i\mapsto c_1,i_1$''
and ``$c,i\mapsto c_1,i_2$'' are
$\textsf{\scriptsize createIter}\langle c_1\,i_1 \rangle\ 
\textsf{\scriptsize next}\langle i_1 \rangle\ 
\textsf{\scriptsize updateColl}\langle c_1 \rangle\ 
\textsf{\scriptsize next}\langle i_1 \rangle
$
and, respectively,
$\textsf{\scriptsize createIter}\langle c_1\,i_2 \rangle\ 
\textsf{\scriptsize updateColl}\langle c_1 \rangle$.
Several approaches have been proposed to specify and dynamically
analyze parametric properties, but they have limitations: some in the
specification formalism, others in the type of trace they support.
Not unexpectedly, the existing approaches share common notions, intuitions,
and even techniques and algorithms, suggesting that a fundamental study
and understanding of parametric trace analysis is necessary.

This foundational paper aims at giving a semantics-based solution to
parametric trace analysis that is unrestricted by the type of
parametric property or trace that can be analyzed.  Our approach is
based on a rigorous understanding of {\em what} a parametric
trace/property/monitor is and {\em how} it relates to its
non-parametric counter-part.  A general-purpose parametric trace
slicing technique is introduced, which takes each event in the
parametric trace and dispatches it to its corresponding trace
slices.  This parametric trace slicing technique can be used in
combination with any conventional, non-parametric trace analysis
technique, by applying the later on each trace slice.  As an instance,
a parametric property monitoring technique is then presented, which
processes each trace slice online.  Thanks to the generality of
parametric trace slicing, the parametric property monitoring technique
reduces to encapsulating and indexing unrestricted and well-understood
non-parametric property monitors (e.g., finite or push-down
automata).

The presented parametric trace slicing and monitoring techniques 
have been implemented and extensively evaluated.  Measurements of
runtime overhead confirm that the generality of the discussed
techniques does not come at a performance expense when compared with
existing parametric trace monitoring systems.
\end{abstract}

\maketitle

\section{Introduction and Motivation}
\label{sec:intro}

\comment{
Computing systems in general and programs in particular can be
regarded as ``generators of execution traces'', that is, as running
devices that yield relevant events; the ``outside environment'' that
perceives the events can be, depending upon the application,
almost anything: from memory modules concerned with fine-grained
location read and write events, to an API concerned with function
call and return events, to soups of concurrent processes concerned
with message send and receive events.
Ultimately, a computing system or program can be identified with the
execution traces that it can produce, since those traces precisely
capture the actual system behavior, as a user or an external observer
of the system sees it.
}

Parametric traces, i.e., traces containing events with parameter
bindings, abound in program executions, because 
they naturally appear whenever abstract parameters (e.g., variable
names) are bound to concrete data (e.g., heap objects) at runtime.
In this section we first discuss some motivating examples and describe
the problem addressed in this paper, then we recall related work and put
the work in this paper in context, and then we explain our contributions
and finally the structure of the paper.

\subsection{Motivating examples and highlights}
\label{sec:motivating-examples}
We here describe three examples of parametric properties, in increasing
difficulty order, and use them to highlight and motivate the semantic results
and the algorithms presented in the rest of the paper.

Typestates~\cite{strom-yemeni-1986-tse} refine the notion of type by stating
not only what operations are allowed by a particular object, but also what
operations are allowed in what contexts.
Typestates are particular parametric properties with only one parameter.
Figure \ref{fig:typestate} shows the typestate description for a property saying that it
is invalid to call the \textsf{\footnotesize next()} method on an iterator object when there
are no more elements in the underlying collection, i.e., when \textsf{\footnotesize hasnext()} returns false,
or when it is unknown if there are more elements in the collection, i.e., \textsf{\footnotesize hasnext()}
is not called.  From the {\it unknown} state, it is always an error to call the \textsf{\footnotesize next()} method
because such an operation could be unsafe.  If \textsf{\footnotesize hasnext()} is called and returns true,
it is safe to call \textsf{\footnotesize next()}, so the typestate enters the {\it more} state.
If, however, the \textsf{\footnotesize hasnext()} method returns false, there are no more elements, and
the typestate enters the {\it none} state.  In the {\it more}
and {\it none} states, calling the \textsf{\footnotesize hasnext()} method provides no new information.
It is safe to call \textsf{\footnotesize next()} from the {\it more} state, but it becomes unknown if
more elements exist, so the typestate reenters the initial {\it unknown} state.
Finally, calling \textsf{\footnotesize next()} from the {\it none} state results in an error.
For simplicity, we here assume that \textsf{\footnotesize next()} is the only means to modify the state
of an iterator; concurrent modifications are discussed in other examples shortly.

\newcommand{\ang}{55}
\newcommand{\loopang}{0}
\newcommand{\shift}{\vspace{-1ex}\hspace{4ex}}
\newcommand{\lastshift}{\vspace{-1ex}\hspace{2ex}}
\newcommand{\loopshift}{\vspace{40ex}\hspace{0ex}}

\begin{figure}
\begin{center}
\scalebox{1}{
\thicklines
\hspace{1pt}
\begin{tikzpicture}[->,>=stealth',shorten >=1pt,node distance=1.2in, auto,
                    semithick]
  \definecolor{grey}{RGB}{192,192,192}
  \tikzstyle{every state}=[fill=grey,draw,text=black, minimum size=15mm]

  \node[state,initial]      (A)              {\footnotesize{\it unknown}};
  \node[state]              (B) [below left of=A] {\footnotesize{\it more}};
  \node[state]              (C) [below right of=A] {\footnotesize{\it none}};
  \node[state,accepting]    (D) [below left of=C]  {\footnotesize{\it error}};

  \path (A) edge [bend right] node {\hspace{-1.5in}\textsf{\footnotesize hasnext()} == true} (B)
        (A) edge node {\hspace{-.05in}\textsf{\footnotesize hasnext()} == false} (C)
        (A) edge  node {\hspace{0in}\textsf{\footnotesize next()}} (D)
        (B) edge [loop left] node {\rotatebox{90}{\textsf{\footnotesize hasnext()}}} (B) 
        (B) edge [bend right] node {\textsf{\footnotesize next()}} (A)
        (C) edge [loop right] node {\rotatebox{-90}{\textsf{\footnotesize hasnext()}}} (C)
        (C) edge node {{\hspace{-.05in}\textsf{\footnotesize next()}}} (D)
;
\end{tikzpicture}
}
\end{center}
\caption{Typestate property describing the correct use of the \textsf{\footnotesize next()} and
\textsf{\footnotesize hasnext()} methods.}
\label{fig:typestate}
\end{figure}

It is straightforward to represent the typestate property in Figure~\ref{fig:typestate},
and all typestate properties, as particular (one-parameter) \emph{parametric} properties.  
Indeed, the behaviors described by the typestate in Figure~\ref{fig:typestate} are
intended to be obeyed by all iterator object instances; that is, we have a property parametric in the iterator.
To make this more precise, let us look at the problem from the perspective of observable program
{\em execution traces}.  A trace can be
regarded as a sequence of {\em events} relevant to the property of interest, in our case calls to
\textsf{\footnotesize next()} or to \textsf{\footnotesize hasnext()}; the latter can be further split into
two categories, one when \textsf{\footnotesize hasnext()} returns true and the other when it returns false.
Since the individuality of each iterator matters, we must regard each event as being {\em parametric}
in the iterator yielding it.  Formally, traces relevant to our typestate property are formed with three
parametric events, namely
\textsf{\footnotesize next$\langle i \rangle$},
\textsf{\footnotesize hasnexttrue$\langle i \rangle$},
and
\textsf{\footnotesize hasnextfalse$\langle i \rangle$}.
A possible trace can be
\textsf{\footnotesize hasnexttrue$\langle i_1 \rangle$}
\textsf{\footnotesize hasnextfalse$\langle i_2 \rangle$}
\textsf{\footnotesize next$\langle i_1 \rangle$}
\textsf{\footnotesize next$\langle i_2 \rangle$}...,
which violates the typestate property for iterator instance $i_2$.
How to obtain execution traces is not our concern here
(several runtime monitoring systems use AspectJ instrumentation).
Our results in this paper are concerned with how to specify properties over parametric traces,
what is their meaning, and how to monitor them.

Let us first briefly discuss our approach to specifying properties over parametric execution traces, that is,
{\em parametric properties}.  To keep it as easy as possible for the user and to leverage our knowledge
on specifying ordinary, non-parametric properties, we build our specification approach on top of {\em any}
formalism for specifying non-parametric properties.  More precisely, all one has to do is to first specify the
property using any conventional formalism as if there were only one possible instance of its parameters,
and then use a special $\Lambda$ quantifier to make it parametric.  For our typestate example,  suppose
that \textsf{\footnotesize typestate} is the finite state machine in Figure~\ref{fig:typestate}, modified by replacing
the method calls on edges with actual events as described above.  Then the desired parametric property
is $\parametric{i}{\textsf{\footnotesize typestate}}$.  The meaning of this parametric property is that whatever was intended for
its non-parametric counterpart, \textsf{\footnotesize typestate}, must hold {\em for each} parameter instance;
that is, the {\em error} state must not be reached for any iterator instance, which is precisely the desired meaning
of this typestate.  Another way to specify the same property is using a regular expression matching all the good behaviors,
each bad prefix for any instance thus signaling a violation:
$$
\parametric{i}{
(
\textsf{\footnotesize hasnexttrue$\langle i \rangle$}^+
\textsf{\footnotesize next$\langle i \rangle$}
\mid
\textsf{\footnotesize hasnextfalse$\langle i \rangle$}^*
)^*
}
$$
Yet another way to specify its non-parametric part is with a linear-temporal logic formula:
$$
\parametric{i}{
\always
(
\textsf{\footnotesize next$\langle i \rangle$}
\implies
\previously
\textsf{\footnotesize hasnexttrue$\langle i \rangle$}
)
}
$$
The above LTL formula says ``it is always ($\always$) the case that each
\textsf{\footnotesize next$\langle i \rangle$} event is preceded ($\previously$) by
a \textsf{\footnotesize hasnexttrue$\langle i \rangle$}.
The LTL formula must hold for any iterator instance.
In general, if $\parametric{X}{P}$ is a parametric property, where $X$ is a set of one or more
parameters, we may call $P$ its corresponding {\em base} or {\em root} or
{\em non-parametric property}.
In this paper we develop a mathematical foundation for specifying such parametric properties
independently of the formalism used for specifying their non-parametric part, define their
precise semantics, provide algorithms for online monitoring of parametric properties, and
finally bring empirical evidence showing that monitoring parametric properties is in fact feasible.

Parametric properties properly \emph{generalize} typestates in two different directions.
First, parametric properties allow more than one parameter, allowing us to specify not only
properties about a given object such as the typestate example above, but also properties
that capture \emph{relationships~between~objects}.  Second, they allow us to specify 
infinite-state root properties using formalisms like context-free grammars (see Sections~\ref{sec:ex-cfg},
\ref{sec:safe-resource-safe-client} and \ref{sec:success-ratio}).

Let us now consider a two-parameter property.
Suppose that one is interested in analyzing collections and
iterators in Java.  Then execution traces of interest may contain
events $\textsf{\footnotesize createIter}\langle c\,i\rangle$ (iterator $i$ is
created for collections $c$), $\textsf{\footnotesize updateColl}\langle c\rangle$
($c$ is modified), and $\textsf{\footnotesize next}\langle i\rangle$ ($i$ is
accessed using its next element method), instantiated for particular
collection and iterator instances.  Most properties of parametric traces
are also parametric; for our example, a property may be ``collections are not
allowed to change while accessed through iterators'', which is
parametric in a collection {\em and} an iterator.  The
parametric property above expressed as a regular expression
(here matches mean violations) can be
$$
\parametric{c,i}{\ \textsf{\footnotesize createIter}\langle c\,i\rangle\ 
                   \textsf{\footnotesize next}\langle i\rangle^*\ 
                   \textsf{\footnotesize updateColl}\langle c\rangle^+\ 
                   \textsf{\footnotesize next}\langle i\rangle}
$$
From here on, when we know the number and types of parameters of each event, we omit writing
them in parametric properties, because they are redundant; for example, we write
$$
\parametric{c,i}{\ \textsf{\footnotesize createIter}\ 
                   \textsf{\footnotesize next}^*\ 
                   \textsf{\footnotesize updateColl}^+\ 
                   \textsf{\footnotesize next}}
$$

Parametric properties, unfortunately, are
very hard to formally verify and validate against real systems, mainly
because of their dynamic nature and potentially huge or even unlimited
number of parameter bindings.  
Let us extend the above example:
in Java, one may create a collection from a map and use the
collection's iterator to operate on the map's elements.
A similar safety property is: ``maps are not allowed to change while
accessed indirectly through iterators''.
Its violation pattern is:
$$
\parametric{m,c,i}{ 
\begin{array}[t]{l}
\textsf{\footnotesize createColl}\ 
(\textsf{\footnotesize updateMap} \mid \textsf{\footnotesize updateColl})^*\
\textsf{\footnotesize createIter}\
\textsf{\footnotesize next}^*
(\textsf{\footnotesize updateMap} \mid \textsf{\footnotesize updateColl})^+\ 
\textsf{\footnotesize next}
\end{array}
}
$$
with two new parametric events $\textsf{\footnotesize createColl}\langle m\, c\rangle$
(collection $c$ is created from map $m$) and $\textsf{\footnotesize updateMap}\langle m \rangle$ ($m$ is updated).
All the events used in this property provide only partial parameter bindings ($\textsf{\footnotesize createColl}$ binds
only $m$ and $c$, etc.), and parameter bindings carried by different events may be combined into
larger bindings; e.g., $\textsf{\footnotesize createColl}\langle m_1\, c_1\rangle$
can be combined with $\textsf{\footnotesize createIter}\langle c_1\, i_1\rangle$ into a full binding
$\langle m_1\, c_1\, i_1\rangle$, and also with $\textsf{\footnotesize createIter}\langle c_1\, i_2\rangle$
into $\langle m_1\, c_1\, i_2\rangle$.  It is highly challenging for a trace analysis technique
to correctly and efficiently maintain, locate and combine trace slices for different parameter
bindings, especially when the trace is long and the number of parameter bindings is large.

This paper addresses the problem of parametric trace analysis from a
foundational, semantic perspective: 
\begin{quote}\it
Given a parametric trace $\tau$ and a
parametric property $\parametric{X}{P}$, what does it mean for $\tau$
to be a good or a bad trace for $\parametric{X}{P}$?  How can we show
it?  How can we leverage, to the parametric case, our knowledge and
techniques to analyze conventional, non-parametric traces against 
conventional, non-parametric properties?
\end{quote}
In this paper we first formulate and then rigorously answer and empirically
validate our answer to these questions, in the context of runtime verification.
In doing so, a technique for trace slicing is also presented and shown
correct, which we regard as one of the central results in parametric
trace analysis.  In short, our overall approach to monitor a parametric property 
$\parametric{X}{P}$ is to observe the parametric trace as it is being generated
by the running system, slice it online with respect to the used parameter instances,
and then send each slice piece-wise to a non-parametric monitor corresponding
to the base property $P$; this way, multiple monitor instances for $P$ can and
typically do coexist, one for each trace slice.

\comment{
Computing systems in general and programs in particular can be
regarded as ``generators of execution traces'', that is, as running
devices that yield relevant events; the ``outside environment'' that
perceives the yielded events can be, depending upon the application,
almost anything: from memory modules concerned with fine-grained
events like reads and writes to various locations, to soups of
concurrent processes concerned with events like send and receive of
messages, to the outer universe concerned with events like emitting or
receiving light/waves of various frequencies, etc.  An execution trace
can also contain information about the interaction of the system with
its environment\footnote{Interaction with environment is not relevant
  for this paper.}.
One may argue that, ultimately, a computing system or program can be
identified with the execution traces that it can produce, because
those traces precisely capture the actual system behavior, as a user
or an external observer of the system sees it.
%: depending upon the purpose of
%the observation, one may increase or decrease the granularity and
%frequency of the relevant events.

There are many tightly connected components in a computing system,
including, for example, one or more CPUs, one or more memory modules,
storage devices, operating systems and device drivers, programming
languages with their semantics, programs written using these
languages, communication channels, synchronization resources, and so
on.  Each of these can have an impact on the execution of the
computing system, which is a, if not {\em the}, major source of
complexity in the verification and validation (V\&V) of such systems.
%
%V\&V efforts are concerned with producing guarantees about the
%behavior of computing systems, that is, about their execution traces.
Traditional formal V\&V methods \cite{clarke96formal} act early in
the process, before system deployment, and attempt to provide
guarantees that, after system deployment, the execution traces will
indeed behave as expected.  For example, one may use theorem proving
on the source code of the program under consideration via a
formalization of the programming language semantics assumed to be
faithful to the actual language implementation, or may use
model-checking via exhaustive state-space exploration of a model
assumed to be faithful to the program under consideration.
Such V\&V approaches make strong assumptions about other components,
such as correct encoding of the language semantics, no hardware or OS
errors, etc.

In this paper we follow a less traditional V\&V approach; our
focus is directly on the ``final product'' of a computing system, namely
its execution traces, instead of on abstract or semantic models of the
system that can produce its execution traces.  More precisely, we
propose techniques for processing and analyzing \textit{parametric}
execution traces, as generated by a system or a model of it, with
respect to \textit{parametric} properties.  We explain the meaning of
``parametric'' in this context shortly.  Before that, we mention that
observing a running system can be a highly non-trivial problem,
because, like in Heisenberg's uncertainty principle, ``observation means
interaction''; indeed, instrumenting a system to emit events to an
observer can significantly change the behavior of the observed system.
In this paper, however, we are not concerned with how a system is
observed, not even with whether relevant events are explicitly
generated or implicitly inferred.  Our techniques discussed in this
paper take as input an execution trace given incrementally, event by
event (e.g., as events are being generated by the system).  To keep
our techniques generic, we make no assumptions about how the execution
traces are generated and passed to our algorithms; we believe that
each application using our algorithms can find its own way of doing
that.  For example, in our Java implementation of the discussed
parametric property monitoring algorithm in Section
\ref{sec:implementation}, we preferred to use AspectJ for
instrumentation to ``inline'' the monitor at places in the program
where the relevant parametric events would be generated.

\idea{the rest of the intro needs a lot of work
;
in addition bla, bla, this paper gives for the first time formal
definitions for parametric properties and parametric monitors
;
  Formalizing properties as trace classifiers, the
techniques in this paper work with any type of parametric properties
and traces.
;
collections $c$, yet an execution trace of that program contains,
flattened together, events referring to each iterator, collection, or
combination.
;
 parametric in data
;
execution traces are parametric; many systems today have parametric capabilities
;
current solutions: limited for particular logics or for particular
execution traces; super-logic solution slow.
;
here: first parametric trace algorithm that has no limitations; also
parametric trace slicing, allowing for other analyses, such as
offline analysis or mining
;
implemented: also efficient
;
example, in Java, one may want to state the property ``collections are
not allowed to change while accessed through iterators'', yet Java
program traces may show millions of events referring to collections
(e.g., ``update $c$''), iterators (e.g., ``$i$.next''), or
combinations (e.g., ``create  $i$ for $c$'').  
}

1) argue for computing systems as execution traces

2) say that one can use various traditional V\&V approaches on
execution traces, but we follow a different approach in this paper,
somewhat simpler.

3) we assume that the execution trace is somehow generated and given
to us; our purpose is to analyze it.  this can also be an integral
part of a traditional verification environment, such as symbolic
execution or model checker

4) there are many approaches already doing that. describe the various
rv approaches and mining approaches

5) parametric properties; justify them; say that the approaches above
hardwire the handling of the parameters with the monitor synthesis
algorithm, making it look as an integral part of the logical formalism

6) the novelty of this paper is that it disconnects the two; somehow
similar to JavaMOP's original intention, but it overcomes JavaMOP's
limitation.

 and analyze it

More precisely, we
assume that the execution trace is somehow generated and given
to us; 
.  this can also be an integral
part of a traditional verification environment, such as symbolic
execution or model checker

Other V\&V approaches act later in the process, monitoring the
execution trace as is being produced by the computing system and
reacting if the system mis-behaves.  The former V\&V approaches make
strong assumptions about other components (no hardware or OS errors,
for example), while the latter need to state properties in such a way
that signs leading to catastrophic failures are caught early enough to
recover.

Specification formalisms 

\idea{start from Feng}
\idea{end from Feng}

\newpage
}

The main conceptual limitation of our approach is that the parametrization of
properties is only allowed at the top-level, that is, the base property $P$ in
the parametric property $\parametric{X}{P}$ cannot have any $\Lambda$
binders.  In other words, we do not consider nested parameters.  To allow
nested parameters one needs a syntax for
properties, so that one can incorporate the syntax for parameters within the
syntax for properties.  However, one of our major goals is to be
formalism-independent, which means that, by the nature of the problem that
we are attempting to solve, we can only parameterize properties at the top.
Many runtime verification approaches deliberately accept the same limitation,
as discussed below, because arbitrarily nested parameters are harder to
understand and turn out to generate higher runtime overhead in the systems
supporting them.

Our concrete contributions are explained after the related work.

\subsection{Related Work}

We here discuss several major approaches that have been proposed so
far to specify and monitor parametric properties, and relate them to
our subsequent results in this paper.  It is worth mentioning upfront that,
except for the MOP approach \cite{meredith-jin-griffith-chen-rosu-2010-jsttt}
which motivated and inspired the work in
this paper, the existing approaches do {\em not} follow the general
methodology proposed by our approach in this paper.  More precisely, they
employ a monolithic monitoring approach, where one monitor is associated
to each parametric property; the monitor receives and handles each
parametric event in a formalism-specific way.  In contrast, our approach
is to generate multiple local monitors, each keeping track of one
parameter instance.  Our approach leads not only to a lower runtime
overhead as empirically shown in Section~\ref{sec:implementation}, but
it also allows us to separate concerns (i.e., the parameter handling from
the specification formalism and monitor synthesis for the basic property)
and thus potentially enabling a broader spectrum of optimizations
that work for various different property specification formalisms and
corresponding monitors.

Tracematches
\cite{tracematches-oopsla,oopsla07abc} is an
extension of AspectJ \cite{aspectj} supporting parametric regular
patterns; when patterns are matched during the execution, user-defined
advice can be triggered.  J-LO \cite{jlo} is a variation of
Tracematches that supports a first-order extension variant of linear
temporal logic (LTL) that supports data parametrization by means of quantifiers
\cite{stolz-2006-rv}; the user-provided actions are executed when the
LTL properties are violated.  Also based on AspectJ, \cite{lsc-monitor}
proposes Live Sequence Charts (LSC) \cite{damm01lscs} as an
inter-object scenario-based specification formalism; LSC is implicitly
parametric, requiring parameter bindings at runtime.
Tracematches, J-LO and LSC \cite{lsc-monitor} support a limited number
of parameters, and each has its own approach to handle
parameters, specific to its particular specification
formalism.  Our semantics-based approach in this paper is generic in 
the specification formalism and admits, in theory, a potentially unlimited
number of parameters.  In spite of the generality of our theoretical results,
we chose in our current implementations (see Section \ref{sec:implementation})
to also support only a bounded number of parameters, like in the aforementioned
approaches.

JavaMOP~\cite{meredith-jin-griffith-chen-rosu-2010-jsttt,chen-rosu-2007-oopsla}
({\tt http://javamop.org}) is a parametric specification and monitoring system
that is generic in the specification formalism for base properties, each formalism
being included as a logic plugin.  Monitoring code is generated from parametric
specifications and woven within the original Java program, also using AspectJ,
but using a different approach that allows it to encapsulate monitors for
non-parametric properties as blackboxes.  Until recently, JavaMOP's genericity
came at a price: it could only monitor execution traces in which the first event
in each slice instantiated all the property parameters.  This limitation
prevented the JavaMOP system presented in \cite{chen-rosu-2007-oopsla}
from monitoring some basic parametric properties, including ones discussed
in this paper.  Our novel approach to parametric trace slicing and monitoring
discussed in this paper does not have that limitation anymore.  The parametric
slicing and monitoring technique discussed in this paper has been incorporated
both in JavaMOP \cite{meredith-jin-griffith-chen-rosu-2010-jsttt} and in its
commercial-grade successor RV \cite{meredith-rosu-2010-rv}, together with several optimizations that we
do not discuss here; Section~\ref{sec:implementation} discusses
experiments done with both these systems, as well as with Tracematches,
for comparison, because Tracematches has proven to be the most efficient
runtime verification system besides JavaMOP.

Program Query Language (PQL) \cite{pql-oopsla} allows the
specification and monitoring of parametric context-free grammar (CFG)
patterns.  Unlike the approaches above that only allow a bounded
number of property parameters, PQL can associate parameters 
with sub-patterns that can be recursively matched at runtime, yielding
a potentially unbounded number of parameters.  PQL's approach to
parametric monitoring is specific to its particular CFG-based
specification formalism.  Also, PQL's design does not support
arbitrary execution traces.  For example, field updates and method
begins are not observable; to circumvent the latter, PQL allows
for observing traces local to method calls.  Like PQL, our technique
also allows an unlimited number of parameters (but as mentioned above,
our current implementation supports only a bounded number of
parameters).  Unlike PQL, our semantics and techniques are not limited
to particular events, and are generic in the property specification
formalism; CFGs are just one such possible formalism.

Eagle \cite{DBLP:conf/vmcai/BarringerGHS04}, RuleR
\cite{DBLP:conf/rv/BarringerRH07},  and Program Trace Query Language
(PTQL) \cite{ptql-oopsla} are very general trace specification and
monitoring systems, whose specification formalisms allow complex
properties with parameter bindings anywhere in the specification (not
only at the beginning, like we do).  Eagle and RuleR are based on
fixed-point logics and rewrite rules, while PTQL is based on SQL
relational queries.  These systems tackle a different aspect of
generality than we do: they attempt to define general specification
formalisms supporting data binding among many other features, while we
attempt to define a general parameterization approach that is
logic-independent.
As discussed in
\cite{oopsla07abc,meredith-jin-chen-rosu-2008-ase,chen-rosu-2007-oopsla}
(Eagle and PQL cases), the very general specification formalisms
tend to be slower; this is not surprising, as the more general the
formalism the less the potential for optimizations.  Our techniques can
be used as an optimization for certain common types
of properties expressible in these systems: use any of these to
specify the base property $P$, then use our generic techniques to
analyze $\parametric{X}{P}$.

%  The ability to devise efficient
%analysis techniques for such very general specification formalisms key
%to their practical feasibility.  For example, the monitoring
%algorithms proposed for Eagle \cite{DBLP:conf/vmcai/BarringerGHS04} and
%RuleR \cite{DBLP:conf/rv/BarringerRH07} are based on tree-rewriting
%with embedded parameter instances, making it hard to find a general
%solution for monitor garbage-collection (which is crucial for
%performance).

\subsection{Contributions}
Besides proposing a formal semantics to parametric traces,
properties, and monitoring, we make two theoretical contributions
and discuss implementations that validate them empirically:
\begin{enumerate}[(1)]
\item Our first result is a general-purpose online parametric trace
slicing algorithm (algorithm $\A$ in Section \ref{sec:trace-slicing})
together with its proof of correctness (Theorem \ref{thm:trace-slicing}),
positively answering the following question: {\em given a parametric
execution trace, can one effectively find the slices corresponding to
each parameter instance without having to traverse the trace for each
instance?}
\item Our second result, building upon the slicing algorithm,
is an online monitoring technique (algorithms $\B$ and $\C$ in Section
\ref{sec:trace-monitoring}) together with its proof of correctness
(Theorems \ref{thm:trace-monitoring} and \ref{thm:implementation}),
which positively answers the following question: {\em is it possible to
monitor arbitrary parametric properties $\parametric{X}{P}$ against
parametric traces, provided that the root property $P$ is monitorable
using conventional monitors?}
\item Finally, our implementation of these techniques in the
JavaMOP and RV systems positively answers the following question:
{\em can we implement general purpose and unrestricted parametric
property monitoring tools which are comparable in performance with
or even outperform existing parametric property monitoring tools on
the restricted types of properties and/or traces that the latter support?}
\end{enumerate}

\noindent Preliminary results reported in this paper have been published in
a less polished form as a technical report in summer 2008
\cite{rosu-chen-2008-tr-a}.  Then a shorter, conference paper was
presented at TACAS 2009 in York, U.K. \cite{chen-rosu-2009-tacas}.
This extended paper differs from \cite{chen-rosu-2009-tacas} as follows:
\begin{enumerate}[(1)]
\item It defines all the mathematical infrastructure needed to prove
the results claimed in \cite{chen-rosu-2009-tacas}.  For example,
Section \ref{sec:math} is new.
\item It expands the results in \cite{chen-rosu-2009-tacas} and
includes all their proofs, as well as additional results needed for
those proofs.  For example, Section \ref{sec:parametric-trace} is new.
\item It discusses more examples of parametric properties.  For
example, Section \ref{sec:examples} is new.
\item The implementation section in \cite{chen-rosu-2009-tacas}
presented an incipient implementation of our technique in a prototype
system called PMon there (from \underline{P}arametric
\underline{Mon}itoring).  In the meanwhile, we have implemented the
technique described in this paper as an integral part of the runtime
verification systems JavaMOP ({\tt http://javamop.org}) and RV \cite{meredith-rosu-2010-rv}.  The
implementation section (Section\ref{sec:implementation}) now refers
to these systems.
\end{enumerate}

\subsection{Paper Structure}
 Section \ref{sec:examples} discusses examples of parametric properties.
Section \ref{sec:math} provides the mathematical
background needed to formalize the concepts introduced later in the
paper.
Section \ref{sec:parametric-trace-property} formalizes parametric
events, traces and properties, and defines trace slicing. 
Section \ref{sec:parametric-trace} establishes a tight connection
between the parameter instances in a trace and the parameter instance
used for slicing.  Sections \ref{sec:trace-slicing}, \ref{sec:monitors} and
\ref{sec:trace-monitoring} discuss our main techniques for parametric
trace slicing and monitoring, and prove them correct.  Section
\ref{sec:implementation} discusses implementations of these
techniques in two related systems, JavaMOP and RV.
Section~\ref{sec:conclusion} concludes and proposes future work.

\section{Examples of Parametric Properties}
\label{sec:examples}

In this section we discuss several examples of parametric properties.
Our purpose here is twofold.  On the one hand we give the
reader additional intuition and motivation for the subsequent
semantics and algorithms, and, on the other hand, we justify the
generality of our approach with respect to the employed specification
formalism for trace properties.  The discussed examples of parametric
properties are defined using various trace specification formalisms, some
with more than one parameter and some with more than validating and/or
violating categories of behaviors.  For each of the examples, we give hints
on how our subsequent techniques in Sections~\ref{sec:trace-slicing} and
\ref{sec:trace-monitoring} work.  In order to explain the examples in this
section we also informally introduce necessary notions, such as events and
traces (both parametric and non-parametric); all these notions will be
formally defined in Section~\ref{sec:parametric-trace-property}.

For each example, we also discuss which of the existing runtime
verification systems can support it.  Note that
JavaMOP~\cite{meredith-jin-griffith-chen-rosu-2010-jsttt}
and its commercial-grade successor RV \cite{meredith-rosu-2010-rv}, which build upon the
trace slicing and monitoring techniques presented in this paper,
are the only runtime verification systems that support all the
parametric properties discussed below.

\subsection{Releasing acquired resources}
\label{sec:acquire-release}

Consider a certain type of resource (e.g., synchronization objects) that can
be acquired and released by a given procedure, and suppose that we want
the resources of this type to always be explicitly released by the procedure
whenever acquired and only then.
This example will be broken in subparts and used as a running example in
Section~\ref{sec:parametric-trace-property} to introduce our main notions
and notations.

Let us first consider the non-parametric case in which
we have only one resource.  Supposing that the four events of interest, i.e.,
the begin/end of the procedure and the acquire/release of the resource, are
${\cal E} = \{\textsf{\footnotesize begin},\textsf{\footnotesize end},\textsf{\footnotesize acquire},\textsf{\footnotesize release}\}$,
then the following regular pattern $P$ captures the desired behavior requirements:
$$P = (\textsf{\footnotesize begin} (\epsilon\ |\ 
     (\textsf{\footnotesize acquire} 
       (\textsf{\footnotesize acquire} \ |\  \textsf{\footnotesize release})^* 
     \textsf{\footnotesize release})) \textsf{\footnotesize end})^*$$
The above regular pattern states that the procedure can take place
multiple times and, if the resource is acquired then it is released by the end
of the procedure ($\epsilon$ is the empty word).  For simplicity, we here assume
that the procedure is not recursive and that the resource can be acquired and
released multiple times, with the effect of acquiring and respectively releasing it
precisely once; Section \ref{sec:ex-cfg} shows how to use a context-free pattern
to specify possibly recursive procedures with matched acquire/release events within
each procedure invocation.  One matching execution trace for this property is, e.g.,
$\textsf{\footnotesize begin}\ \textsf{\footnotesize acquire}\ \textsf{\footnotesize acquire}\
\textsf{\footnotesize release}\ \textsf{\footnotesize end}\ \textsf{\footnotesize begin}\ \textsf{\footnotesize end}$.

Let us now consider the parametric case in which we may have more than one
resource and we want each of them to obey the requirements specified above.
Now the events \textsf{\footnotesize acquire} and \textsf{\footnotesize release} are parametric in the
resource being acquired or released, that is, they have the form
$\textsf{\footnotesize acquire}\langle r_1\rangle$,
$\textsf{\footnotesize release}\langle r_2\rangle$, etc.
The begin/end events take no parameters, so we write them
$\textsf{\footnotesize begin}\langle \rangle$ and
$\textsf{\footnotesize end}\langle \rangle$.
A parametric trace $\tau$ for our running example can be the following:
$$
\begin{array}{@{}r@{\ }c@{\ }l@{}}
\tau & = &
\textsf{\footnotesize begin}\langle\rangle\,
\textsf{\footnotesize acquire}\langle r_1\rangle\,
\textsf{\footnotesize acquire}\langle r_2\rangle\, 
\textsf{\footnotesize acquire}\langle r_1\rangle\, 
\textsf{\footnotesize release}\langle r_1\rangle\,
\textsf{\footnotesize end}\langle\rangle\,
\textsf{\footnotesize begin}\langle\rangle\, 
\textsf{\footnotesize acquire}\langle r_2\rangle\,
\textsf{\footnotesize release}\langle r_2\rangle\, 
\textsf{\footnotesize end}\langle\rangle
\end{array}
$$
This trace involves two resources, $r_1$ and $r_2$, and it really
consists of \textit{two trace slices}  merged together, one for each resource:
$$
\begin{array}{l}
\langle r_1\rangle: \ \ 
\textsf{\footnotesize begin}\ 
\textsf{\footnotesize acquire}\
\textsf{\footnotesize acquire}\ 
\textsf{\footnotesize release}\
\textsf{\footnotesize end}\
\textsf{\footnotesize begin}\ 
\textsf{\footnotesize end}
\\
\langle r_2\rangle: \ \ 
\textsf{\footnotesize begin}\ 
\textsf{\footnotesize acquire}\ 
\textsf{\footnotesize end}\
\textsf{\footnotesize begin}\
\textsf{\footnotesize acquire}\
\textsf{\footnotesize release}\ 
\textsf{\footnotesize end}
\end{array}
$$
The begin and end events belong to both trace
slices.  Since we know the parameter instance for
each trace slice and we know the types of parameters for each event,
to avoid clutter we do not mention the redundant parameter bindings
of events in trace slices.

Our trace slicing algorithm discussed in Section~\ref{sec:trace-slicing}
processes the parametric trace only once, traversing it from the
first parametric event to the last, incrementally calculating
a collection of meaningful trace slices so that it can quickly identify and
report the slice corresponding to any parameter instance when requested.

Note that the $\langle r_1\rangle$ trace slice matches the
specification $P$ above, while the $\langle r_2\rangle$ trace slice does not.
To distinguish parametric properties referring to multiple trace slices from 
ordinary properties, we explicitly list the parameters using a special $\Lambda$
binder.  For example, our property above parametric in the resource $r$ is
$\parametric{r}{P}$, or
$$\parametric{r}{(\textsf{\footnotesize begin}\ (\epsilon \ |\  
     (\textsf{\footnotesize acquire}\ 
       (\textsf{\footnotesize acquire} \ |\ \textsf{\footnotesize release})^*\ 
     \textsf{\footnotesize release}))\ \textsf{\footnotesize end})^*}$$
Both Tracematches~\cite{tracematches-oopsla,oopsla07abc}
and JavaMOP~\cite{meredith-jin-griffith-chen-rosu-2010-jsttt}
can specify/monitor such parametric regular properties,
the latter using its extended-regular expression (ERE) plugin.

For the sake of a terminology, $P$ is called a non-parametric, or a root,
or a basic property, in contrast to $\parametric{r}{P}$, which is called a
parametric property.
As detailed in Section~\ref{sec:parametric-trace-property},
parametric properties are functions taking a parametric trace (e.g., $\tau$)
and a parameter instance (e.g., $r\mapsto r_1$ or $r \mapsto r_2$) into a verdict
category for the basic property $P$ (e.g., \textsf{\footnotesize match} or \textsf{\footnotesize fail}).
In our case, the semantics of our parametric property $\parametric{r}{P}$
takes parametric trace $\tau$ and parameter instance $r\mapsto r_1$ to
\textsf{\footnotesize match}, and takes $\tau$ and $r\mapsto r_2$ to \textsf{\footnotesize fail}, that is,
$$
\begin{array}{l}
(\parametric{r}{P})(\tau)(r\mapsto r_1) = \textsf{\footnotesize match} \\
(\parametric{r}{P})(\tau)(r\mapsto r_2) = \textsf{\footnotesize fail}
\end{array}
$$
Our parametric monitoring algorithm in Section~\ref{sec:trace-monitoring}
reports a \textsf{\footnotesize fail} for instance $r \mapsto r_2$ precisely when the first
$\textsf{\footnotesize end}$ event is encountered.

We would like to make two observations at this stage.  First, as we already mentioned,
we only parameterize a property at the top, that is, the $\Lambda$ binder cannot be used
inside the basic property.  Indeed, since we do not enforce any particular syntax for
basic properties, it is not clear how to mix the $\Lambda$ binder with
the inexistent property constructs.  Second, one should not confuse our parameters
with universally quantified variables.  While in our example above $\Lambda$ may
feel like a universal quantifier, note that one may prefer to specify the same
parametric property in a more negative fashion, for example to specify the bad
behaviors instead of the positive ones.  Relying on the fact that the
\textsf{\footnotesize begin} and \textsf{\footnotesize end} events must be correctly matched, one can only
state the bad patterns, which are a \textsf{\footnotesize begin} followed by a \textsf{\footnotesize release} and
an \textsf{\footnotesize acquire} followed by an \textsf{\footnotesize end}:
$$
\parametric{r}{
({\cal E}^*(\textsf{\footnotesize begin}\,\textsf{\footnotesize release}\mid\textsf{\footnotesize acquire}\,\textsf{\footnotesize end})\,{\cal E}^*)}
$$
The right way to regard a parametric property is as one indexed by all possible
instances of the parameters, each instance having its own interpretation of the trace
(only caring of the events relevant to it), which is orthogonal to the other instances'
interpretations.

\subsection{Authenticate before use}
\label{sec:authenticate}
Consider a server authenticating and using keys, say
$k_1$, $k_2$, $k_3$, etc., whose execution traces contain events
$\textsf{\footnotesize authenticate}\langle k_1 \rangle$,
$\textsf{\footnotesize use}\langle k_2 \rangle$, etc.
A possible trace of such a system can be
$$
\tau = \textsf{\footnotesize authenticate}\langle k_1 \rangle\ 
\textsf{\footnotesize authenticate}\langle k_3 \rangle\ 
\textsf{\footnotesize use}\langle k_3 \rangle\  
\textsf{\footnotesize use}\langle k_2 \rangle\ 
\textsf{\footnotesize authenticate}\langle k_2 \rangle\
\textsf{\footnotesize use}\langle k_1 \rangle\
\textsf{\footnotesize use}\langle k_2 \rangle\ 
\textsf{\footnotesize use}\langle k_3 \rangle
$$
A parametric property for such a system can be ``each key
must be authenticated before use'', which, using linear temporal logic
(LTL) as a specification formalism for the corresponding base property, can
be expressed as
$$\parametric{k}{\always(\textsf{\footnotesize use} \rightarrow \eventuallyPast
\textsf{\footnotesize authenticate})}$$
Such parametric LTL properties can be expressed in both J-LO
\cite{jlo} and JavaMOP \cite{meredith-jin-griffith-chen-rosu-2010-jsttt,chen-rosu-2007-oopsla}
(the later using its LTL logic plugin).  For the trace above, the trace slice corresponding
to $k_3$ is $\textsf{\footnotesize authenticate}\ \textsf{\footnotesize use}\ \textsf{\footnotesize use}$
corresponding to the parametric subtrace
$\textsf{\footnotesize authenticate}\langle k_3 \rangle\ 
\textsf{\footnotesize use}\langle k_3 \rangle\
\textsf{\footnotesize use}\langle k_3 \rangle$ of events relevant to $k_3$ in
$\tau$, but keeping only the base events; also, the trace slice
corresponding to $k_2$ is $\textsf{\footnotesize use}\ \textsf{\footnotesize authenticate}\ \textsf{\footnotesize use}$.
Our trace slicing algorithmin Section \ref{sec:trace-slicing}
can detect these slices.  Moreover, with the finite trace LTL semantics
in \cite{rosu-havelund-2005-jase},
$$
\begin{array}{l}
{(\parametric{k}{\always(\textsf{\footnotesize use} \rightarrow \eventuallyPast
\textsf{\footnotesize authenticate})})(\tau)(k \mapsto k_3)}=\textsf{\footnotesize true} \\
{(\parametric{k}{\always(\textsf{\footnotesize use} \rightarrow \eventuallyPast
\textsf{\footnotesize authenticate})})(\tau)(k \mapsto k_2)=\textsf{\footnotesize false}}
\end{array}
$$
Our parametric monitoring algorithm in Section
\ref{sec:trace-monitoring} reports a violation for
instance $k \mapsto k_2$ precisely when the first
$\textsf{\footnotesize use}\langle k_2\rangle$ is encountered.

\subsection{Safe iterators}
\label{ex:safe-iterators}
Consider the following property for iterators created over vectors:
when an iterator is created for a vector, one is not allowed to modify
the vector while its elements are traversed using the iterator.  The JVM
usually throws a runtime exception when this occurs, but the exception
is not guaranteed in a multi-threaded environment.
Supposing that parametric event
$\textsf{\footnotesize create}\langle v\,i\rangle$ is generated when iterator $i$ is
created for vector $v$, $\textsf{\footnotesize update}\langle v\rangle$ is generated
when $v$ is modified, and $\textsf{\footnotesize next}\langle i\rangle$ is
generated when $i$ is accessed using its ``next element'' interface,
then one can write it as the parametric regular property
$$
\parametric{v,i}{\textsf{\footnotesize create}\ \textsf{\footnotesize next}^*\ \textsf{\footnotesize update}^+\ \textsf{\footnotesize next}}.
$$
Such parametric regular expression properties can be expressed in both
Tracematches \cite{tracematches-oopsla} and JavaMOP
\cite{meredith-jin-griffith-chen-rosu-2010-jsttt,chen-rosu-2007-oopsla}
(the latter using its ERE plugin).
We here assumed that the matching of the regular expression corresponds to
violation of the base property.  Thus, the parametric
property is violated by a given trace and a given parameter instance
whenever the regular pattern above is matched by the corresponding
trace slice.  For example,
if $\tau = \textsf{\footnotesize create}\langle v_1\,i_1 \rangle\ 
\textsf{\footnotesize next}\langle i_1 \rangle\ 
\textsf{\footnotesize create}\langle v_1\,i_2 \rangle\ 
\textsf{\footnotesize update}\langle v_1 \rangle\ 
\textsf{\footnotesize next}\langle i_1 \rangle
$ is a parametric trace where two iterators are created for a
vector, then the slice corresponding to $\langle v_1\,i_1\rangle$
is
$\textsf{\footnotesize create}\
\textsf{\footnotesize next}\
\textsf{\footnotesize update}\
\textsf{\footnotesize next}\
$
and the one corresponding to $\langle v_1\,i_2\rangle$ is
$\textsf{\footnotesize create}\
\textsf{\footnotesize update}$,
so $\tau$ violates the parametric property (i.e., matches the
regular pattern above) on instance $\langle{v_1\,i_1}\rangle$, but not on
instance $\langle{v_1\,i_2}\rangle$.  Note that in this example there are more
than one parameters in events, traces and property, namely a vector
and an iterator.  Indeed, the main difficulty of our techniques in
Sections \ref{sec:trace-slicing} and \ref{sec:trace-monitoring} was
precisely to handle general purpose parametric properties with an
arbitrary number of parameters.  The slicing algorithm in
Section \ref{sec:trace-slicing} processes parametric traces and
maintains enough slicing information so that, when asked to produce
slices corresponding to particular parameter instances, e.g.,
to $\langle v_1\,i_2 \rangle$, it can do so without any further analysis of the
trace.  Also, in this case, the monitoring algorithm in Section~\ref{sec:trace-monitoring}
reports a match each time a parameter instance yields a matching trace slice.

\subsection{Correct locking}
\label{sec:ex-cfg}
Consider a custom implementation of synchronization in
which one can acquire and release locks manually (like in Java 5 and later versions).  A
basic property is that each function
releases each lock as many times as it acquires it.  Assuming that
the executing code is always inside some function (like in Java, C,
etc.), that $\textsf{\footnotesize begin}\langle\rangle$ and
$\textsf{\footnotesize end}\langle\rangle$ events are generated
whenever function executions are started and terminated, respectively,
and that $\textsf{\footnotesize acquire}\langle l\rangle$ and
$\textsf{\footnotesize release}\langle l\rangle$ events are generated
whenever lock $l$ is acquired or released, respectively, then one
can specify this safety property using the following parametric
context-free grammar (CFG) pattern:
$$
\parametric{l}{S \rightarrow S\ \textsf{\footnotesize begin}\ S\ \textsf{\footnotesize end} \mid
  S\ \textsf{\footnotesize acquire}\ S\ \textsf{\footnotesize release} \mid \epsilon}
$$
Such parametric CFG properties can be expressed in both PQL~\cite{pql-oopsla} and
JavaMOP~\cite{meredith-jin-griffith-chen-rosu-2010-jsttt,meredith-jin-chen-rosu-2008-ase}
(the later using its CFG plugin).
We here borrow the CFG property semantics of the CFG plugin of
JavaMOP (and also RV \cite{meredith-rosu-2010-rv}) in \cite{meredith-jin-chen-rosu-2008-ase}, that is, this
parametric property is violated by a parametric execution with a
given parameter instance (i.e., concrete lock) whenever the
corresponding trace slice cannot be completed into one accepted by the
grammar language.  While this property can be expressed in JavaMOP and
even monitored in its non-parametric form, the previous implementation
of JavaMOP in \cite{meredith-jin-chen-rosu-2008-ase} cannot monitor it
as a parametric property because its violating traces most likely start
with a property-relevant $\textsf{\footnotesize begin}\langle\rangle$ event, which
does not contain a lock parameter; therefore, the previous limitation
of JavaMOP (allowing only events that instantiate all property's
parameters to create a monitor instance) did not allow us to monitor
this natural CFG property.  To circumvent this limitation,
\cite{meredith-jin-chen-rosu-2008-ase} proposed a different way to
specify this property, in which the violating traces
started with an $\textsf{\footnotesize acquire}\langle l\rangle$ event.
We do not need such artificial encodings anymore in the new version
of JavaMOP, and they were never needed in RV (RV improves the new JavaMOP).

For profiling reasons, one may also want to take notice of
validations, or matches of the property, as well as matches followed
by violation, etc.; one can therefore have different interpretations
of CFG patterns as base properties, classifying traces into various
categories.  What is different in this example, compared to the
previous ones, is that the non-parametric property cannot be
implemented as a finite state machine.  With the CFG monitoring
algorithm proposed in \cite{meredith-jin-chen-rosu-2008-ase} used to
monitor the base property, our parametric monitoring algorithm in
Section \ref{sec:trace-monitoring} reports a violation of this
parametric CFG property as soon as a parameter instance is detected
for which the corresponding trace slice has no future, that is, it
admits no continuation into a trace in the language of the grammar.

\subsection{Safe resource use by safe client}
\label{sec:safe-resource-safe-client}
A client can use a resource only within a given procedure and, when
that happens, both the client and the resource must have been
previously authenticated as part of that procedure.  Assuming the
procedure fixed and given, this is a property over traces with five
types of events: begin and end of the procedure
($\textsf{\footnotesize begin}\langle\rangle$ and
$\textsf{\footnotesize end}\langle\rangle$), authenticate of client
($\textsf{\footnotesize auth-client}\langle c \rangle$) or of resource
($\textsf{\footnotesize auth-resource}\langle r \rangle$), and use of
resource by client ($\textsf{\footnotesize use}\langle r\,c \rangle$).  Using the
past time linear temporal logic with calls and returns (ptCaRet) in
\cite{rosu-chen-ball-2008-rv}, one would write it as follows:
$$
\parametric{r,c}
{
\begin{array}[t]{l}
\textsf{\footnotesize use} \rightarrow ((\overline{\eventuallyPast}\,\textsf{\footnotesize begin})
\ \wedge \, \neg (
  (\neg \textsf{\footnotesize auth-client})\ \overline{\since}\ \textsf{\footnotesize begin}) 
\
%\hspace*{16.6ex}
\wedge \, \neg (
  (\neg \textsf{\footnotesize auth-resource}) \ \overline{\since}\ \textsf{\footnotesize begin}
))
\end{array}
}
$$
%We assumed the following order of operator precedences: $\neg$,
%$\overline{\eventuallyPast}$, $\overline{\since}$, 
%$\wedge$, $\vee$, $\rightarrow$.  
The overlined operators are abstract variants of temporal operators,
in the sense that they are defined on traces that collapse terminated procedure
calls (erase subtraces bounded by matched begin/end events).  For
example, ``$\overline{\eventuallyPast}\,\textsf{\footnotesize begin}$'' holds only
within a procedure call, because all the nested and terminated
procedure calls are abstracted away.  In words, the above says: if one
sees the use of the resource (\textsf{\footnotesize use}) then that must take place
within the procedure and it is not the case that one did not see,
within the main procedure since its latest invocation, the
authentication of the client or the authentication of the resource.

JavaMOP can express this property using its ptCaRet logic plugin
\cite{rosu-chen-ball-2008-rv}.  However, until recently \cite{meredith-jin-griffith-chen-rosu-2010-jsttt},
JavaMOP could again only monitor it in its non-parametric form, because of its
previous limitation allowing only completely parameterized events to
create monitors.  Even though it may appear that this property can
only be violated when a completely parameterized
$\textsf{\footnotesize use}\langle r\,c \rangle$ event is observed, in fact, the
monitor must already exist at that point in the execution and ``know''
whether the client and the resource have authenticated since the begin
of the current procedure; all the other events involved in the
property are incompletely parameterized, so, unfortunately, this
parametric property could not be monitored using the previous JavaMOP
system, but it can be monitored with the new one.

\subsection{Success ratio}
\label{sec:success-ratio}
Consider now parametric traces with events
$\textsf{\footnotesize success}\langle a \rangle$ and
$\textsf{\footnotesize fail}\langle a \rangle$, saying whether a certain action $a$
was successful or not.  For a given action, a meaningful property can
classify its (non-parametric) traces into an infinite number of
categories, each representing a success ratio of the given action,
which is a (rational) number $s/t$ between 0 and 1, where $s$
is the number of success events in the trace and $t$ is the total
number of events in the trace.  Then the corresponding parametric
property over such parametric traces gives a success ratio for each
action.  We can specify such a property in JavaMOP and RV by making use of
monitor variables and event actions
\cite{meredith-jin-griffith-chen-rosu-2010-jsttt}.  Indeed, one can add
two monitor variables, $s$ and $t$, and then increment $t$ in each event
action and increment $s$ only in the event action of the \textsf{\footnotesize success}
event.  The underlying parametric monitoring algorithm keeps separation
between the various $s$ and $t$ monitor variables, one such pair for each distinct action $a$,
guaranteeing that the correct ones will be accessed.

\section{Mathematical Background: Partial Functions, Least Upper Bounds (lubs) and lub Closures}
\label{sec:math}

In this section we first discuss some basic notions of partial
functions and least upper bounds of them, then we introduce least
upper bounds of sets of partial functions and least upper bound
closures of sets of partial functions.  This section is rather
mathematical.  We need these mathematical notions because it turns out
that parameter instances are partial maps from the domain of
parameters to the domain of parameter values.  As shown later,
whenever a new parametric event is observed, it needs to be dispatched
to the interested parts (trace slices or monitors), and those parts
updated accordingly: these informal operations can be rigorously
formalized as existence of least upper bounds and least upper bound
closures over parameter instances, i.e., partial functions.

We recommend the reader who is only interested in our algorithms but
not in the details of our subsequent proofs, to read the first two
definitions and then jump to Section
\ref{sec:parametric-trace-property}, returning to this section for
more mathematical background only when needed.

\subsection{Partial Functions}
\label{sec:partial-functions}

This section discusses partial functions and (least) upper bounds
over sets of partial functions.  The notions and the results discussed
in this section are broadly known, and many of their properties are
folklore.  They can be found in one shape or another in virtually any
book on denotational semantics or domain theory.  Since we need
only a small subset of notions and results on partial functions and
(least) upper bounds in this paper, and since we need to fix a uniform
notation anyway, we prefer to define and prove everything we need
and hereby also make our paper self-contained.

We think of partial functions as ``information carriers'': if a
partial function $\theta$ is defined on an element $x$ of its domain,
then ``$\theta$ carries the information $\theta(x)$ about $x\in X$''.
Some partial functions can carry more information than others; two or
more partial functions can, together, carry compatible information, but
can also carry incompatible information (when two or more of them
disagree on the information they carry for a particular $x\in X$).

\begin{definition}
\label{dfn:partial-functions}
We let $\totalf{X}{V}$ and $\XV$ denote the sets of
\textbf{total} and of \textbf{partial functions} from $X$ to $V$,
respectively.  The domain of $\theta\in\XV$ is the set
$\Dom(\theta)=\{x \in X \mid \theta(x) \textit{ is defined}\}$.
Let $\bot\in\XV$ be the map undefined everywhere, that
is, $\Dom(\bot)=\emptyset$.  If $\theta,\theta'\in\XV$
then:
\begin{enumerate}[(1)]
\item $\theta$ and $\theta'$ are \textbf{are compatible} if and only
if $\theta(x)=\theta'(x)$ for any $x\in\Dom(\theta)\cap\Dom(\theta')$;
\item \textbf{$\theta$ is less informative than $\theta'$},
written $\theta\sqsubseteq\theta'$, if for any $x\in X$, $\theta(x)$
defined implies $\theta'(x)$ also defined and $\theta'(x)=\theta(x)$;
\item $\theta$ is \textbf{strictly less informative} than $\theta'$, written
$\theta\sqsubset\theta'$, when $\theta\sqsubseteq\theta'$ and
$\theta\neq\theta'$.
\end{enumerate}
\end{definition}

The relation of compatibility is reflexive and symmetric, but not
transitive.  When $\theta,\theta'\in\XV$ are compatible, we let
$\theta\sqcup\theta'\in\XV$ denote the partial function whose domain
is $\Dom(\theta)\cup\Dom(\theta')$ and which is defined as $\theta$ or
$\theta'$ in each element in its domain.  The partial function
$\theta\sqcup\theta'$ is called the least upper bound of $\theta$ and
$\theta'$.  We define least upper bounds more generally below.  Also,
note that $\theta\sqsubseteq\theta'$ and, respectively,
$\theta\sqsubset\theta'$, iff $\theta,\theta'$ compatible and
$\Dom(\theta)\subseteq\Dom(\theta')$ and, respectively,
$\Dom(\theta)\subset\Dom(\theta')$.

\begin{definition}
Given $\Theta\subseteq\XV$ and $\theta'\in\XV$,
\begin{enumerate}[(1)]
\item  $\theta'$ is an \textbf{upper bound} of $\Theta$
  iff $\theta\sqsubseteq\theta'$ for any $\theta\in\Theta$; $\Theta$
  \textbf{has upper bounds} iff there is a $\theta'$ which is an
  upper bound of $\Theta$;
\item  $\theta'$ is the \textbf{least upper bound (\textbf{lub})} of
  $\Theta$ iff $\theta'$ is an upper bound of 
  $\Theta$ and $\theta'\sqsubseteq\theta''$ for any other upper bound
  $\theta''$ of $\Theta$;
\item $\theta'$ is the \textbf{maximum (max)} of $\Theta$ iff
  $\theta'\in\Theta$ and $\theta'$ is a lub of $\Theta$.
\end{enumerate}
\end{definition}

A set of partial functions has an upper bound iff the
partial functions in the set are pairwise {compatible}, that is, no two
of them disagree on the value of any particular element in their
domain.
Least upper bounds and maximums may not always exist for any
$\Theta\subseteq\XV$; if a lub or a maximum
for  $\Theta$ exists, then it is, of course, unique, because $\sqsubseteq$ is
a partial order, so antisymmetric.

It is known that $(\XV,\sqsubseteq,\bot$) is a
complete (i.e., any $\sqsubseteq$-chain has a least upper bound)
partial order with bottom (i.e., $\bot$).
%(i.e., $\bot$ is minimal element in
%$\XV$ wrt $\sqsubseteq$).
% \cite{xyz}.
%  A set of partial functions may or may not have a maximum element
%  element wrt $\sqsubseteq$:

\begin{definition}
Given $\Theta\subseteq\XV$, let $\sqcup\Theta$ and
$\max\Theta$ be the lub and the max of $\Theta$,
respectively, when they exist.  When $\Theta$ is finite, one may write
$\theta_1\sqcup\theta_2\sqcup\cdots\sqcup\theta_n$ instead of
$\sqcup\{\theta_1,\theta_2,\dots,\theta_n\}$.
\end{definition}

If $\Theta$ has a maximum, then it also has a lub and
$\sqcup\Theta=\max\Theta$.  Here are several common properties that
we use frequently in Sections~\ref{sec:lubs-of-families} and \ref{sec:lub-closure}
(these sections will present less known results with specific to our particular approach to
parametric slicing and monitoring):

\begin{proposition}
\label{prop:basic-lub}
The following hold ($\theta,\theta_1,\theta_2,\theta_3\in\XV$):
$\bot\sqcup\theta$ exists and $\bot\sqcup\theta=\theta$;
$\theta_1\sqcup\theta_2$ exists iff
  $\theta_2\sqcup\theta_1$ exists, and, if they exist then
$\theta_1\sqcup\theta_2=\theta_2\sqcup\theta_1$;
$\theta_1\sqcup(\theta_2\sqcup\theta_3)$ exists iff
  $(\theta_1\sqcup\theta_2)\sqcup\theta_3$ exists, and if they exist
  then $\theta_1\sqcup(\theta_2\sqcup\theta_3) =
       (\theta_1\sqcup\theta_2)\sqcup\theta_3$.
\end{proposition}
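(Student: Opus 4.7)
The plan is to verify each clause by unfolding the definitions of compatibility and least upper bound from Definition~\ref{dfn:partial-functions}, reducing each claim to an elementary statement about domains and pointwise values. No nontrivial machinery is required; the only care needed is to argue \emph{existence} as well as equality, since the $\sqcup$ operation is partial.

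For the first clause, since $\Dom(\bot)=\emptyset$, the compatibility condition ``$\bot(x)=\theta(x)$ for all $x\in\Dom(\bot)\cap\Dom(\theta)$'' is vacuous, so $\bot$ is compatible with every $\theta\in\XV$. Hence $\bot\sqcup\theta$ exists, has domain $\emptyset\cup\Dom(\theta)=\Dom(\theta)$, and agrees with $\theta$ on this domain, so $\bot\sqcup\theta=\theta$. For the second clause, compatibility is defined symmetrically in $\theta_1$ and $\theta_2$, so $\theta_1\sqcup\theta_2$ exists iff $\theta_2\sqcup\theta_1$ does; and when they exist, both have domain $\Dom(\theta_1)\cup\Dom(\theta_2)$ and, by compatibility, take the same value at every point, so they coincide.

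For associativity, I would first isolate a small intermediate observation: whenever $\theta_2\sqcup\theta_3$ exists, a partial function $\theta_1$ is compatible with $\theta_2\sqcup\theta_3$ if and only if $\theta_1$ is compatible with both $\theta_2$ and $\theta_3$ separately. This is immediate from $\Dom(\theta_2\sqcup\theta_3)=\Dom(\theta_2)\cup\Dom(\theta_3)$ together with the fact that $(\theta_2\sqcup\theta_3)(x)$ equals $\theta_2(x)$ on $\Dom(\theta_2)$ and $\theta_3(x)$ on $\Dom(\theta_3)$. Applying this on both sides, both ``$\theta_1\sqcup(\theta_2\sqcup\theta_3)$ exists'' and ``$(\theta_1\sqcup\theta_2)\sqcup\theta_3$ exists'' reduce to the same condition: the three pairwise compatibilities among $\theta_1,\theta_2,\theta_3$. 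Once existence is granted, the two sides are both partial functions with domain $\Dom(\theta_1)\cup\Dom(\theta_2)\cup\Dom(\theta_3)$ that take the unique compatible value at each point, so they are equal.

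The only mildly delicate point, and the one I would be careful to spell out, is the bidirectional existence claim in clause (3): one must resist the temptation to argue only ``if both sides exist they are equal'' and instead verify that each side's existence entails the three pairwise compatibilities, from which the other side's existence follows. The lemma about compatibility with $\theta_2\sqcup\theta_3$ above is exactly what lets this reduction go through cleanly in both directions.
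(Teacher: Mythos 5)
The paper states Proposition~\ref{prop:basic-lub} without proof, treating these facts as folklore, so there is no author proof to compare against; judged on its own, your argument is correct and complete. Clauses (1) and (2) are exactly the routine domain/pointwise checks you describe, and for clause (3) your intermediate lemma --- that when $\theta_2\sqcup\theta_3$ exists, $\theta_1$ is compatible with $\theta_2\sqcup\theta_3$ if and only if it is compatible with each of $\theta_2$ and $\theta_3$ --- is the right device: it reduces the existence of either association to the three pairwise compatibilities (equivalently, by Proposition~\ref{prop:basic-ub}, to $\{\theta_1,\theta_2,\theta_3\}$ having an upper bound), after which both sides are the partial function with domain $\Dom(\theta_1)\cup\Dom(\theta_2)\cup\Dom(\theta_3)$ taking the unique consistent value at each point. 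You are also right to flag the bidirectional existence claim as the only delicate step; your lemma discharges it in both directions.
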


\begin{proposition}
\label{prop:basic-ub}
Let $\Theta\subseteq\XV$.  Then
\begin{enumerate}[(1)]
\item $\Theta$ has an upper bound iff for any
$\theta_1,\theta_2\in\Theta$ and $x\in X$, if $\theta_1(x)$ and
$\theta_2(x)$ are defined then $\theta_1(x)=\theta_2(x)$;
\item If $\Theta$ has an upper bound then $\sqcup\Theta$ exists and,
for any $x\in X$,\\
$$
(\sqcup\Theta)(x) = \left\{\begin{array}{ll}
\mbox{undefined} & \mbox{if $\theta(x)$ is undefined for any
  $\theta\in\Theta$}\\
\theta(x) & \mbox{if there is a $\theta\in\Theta$ with $\theta(x)$ defined.}
\end{array}
\right.
$$
\end{enumerate}
\end{proposition}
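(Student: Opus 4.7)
The plan is to prove the two parts jointly, since the witness function constructed for the backward direction of (1) will also serve as the lub in (2).

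For the forward direction of (1), I would assume $\Theta$ has an upper bound $\theta'$ and pick arbitrary $\theta_1,\theta_2\in\Theta$ and $x\in X$ with $\theta_1(x)$ and $\theta_2(x)$ both defined. Since $\theta_1\sqsubseteq\theta'$ and $\theta_2\sqsubseteq\theta'$, the definition of $\sqsubseteq$ gives $\theta'(x)=\theta_1(x)$ and $\theta'(x)=\theta_2(x)$, so $\theta_1(x)=\theta_2(x)$. This step is routine and uses only Definition~\ref{dfn:partial-functions}.

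For the converse direction of (1) and all of (2), I would explicitly construct the candidate $\theta^*\in\XV$ using the case split displayed in the statement of (2): on input $x$, set $\theta^*(x)=\theta(x)$ for any $\theta\in\Theta$ with $\theta(x)$ defined, and leave $\theta^*(x)$ undefined otherwise. The hypothesis that any two $\theta_1,\theta_2\in\Theta$ agree wherever both are defined is exactly what is needed to verify that this definition does not depend on the chosen representative $\theta$, so $\theta^*$ is a well-defined partial function. From the definition it is immediate that $\theta\sqsubseteq\theta^*$ for every $\theta\in\Theta$, so $\theta^*$ is an upper bound, which completes the backward direction of (1).

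To finish (2), I would then show $\theta^*$ is the least upper bound. Let $\theta''$ be any other upper bound of $\Theta$ and pick $x\in\Dom(\theta^*)$. By construction there is some $\theta\in\Theta$ with $\theta(x)$ defined and $\theta^*(x)=\theta(x)$; since $\theta\sqsubseteq\theta''$, $\theta''(x)$ is defined and equals $\theta(x)=\theta^*(x)$. Hence $\theta^*\sqsubseteq\theta''$, so $\theta^*=\sqcup\Theta$, and the explicit formula for $(\sqcup\Theta)(x)$ falls out of the construction. The only mild subtlety, which I would flag but not belabor, is that the agreement condition in (1) does not require the two $\theta_i$ to be distinct, nor does it exclude the degenerate case $\Theta=\emptyset$ (in which case $\sqcup\Theta=\bot$ trivially satisfies both clauses); no deeper obstacle arises.
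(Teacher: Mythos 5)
Your proposal is correct and follows essentially the same route as the paper's proof: establish the forward direction of (1) directly from the definition of upper bound, then construct the candidate lub explicitly via the displayed case split, using the agreement hypothesis for well-definedness, and verify that it is an upper bound and least among upper bounds, which settles both the backward direction of (1) and all of (2) at once. Your extra remark about the degenerate case $\Theta=\emptyset$ is harmless and consistent with Proposition~\ref{prop:ub}.
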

\begin{proof}
Since $\Theta$ has an upper bound $\theta'\in\XV$ 
iff $\theta\sqsubseteq\theta'$ for any $\theta\in\Theta$, if
$\theta_1,\theta_2\in\Theta$ and $x\in X$ are such that $\theta_1(x)$ and
$\theta_2(x)$ are defined then $\theta'(x)$ is also defined and
$\theta_1(x)=\theta_2(x)=\theta'(x)$.  Suppose now that
for any $\theta_1,\theta_2\in\Theta$ and $x\in X$, if $\theta_1(x)$
and $\theta_2(x)$ are defined then $\theta_1(x)=\theta_2(x)$.  All we need
to show in order to prove both results is that we can find a lub for
$\Theta$.  Let $\theta'\in\XV$ be defined as follows:
for any $x\in X$, let
$$\theta'(x) = \left\{\begin{array}{ll}
\mbox{undefined} & \mbox{if $\theta(x)$ is undefined for any
  $\theta\in\Theta$}\\
\theta(x) & \mbox{if there is a $\theta\in\Theta$ with $\theta(x)$ defined}
\end{array}
\right.
$$
First, $\theta'$ above is indeed well-defined, because we
assumed that for any $\theta_1,\theta_2\in\Theta$ and $x\in X$, if
$\theta_1(x)$ and $\theta_2(x)$ are defined then
$\theta_1(x)=\theta_2(x)$.  Second, $\theta'$ is an upper
bound for $\Theta$: indeed, if $\theta\in\Theta$ and $x\in X$ such
that $\theta(x)$ is defined, then $\theta'(x)$ is also defined and
$\theta'(x)=\theta(x)$, that is, $\theta\sqsubseteq\theta'$ for any
$\theta\in\Theta$.  Finally, $\theta'$ is a lub for
$\Theta$: if $\theta''$ is another upper bound for $\Theta$ and
$\theta'(x)$ is defined for some $x\in X$, that is, $\theta(x)$ is defined
for some $\theta\in\Theta$ and $\theta'(x)=\theta(x)$, then
$\theta''(x)$ is also defined and $\theta'(x)=\theta(x)$ (as
$\theta\sqsubseteq\theta''$), so $\theta'\sqsubseteq\theta''$. 
\end{proof}

%\newpage

\begin{proposition}
\label{prop:ub}
The following hold:
\begin{enumerate}[(1)]
\item The empty set of partial functions
$\emptyset\subseteq\XV$ has upper bounds and
$\sqcup\emptyset=\bot$;
\item The one-element sets have upper bounds and
$\sqcup\{\theta\}=\theta$ for any $\theta\in\XV$;
\item The bottom ``$\bot$'' does not influence the least upper bounds:
$\sqcup(\{\bot\}\cup\Theta) = \sqcup\Theta$ for any
$\Theta\subseteq\XV$;
\item If $\Theta,\Theta'\subseteq\XV$ such that
$\sqcup\Theta'$ exists and for any $\theta\in\Theta$ there is a
$\theta'\in\Theta'$ with $\theta\sqsubseteq\theta'$, then
$\sqcup\Theta$ exists and $\sqcup\Theta\sqsubseteq\sqcup\Theta'$;
for example, if $\sqcup\Theta'$ exists and $\Theta\subseteq\Theta'$
then $\sqcup\Theta$ exists and $\sqcup\Theta\sqsubseteq\sqcup\Theta'$; 
\item Let $\{\Theta_i\}_{i\in I}$ be a family of sets of partial
functions with $\Theta_i\subseteq\XV$.  Then
$\sqcup\cup\{\Theta_i\mid i\in I\}$ exists
%(i.e., $\cup\{\Theta_i\mid i\in I\}$ has an upper bound) 
iff
$\sqcup\{\sqcup\Theta_i \mid i\in I\}$ exists, and, if both exist,
$$\sqcup\cup\{\Theta_i\mid i\in I\}=\sqcup\{\sqcup\Theta_i \mid i\in I\}.$$
\end{enumerate}
\end{proposition}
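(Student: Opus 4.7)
My plan is to prove the five parts in order, leveraging the characterization of existence of lubs given by Proposition \ref{prop:basic-ub} (a set has an upper bound iff it is pairwise compatible, and in that case it automatically has a lub given by the pointwise union).

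For part (1), I would observe that any $\theta'\in\XV$ vacuously satisfies $\theta\sqsubseteq\theta'$ for all $\theta\in\emptyset$, so every partial function is an upper bound of $\emptyset$; the least such is $\bot$ since $\bot\sqsubseteq\theta'$ always. For part (2), $\theta$ itself is trivially an upper bound of $\{\theta\}$, and any other upper bound $\theta'$ satisfies $\theta\sqsubseteq\theta'$ by definition, so $\theta=\sqcup\{\theta\}$. For part (3), since $\bot\sqsubseteq\theta'$ for every $\theta'\in\XV$, a partial function is an upper bound of $\{\bot\}\cup\Theta$ iff it is an upper bound of $\Theta$, so the two sets have the same least upper bound (either both exist and agree, or neither exists).

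For part (4), I would first use the upper-bound characterization of Proposition \ref{prop:basic-ub} to show $\Theta$ has an upper bound. Given any $\theta_1,\theta_2\in\Theta$ and $x\in X$ with $\theta_1(x)$ and $\theta_2(x)$ both defined, pick $\theta_1',\theta_2'\in\Theta'$ with $\theta_i\sqsubseteq\theta_i'$. Then $\theta_i'(x)=\theta_i(x)$, and since $\theta_1',\theta_2'\sqsubseteq\sqcup\Theta'$ they are compatible, forcing $\theta_1(x)=\theta_2(x)$. Hence $\sqcup\Theta$ exists. Moreover $\sqcup\Theta'$ is itself an upper bound of $\Theta$ (each $\theta\in\Theta$ lies below some $\theta'\in\Theta'$ and thus below $\sqcup\Theta'$), so $\sqcup\Theta\sqsubseteq\sqcup\Theta'$.

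Part (5) is where the main work lies, and I expect it to be the only slightly delicate step because one must be careful that $\sqcup\Theta_i$ exists whenever the notation is used. For the forward direction, assume $\sqcup\bigcup_{i\in I}\Theta_i$ exists; since $\Theta_i\subseteq\bigcup_{j\in I}\Theta_j$, part (4) yields that each $\sqcup\Theta_i$ exists and $\sqcup\Theta_i\sqsubseteq\sqcup\bigcup_{j\in I}\Theta_j$. Hence $\sqcup\bigcup_{j\in I}\Theta_j$ is an upper bound of $\{\sqcup\Theta_i\mid i\in I\}$, which by Proposition \ref{prop:basic-ub} guarantees that $\sqcup\{\sqcup\Theta_i\mid i\in I\}$ exists and is $\sqsubseteq\sqcup\bigcup_{j\in I}\Theta_j$. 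For the converse direction, assume $\sqcup\{\sqcup\Theta_i\mid i\in I\}$ exists (so in particular each $\sqcup\Theta_i$ exists); any $\theta\in\bigcup_{i\in I}\Theta_i$ lies in some $\Theta_j$ and thus satisfies $\theta\sqsubseteq\sqcup\Theta_j\sqsubseteq\sqcup\{\sqcup\Theta_i\mid i\in I\}$, so this common upper bound witnesses, via Proposition \ref{prop:basic-ub}, that $\sqcup\bigcup_{i\in I}\Theta_i$ exists and satisfies $\sqcup\bigcup_{i\in I}\Theta_i\sqsubseteq\sqcup\{\sqcup\Theta_i\mid i\in I\}$. Combining the two inequalities gives the desired equality.
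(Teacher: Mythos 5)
Your proposal is correct and follows essentially the same route as the paper: parts (1)--(3) by direct inspection, part (4) by observing that $\sqcup\Theta'$ is an upper bound of $\Theta$ and invoking Proposition \ref{prop:basic-ub}, and part (5) by double implication, using part (4) (respectively the upper-bound characterization) in each direction and combining the two resulting inequalities. The only cosmetic difference is that in part (4) you re-verify the pairwise-compatibility criterion explicitly, which is redundant once $\sqcup\Theta'$ is known to be an upper bound of $\Theta$.
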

\begin{proof}
\textit{1.}, \textit{2.} and \textit{3.} are straightforward.  For
\textit{4.}, since for each $\theta\in\Theta$ there is some
$\theta'\in\Theta'$ with $\theta\sqsubseteq\theta'$, and since
$\theta'\sqsubseteq\sqcup\Theta'$ for any $\theta'\in\Theta'$, it follows
that $\theta\sqsubseteq\sqcup\Theta'$ for any $\theta\in\Theta$, that
is, that $\sqcup\Theta'$ is an upper bound for $\Theta$.  Therefore,
by Proposition \ref{prop:basic-ub} it follows that $\sqcup\Theta$
exists and $\sqcup\Theta\sqsubseteq\sqcup\Theta'$ (the latter because
$\sqcup\Theta$ is the \textit{least} upper bound of $\Theta$).
We prove \textit{5.} by double implication, each
implication stating that if one of the lub's exist then
the other one also exists and one of the inclusions holds; that indeed
implies that one of the lub's exists if and only if the
other one exists and, if both exist, then they are equal.  Suppose
first that $\sqcup\cup\{\Theta_i\mid i\in I\}$ exists, that is, that
$\cup\{\Theta_i\mid i\in I\}$ has an upper bound, say $u$.  Since 
$\Theta_i\subseteq\cup\{\Theta_i\mid i\in I\}$ for each $i\in I$, it
follows first that each $\Theta_i$ also has $u$ as an upper bound, so
all $\sqcup\Theta_i$ for all $i \in I$ exist, and second by
\textit{4.} above that
$\sqcup\Theta_i\sqsubseteq\sqcup\cup\{\Theta_i\mid i\in I\}$
for each $i\in I$.  Item \textit{4.} above then further implies that
$\sqcup\{\sqcup\Theta_i \mid i\in I\}$ exists and
$\sqcup\{\sqcup\Theta_i \mid i\in I\}\sqsubseteq
 \sqcup\{\sqcup\cup\{\Theta_i\mid i\in I\}\}
=\sqcup\cup\{\Theta_i\mid i\in I\}$
(the last equality follows by \textit{2.} above).
Conversely, suppose now that $\sqcup\{\sqcup\Theta_i \mid i\in I\}$
exists.  Since for each $\theta\in\cup\{\Theta_i\mid i\in I\}$
there is some $i\in I$ such that $\theta\sqsubseteq\sqcup\Theta_i$
(an $i\in I$ such that $\theta\in\Theta_i$), item \textit{4.} above
implies that $\sqcup\cup\{\Theta_i\mid i\in I\}$ also exists and
$\sqcup\cup\{\Theta_i\mid i\in I\} \sqsubseteq
 \sqcup\{\sqcup\Theta_i \mid i\in I\}$.
\end{proof}

\subsection{Least Upper Bounds of Families of Sets of Partial Maps}
\label{sec:lubs-of-families}

Motivated by requirements and optimizations of our trace slicing and
monitoring algorithms in Sections \ref{sec:trace-slicing} and
\ref{sec:trace-monitoring}, in this section and the next we define
several less known notions and results.  We are actually not aware
of other places where these notions are defined, so they could be
novel and specific to our approach to parametric trace slicing and
monitoring.

We first extend the notion of least upper bound (lub) from one
associating a partial function to a set of partial functions to one
associating a set of partial functions to a family (or set) of sets
of partial functions:

%\newpage

\begin{definition}
\label{dfn:sqcup-Theta}
If $\{\Theta_i\}_{i\in I}$ is a family of sets in $\XV$, then
we let the \textbf{least upper bound} (also \textbf{lub}) of
$\{\Theta_i\}_{i\in I}$ be defined as: \\
$$
\begin{array}{l}
\sqcup\{\Theta_i \mid i\in I\} \stackrel{\rm def}{=}
\{\sqcup\{\theta_i\mid i\in I\} \mid \theta_i \in\Theta_i
\mbox{ for each } i\in I \
 \mbox{ such that }
\sqcup\{\theta_i\mid i\in I\} \mbox{ exists} \}.
\end{array}
$$
\\
As before, we use the infix notation when $I$ is
finite, e.g., we may write
$\Theta_1 \sqcup \Theta_2 \sqcup \dots \sqcup \Theta_n$
instead of
$\sqcup\{\Theta_i \mid i \in \{1,2,\dots,n\}\}$.
\end{definition}

Therefore, $\sqcup\{\Theta_i \mid i\in I\}$ is the set containing all the
lub's corresponding to sets formed by picking for each $i\in I$
precisely one element from $\Theta_i$.
Unlike for sets of partial functions, the lub's of families
of sets of partial functions always exist;
$\sqcup\{\Theta_i\mid i\in I\}$ is the empty set when no collection
of $\theta_i\in\Theta_i$ can be found (one
$\theta_i\in\Theta_i$ for each $i\in I$) such that
$\{\theta_i\mid i\in I\}$ has an upper bound.

There is an admitted slight notational ambiguity between the two least
upper bound notations introduced so far.
%, namely between the least upper
%bound $\sqcup\Theta$ or $\sqcup\{\theta_i \mid i\in I\}$ of a set of
%partial functions
%$\Theta=\{\theta_i\mid i \in I\}\subseteq\XV$, and the
%least upper bound $\sqcup\{\Theta_i \mid i\in I\}$ of a family of sets
%of partial functions $\{\Theta_i\}_{i\in I}$ with
%$\Theta_i\subseteq\XV$ for each $i\in I$.
We prefer to purposely allow this ambiguity instead of inventing a new
notation for the lub's of families of sets, hoping that
the reader is able to quickly disambiguate the two by checking the
types of the objects involved in the lub: if partial
functions then the first lub is meant, if sets of
partial functions then the second.  Note that such notational
ambiguities are actually common practice elsewhere; e.g.,
in a monoid 
$(M,\_\!\!*\!\!\_:\!M\!\times\! M\!\rightarrow\! M,1)$ with binary
operation $*$ and unit $1$, the $*$ is commonly extended to sets of
elements $M_1,M_2$ in $M$ as expected: 
$M_1*M_2=\{m_1*m_2 \mid m_1\in M_1,m_2\in M_2\}$.

\begin{proposition}
\label{prop:sqcup-simple}
The following facts hold, where
$\Theta,\Theta_1,\Theta_2,\Theta_3 \subseteq \XV$:
\begin{enumerate}[(1)]
\item $\sqcup\emptyset = \{\bot\}$, where, in this case,
the empty set $\emptyset\subseteq{\cal P}(\XV)$ is meant;
\item $\sqcup\{\Theta\} = \Theta$; in particular
  $\sqcup\{\emptyset\} = \emptyset$ when the empty set
  $\emptyset\subseteq\XV$ is meant;
\item $\sqcup\{\{\theta\}\mid\theta\in\Theta\}=
\left\{
\begin{array}{ll}
\{\sqcup\Theta\} & \mbox{ if $\Theta$ has a lub, and} \\
\emptyset & \mbox{ if $\Theta$ does not have a lub;}
\end{array}
\right.
$
\item $\emptyset \sqcup \Theta = \emptyset$, where the empty set
  $\emptyset\subseteq\XV$ is meant;
% is the empty set of partial functions;
\item $\{\bot\} \sqcup \Theta = \Theta$;
\item If $\Theta_1\subseteq\Theta_2$ then
  $\Theta_1\sqcup\Theta_3\subseteq\Theta_2\sqcup\Theta_3$;
in particular, if $\bot\in\Theta_2$ then
$\Theta_3\subseteq\Theta_2\sqcup\Theta_3$;
\item $(\Theta_1\cup\Theta_2)\sqcup\Theta_3
      =(\Theta_1\sqcup\Theta_3)\cup(\Theta_2\sqcup\Theta_3)$.
\end{enumerate}
\end{proposition}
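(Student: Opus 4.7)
The plan is to dispatch each of the seven items by unfolding Definition~\ref{dfn:sqcup-Theta} and then using the basic lub facts from Section~\ref{sec:partial-functions}, in particular Proposition~\ref{prop:ub}. Since the whole proposition is a routine catalog of identities, no single item is very deep; the main task is to keep the two senses of $\sqcup$ straight (one producing a partial function, the other producing a set of partial functions) and to be careful with the ``vacuously true'' and ``nothing to pick'' corner cases.

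I would start with items \emph{1} and \emph{2}, which come straight from unfolding the definition. For \emph{1}, with $I=\emptyset$, the only family of representatives is the empty family $\{\theta_i\mid i\in\emptyset\}$, and by Proposition~\ref{prop:ub}(1) its lub in the partial-function sense is $\bot$; hence the outer set is $\{\bot\}$. For \emph{2}, with $I=\{1\}$ and $\Theta_1=\Theta$, one has $\sqcup\{\theta\}=\theta$ for each $\theta\in\Theta$ (Proposition~\ref{prop:ub}(2)), so the outer set is exactly $\Theta$; the subcase $\Theta=\emptyset$ yields $\emptyset$ because there is no representative to pick. Item \emph{3} is the dual: with $I=\Theta$ and every $\Theta_\theta=\{\theta\}$, the only admissible choice function is the identity $\theta\mapsto\theta$, and the outer set is $\{\sqcup\Theta\}$ exactly when $\sqcup\Theta$ exists, and $\emptyset$ otherwise.

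Items \emph{4} and \emph{5} follow by inspecting which pairs $(\theta_1,\theta_2)$ with $\theta_1\in\emptyset$ or $\theta_1\in\{\bot\}$ can be selected: in \emph{4} no selection is possible, so the outer set is empty, and in \emph{5} the only selection from $\{\bot\}$ is $\bot$ itself, which is compatible with every $\theta\in\Theta$ and contributes $\bot\sqcup\theta=\theta$ by Proposition~\ref{prop:basic-lub}; hence the outer set equals $\Theta$. For \emph{6}, I would simply chase an element: any $\theta\in\Theta_1\sqcup\Theta_3$ arises as $\theta=\theta_1\sqcup\theta_3$ with $\theta_1\in\Theta_1\subseteq\Theta_2$ and $\theta_3\in\Theta_3$, so $\theta\in\Theta_2\sqcup\Theta_3$; the particular case follows by combining this with item \emph{5} applied to $\{\bot\}\subseteq\Theta_2$. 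Item \emph{7} is a direct set-theoretic equality: a representative pair is drawn from $(\Theta_1\cup\Theta_2)\times\Theta_3$ iff its first component lies in $\Theta_1$ or in $\Theta_2$, and its lub (when it exists) is the same in either representation.

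The only real subtlety, and where I would be most careful, is the interplay between the ambiguous $\sqcup$ notation warned about just before the proposition and the existence condition built into Definition~\ref{dfn:sqcup-Theta}: in items \emph{1}, \emph{2}, and \emph{3} the reader must parse which $\sqcup$ is meant in each occurrence, and in items \emph{4}--\emph{7} the ``if the lub exists'' filter in the definition must be tracked so that empty cases and compatibility with $\bot$ are handled correctly. Once that bookkeeping is explicit, each item reduces to a one-line verification.
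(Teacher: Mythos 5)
Your proposal is correct and follows essentially the same route as the paper's own proof: unfold Definition~\ref{dfn:sqcup-Theta} case by case on the index set, invoke Proposition~\ref{prop:ub} for the empty and singleton cases, Proposition~\ref{prop:basic-lub} for $\{\bot\}\sqcup\Theta$, and element-chasing for items (6) and (7). The only cosmetic difference is that the paper proves (7) by double inclusion reusing (6) for one direction, whereas you characterize the representative pairs directly; the content is the same.
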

\begin{proof}
Recall that the least upper bound $\sqcup\{\Theta_i\mid i\in I\}$ of
sets of sets of partial functions is built by collecting all the lubs
of sets $\{\theta_i\mid i\in I\}$ containing one element
$\theta_i$ from each of the sets $\Theta_i$.  When $|I|=0$,
that is when $I$ is empty, there is precisely one set
$\{\theta_i\mid i\in I\}$, the empty set of partial functions.  Then
\textit{1.} follows by \textit{1.} in Proposition \ref{prop:ub}.  When
$|I|=1$, that is when $\{\Theta_i\mid i\in I\} = \{\Theta\}$ for some 
$\Theta\subseteq\XV$ like in \textit{2.}, then the sets
$\{\theta_i\mid i\in I\}$ are precisely the singleton sets
corresponding to the elements of $\Theta$, so \textit{2.} follows by
\textit{2.} in Proposition \ref{prop:ub}.  \textit{3.} holds because
there is only one way to pick an element from each singleton set
$\{\theta\}$, namely to pick the $\theta$ itself; this also shows how
the notion of a lub of a family of sets generalizes the conventional
notion of lub.  When $|I|\geq 2$ and at 
least one of the involved sets of partial functions is empty, like in
\textit{4.}, then there is no set $\{\theta_i\mid i\in I\}$, so the
least upper bound of the set of sets is empty (regarded, again, as the
empty set of sets of partial functions).  \textit{5.} follows by
\textit{1.} in Proposition \ref{prop:basic-lub}.  The first part of
\textit{6.} is immediate and the second part follows from the first
using \textit{5.}.  Finally, \textit{7.} follows by double
implication:
$(\Theta_1\sqcup\Theta_3)\cup(\Theta_2\sqcup\Theta_3) \subseteq
(\Theta_1\cup\Theta_2)\sqcup\Theta_3$ follows by \textit{6.} because
$\Theta_1$ and $\Theta_2$ are included in $\Theta_1\cup\Theta_2$, and
$(\Theta_1\cup\Theta_2)\sqcup\Theta_3 \subseteq
(\Theta_1\sqcup\Theta_3)\cup(\Theta_2\sqcup\Theta_3)$ because for any
$\theta_1\in\Theta_1\cup\Theta_2$, say $\theta_1\in\Theta_1$, and any
$\theta_3\in\Theta_3$, if $\theta_1\sqcup\theta_3$ exists then it also
belongs to $\Theta_1\sqcup\Theta_3$.
\end{proof}

\begin{proposition}
\label{prop:sqcup-partition}
Let $\{\Theta_i\}_{i\in I}$ be a family of sets of partial maps
in $\XV$ and let ${\cal I}=\{I_j\}_{j\in J}$ be a
partition of $I$: $I=\cup\{I_j \mid j\in J\}$ and
$I_{j_1} \cap I_{j_2}=\emptyset$ for any different $j_1,j_2\in J$.  Then
$
\sqcup\{\Theta_i \mid i\in I\} =
\sqcup\{\sqcup\{\Theta_{i_j}\mid i_j\in I_j\} \mid j\in J\}.
$
\end{proposition}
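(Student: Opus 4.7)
The plan is to unfold both sides according to Definition~\ref{dfn:sqcup-Theta} and then reduce the equality to Proposition~\ref{prop:ub}(5), which already handles the ``flattening'' of a union indexed by a partition into a nested least upper bound, but for the \emph{old} notion of lub (on sets of partial functions, not on families of sets of partial functions). Concretely, both sides of the claimed equality are subsets of $\XV$, so I would prove equality by double inclusion, chasing through the definitions and using the fact that $\{\theta_i\mid i\in I\}=\bigcup_{j\in J}\{\theta_i\mid i\in I_j\}$ whenever $\{I_j\}_{j\in J}$ partitions $I$.

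For the forward inclusion, I would take $\theta$ in the left-hand side, so $\theta=\sqcup\{\theta_i\mid i\in I\}$ for some choice of $\theta_i\in\Theta_i$ such that this lub exists. For each $j\in J$, because $\{\theta_i\mid i\in I_j\}\subseteq\{\theta_i\mid i\in I\}$ and the latter has an upper bound, Proposition~\ref{prop:ub}(4) ensures $\psi_j:=\sqcup\{\theta_i\mid i\in I_j\}$ exists, and by Definition~\ref{dfn:sqcup-Theta} we have $\psi_j\in\sqcup\{\Theta_i\mid i\in I_j\}$. Now Proposition~\ref{prop:ub}(5) applied to the partition $\{\{\theta_i\mid i\in I_j\}\}_{j\in J}$ yields that $\sqcup\{\psi_j\mid j\in J\}$ exists and equals $\sqcup\{\theta_i\mid i\in I\}=\theta$. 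Hence $\theta$ lies in the right-hand side.

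For the reverse inclusion, I would start with $\theta$ in the right-hand side, so $\theta=\sqcup\{\psi_j\mid j\in J\}$ where each $\psi_j=\sqcup\{\theta_i\mid i\in I_j\}$ for some choice $\theta_i\in\Theta_i$, $i\in I_j$, and this outer lub exists. Because $\{I_j\}_{j\in J}$ is a partition, these choices glue unambiguously into a single family $\{\theta_i\mid i\in I\}$ with $\theta_i\in\Theta_i$ for each $i\in I$. Applying Proposition~\ref{prop:ub}(5) in the other direction then gives that $\sqcup\bigcup_{j\in J}\{\theta_i\mid i\in I_j\}=\sqcup\{\theta_i\mid i\in I\}$ exists and equals $\sqcup\{\psi_j\mid j\in J\}=\theta$. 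So $\theta$ lies in the left-hand side.

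The only subtle point, and what I expect to be the main bookkeeping obstacle, is managing the existence clauses: the set-of-sets lub of Definition~\ref{dfn:sqcup-Theta} silently filters out tuples whose partial-function lub does not exist, so I must verify on both sides that a given tuple survives this filter on one side if and only if the induced tuple survives it on the other. That bi-implication is exactly what the ``iff'' in Proposition~\ref{prop:ub}(5) provides once one observes that the chosen families on the two sides are related by the partition-to-union correspondence $\{\theta_i\mid i\in I\}=\bigcup_{j\in J}\{\theta_i\mid i\in I_j\}$. Once this is spelled out, both inclusions are immediate.
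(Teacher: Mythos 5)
Your proposal is correct and follows essentially the same route as the paper's proof: both unfold the two sides via Definition~\ref{dfn:sqcup-Theta}, observe that choosing one $\theta_{i_j}\in\Theta_{i_j}$ for each $j\in J$ and $i_j\in I_j$ corresponds exactly to choosing one $\theta_i\in\Theta_i$ for each $i\in I$ with $\{\theta_i\mid i\in I\}=\cup\{\{\theta_{i_j}\mid i_j\in I_j\}\mid j\in J\}$, and then invoke \textit{5.} in Proposition~\ref{prop:ub} to transfer both existence and the value of the lub between the flat and the nested form. The paper phrases this as a direct rewriting of the defining set expressions rather than as an element-wise double inclusion, but the key lemma and the existence bookkeeping are identical to yours.
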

\begin{proof}
For each $j\in J$, let $Q_j$ denote the set
$\sqcup\{\Theta_{i_j} \mid i_j\in I_j\}$.
Definition \ref{dfn:sqcup-Theta} then implies the
following:
%\begin{quote}
$
Q_j \stackrel{\rm def}{=}
\{\sqcup\{\theta_{i_j}\mid i_j\in I_j\} \mid \theta_{i_j} \in\Theta_{i_j}
\mbox{ for each } i_j\in I_j, \mbox{ such that }
\sqcup\{\theta_{i_j}\mid i_j\in I_j\} \mbox{ exists} \}.
$
%\end{quote}
Definition \ref{dfn:sqcup-Theta} also implies the following:
%\begin{quote}
$
\sqcup\{Q_j \mid j\in J\} \stackrel{\rm def}{=}
\{\sqcup\{q_{j}\mid j\in J\} \mid q_j\in Q_j
\mbox{ for each } j\in J, \mbox{ such that }
\sqcup\{q_j \mid j\in J\} \mbox{ exists} \}.
$
%\end{quote}
Putting the two equalities above together, we get that
$\sqcup\{\sqcup\{\Theta_{i_j}\mid i_j\in I_j\} \mid j\in J\}$ equals
the following:
$$
\begin{array}{l}
\{\ \ \sqcup\{\sqcup\{\theta_{i_j}\mid i_j\in I_j\}\mid j\in J\}

\ \ \mid \ \ \theta_{i_j} \in\Theta_{i_j} \mbox{ for each } j\in J\mbox{ and }\ i_j\in I_j,
  \mbox{ such that } \\
\hspace*{34ex}
\sqcup\{\theta_{i_j}\mid i_j\in I_j\} \mbox{ exists for each } j\in J
  \mbox{ and } \\
\hspace*{34ex}
\sqcup\{\sqcup\{\theta_{i_j}\mid i_j\in I_j\} \mid j\in J\} \mbox{ exists} \ \ \}.
\end{array}
$$
Since $\{I_j\}_{j\in J}$ is a partition of $I$, the indexes $i_j$
generated by ``for each $j\in J$ and $i_j\in I_j$'' cover precisely
all the indexes $i\in I$.  Moreover, picking partial functions
$\theta_{i_j}\in\Theta_{i_j}$ for each $j\in J$ and $i_j\in I_j$ is
equivalent to picking partial functions $\theta_{i}\in\Theta_{i}$ for
each $i\in I$, and, in this case,
$\{\theta_i\mid i\in I\} =
 \cup\{\{\theta_{i_j}\mid i_j \in I_j\}\mid j \in J\}$.  By
\textit{5.} in Proposition \ref{prop:ub} we then infer that
$\sqcup\{\theta_i\mid i\in I\}$ exists if and only if
$\sqcup \{\sqcup\{\theta_{i_j}\mid i_j \in I_j\}\mid j \in J\}$
exists, and if both exist then
$\sqcup\{\theta_i\mid i\in I\} = 
 \sqcup \{\sqcup\{\theta_{i_j}\mid i_j \in I_j\}\mid j \in J\}$; if
both exist then $\sqcup\{\theta_{i_j}\mid i_j\in I_j\}$ also exists
for each $j\in J$ (because
$
\{\theta_{i_j}\mid i_j\in I_j\} \subseteq
\{\theta_i\mid i\in I\} =
\cup\{\{\theta_{i_j}\mid i_j \in I_j\}\mid j \in J\}$).
Therefore, we can conclude that
$\sqcup\{\sqcup\{\Theta_{i_j}\mid i_j\in I_j\} \mid j\in J\}$ equals
$\{\sqcup\{\theta_{i}\mid i\in I\} \mid \theta_{i} \in\Theta_{i}
\mbox{ for each } i\in I, \mbox{ such that }
\sqcup\{\theta_{i}\mid i\in I\} \mbox{ exists} \}$, which is nothing
but $\sqcup\{\Theta_i \mid i\in I\}$.
\end{proof}

\begin{corollary}
\label{cor:lub-family}
The following hold:
\begin{enumerate}[(1)]
\item $\{\bot\}\sqcup\Theta=\Theta$ (already proved as \textit{5.} in
  Proposition \ref{prop:sqcup-simple});
\item $\Theta_1\sqcup\Theta_2 = \Theta_2\sqcup\Theta_1$;
\item $\Theta_1\sqcup(\Theta_2\sqcup\Theta_3) = 
       (\Theta_1\sqcup\Theta_2)\sqcup\Theta_3$;
\end{enumerate}
\end{corollary}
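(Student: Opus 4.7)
The plan is to leverage Proposition~\ref{prop:sqcup-partition} (the partition lemma for lubs of families of sets) together with the basic properties of the underlying lub on partial functions (Proposition~\ref{prop:basic-lub}) and the simplification $\sqcup\{\Theta\}=\Theta$ from Proposition~\ref{prop:sqcup-simple}\textit{(2)}. Item \textit{(1)} has already been noted, so only commutativity \textit{(2)} and associativity \textit{(3)} remain, and both follow essentially for free once the right index set is identified.

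For commutativity, I would simply unfold Definition~\ref{dfn:sqcup-Theta} on both sides. By definition, $\Theta_1\sqcup\Theta_2$ collects all $\theta_1\sqcup\theta_2$ with $\theta_i\in\Theta_i$ for which the lub exists, and $\Theta_2\sqcup\Theta_1$ collects all $\theta_2\sqcup\theta_1$ with the same existence requirement. By Proposition~\ref{prop:basic-lub}\textit{(2)}, for any two partial functions $\theta_1,\theta_2$ the lub $\theta_1\sqcup\theta_2$ exists iff $\theta_2\sqcup\theta_1$ exists, and in that case they are equal; hence the two sets coincide.

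For associativity, my plan is to regard both sides as special cases of $\sqcup\{\Theta_i \mid i\in\{1,2,3\}\}$ and apply Proposition~\ref{prop:sqcup-partition} with two different partitions of $\{1,2,3\}$. Taking the partition $\{\{1\},\{2,3\}\}$ gives
\[
\sqcup\{\Theta_i \mid i\in\{1,2,3\}\}
= \sqcup\bigl\{\sqcup\{\Theta_1\},\ \sqcup\{\Theta_2,\Theta_3\}\bigr\}
= \Theta_1\sqcup(\Theta_2\sqcup\Theta_3),
\]
where the last equality uses Proposition~\ref{prop:sqcup-simple}\textit{(2)} to collapse $\sqcup\{\Theta_1\}=\Theta_1$. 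Taking instead the partition $\{\{1,2\},\{3\}\}$ and reasoning symmetrically yields $\sqcup\{\Theta_i \mid i\in\{1,2,3\}\}=(\Theta_1\sqcup\Theta_2)\sqcup\Theta_3$. Chaining the two equalities gives the desired associativity.

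I do not anticipate a genuine obstacle here: the real content is in Proposition~\ref{prop:sqcup-partition}, which has already been established, and the corollary is essentially a bookkeeping exercise picking suitable index sets and partitions. The only mildly delicate point is the implicit use of the notational convention from Definition~\ref{dfn:sqcup-Theta} (infix $\sqcup$ for finite index sets), and making sure that the existence conditions match on both sides; but these match automatically because Proposition~\ref{prop:sqcup-partition} already reconciles the existence of lubs of picked families with the existence of lubs of lubs of sub-families.
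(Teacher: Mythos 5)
Your proposal is correct and follows essentially the same route as the paper: both derive associativity from Proposition \ref{prop:sqcup-partition} applied to the partitions $\{1\}\cup\{2,3\}$ and $\{1,2\}\cup\{3\}$ of $\{1,2,3\}$, collapsing the singleton blocks via $\sqcup\{\Theta\}=\Theta$ from Proposition \ref{prop:sqcup-simple}. The only (immaterial) divergence is in commutativity, where you unfold Definition \ref{dfn:sqcup-Theta} directly and invoke Proposition \ref{prop:basic-lub}, whereas the paper also routes that item through the partition proposition; both arguments are sound.
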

\begin{proof}
These follow from Proposition \ref{prop:sqcup-partition} for various
index sets $I$ and partitions of it: for \textit{1.} take $I=\{1\}$
and its partition $I=\emptyset\cup I$, take $\Theta_1=\Theta$, and
then use \textit{1.} in Proposition \ref{prop:sqcup-simple} saying
that $\sqcup\emptyset=\{\bot\}$; for \textit{2.} take partitions
$\{1\}\cup\{2\}$ and $\{2\}\cup\{1\}$ of $I=\{1,2\}$, getting
$\Theta_1\sqcup\Theta_2 = \Theta_2\sqcup\Theta_1=\sqcup\{\Theta_i\mid
i\in\{1,2\}\}$; finally, for \textit{3.} take partitions
$\{1\}\cup\{2,3\}$ and $\{1,2\}\cup\{3\}$ of $I=\{1,2,3\}$, getting
$\Theta_1\sqcup(\Theta_2\sqcup\Theta_3) =
(\Theta_1\sqcup\Theta_2)\sqcup\Theta_3=\sqcup\{\Theta_i\mid
i\in\{1,2,3\}\}$.
\end{proof}

\subsection{Least Upper Bound Closures}
\label{sec:lub-closure}

We next define lub closures of sets of partial maps, a crucial
operation for the algorithms discussed next in the paper.

\begin{definition}
\label{dfn:lub-closure}
$\Theta\subseteq\XV$ is \textbf{lub closed} if and only if $\sqcup\Theta'\in\Theta$
for any $\Theta'\subseteq\Theta$ admitting upper bounds.
\end{definition}

\begin{proposition}
\label{prop:bot-closed}
$\{\bot\}$ and $\{\bot,\theta\}$ are lub closed ($\theta\in\XV$).
\end{proposition}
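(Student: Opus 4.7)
The plan is to unfold Definition \ref{dfn:lub-closure} and check the condition directly on every subset of the (very small) set in question, using the basic lub facts already established in Propositions \ref{prop:basic-lub} and \ref{prop:ub}.

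For $\{\bot\}$, the only subsets are $\emptyset$ and $\{\bot\}$. Both trivially admit upper bounds: $\emptyset$ does by item \textit{1} of Proposition \ref{prop:ub}, which also gives $\sqcup\emptyset=\bot$; and $\{\bot\}$ does by item \textit{2} of the same proposition, which gives $\sqcup\{\bot\}=\bot$. In each case the resulting lub lies in $\{\bot\}$, so the set is lub closed.

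For $\{\bot,\theta\}$, I would enumerate the four subsets $\emptyset$, $\{\bot\}$, $\{\theta\}$, and $\{\bot,\theta\}$. The first two are handled exactly as above. For $\{\theta\}$, item \textit{2} of Proposition \ref{prop:ub} gives $\sqcup\{\theta\}=\theta\in\{\bot,\theta\}$. For $\{\bot,\theta\}$, one uses that $\bot$ is trivially compatible with every partial function (so an upper bound exists) and applies Proposition \ref{prop:basic-lub}, which states $\bot\sqcup\theta=\theta$; alternatively, item \textit{3} of Proposition \ref{prop:ub} gives $\sqcup(\{\bot\}\cup\{\theta\})=\sqcup\{\theta\}=\theta$. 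Either way the lub lies in $\{\bot,\theta\}$.

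There is no real obstacle here: the argument is just a finite case analysis. The only subtle point worth stating carefully is that every subset one considers does in fact admit an upper bound (otherwise Definition \ref{dfn:lub-closure} would vacuously hold for that subset), which follows immediately from $\bot\sqsubseteq\theta$ for all $\theta\in\XV$.
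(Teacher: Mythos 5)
Your proof is correct and follows essentially the same route as the paper: enumerate the subsets and verify membership of each lub using $\sqcup\emptyset=\bot$, $\sqcup\{\theta\}=\theta$, and $\sqcup\{\bot,\theta\}=\theta$ from Proposition~\ref{prop:ub}. The paper's own proof cites exactly these three facts.
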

\begin{proof}
It follows easily from Definition \ref{dfn:lub-closure}, using the
facts that $\sqcup\emptyset = \bot$ (\textit{1.} in Proposition~\ref{prop:ub}),
$\sqcup\{\theta\}=\theta$ (\textit{2.} in Proposition~\ref{prop:ub}),
and $\sqcup\{\bot,\theta\}=\theta$ (\textit{3.} in
Proposition~\ref{prop:ub} for $\Theta=\{\theta\}$).
\end{proof}

\begin{proposition}
\label{prop:lub-closure}
If $\Theta\subseteq\XV$ and $\{\Theta_i\}_{i\in I}$ is a
family of sets of partial functions in $\XV$, then:
\begin{enumerate}[(1)]
\item If $\Theta$ is lub closed then $\bot\in\Theta$; in particular,
$\emptyset$ is not lub closed;
\item If $\Theta$ has upper bounds and is lub closed then it has a maximum;
\item $\Theta$ is lub closed iff $\sqcup\{\Theta\mid i\in I\}=\Theta$
for any $I$;
\item If $\Theta$ is lub closed and $\Theta_i\subseteq\Theta$ for each
  $i\in I$ then $\sqcup\{\Theta_i\mid i\in I\}\subseteq\Theta$;
\item If $\Theta_i$ is lub closed for each $i\in I$ then
$\sqcup\{\Theta_i\mid i\in I\}$ is lub closed and
$$\cup\{\Theta_i\mid i\in I\}\subseteq\sqcup\{\Theta_i\mid i\in I\};$$
\item If $I$ finite and $\Theta_i$ finite for all $i\in I$, then
$\sqcup\{\Theta_i\mid i\in I\}$ finite;
\item If $\Theta_i$ lub closed for all $i\in I$ then
$\cap\{\Theta_i \mid i \in I\}$ is lub closed;
\item $\cap\{\Theta' \mid \Theta'\subseteq\XV
\mbox{ with } \Theta\subseteq\Theta'
\mbox{ and } \Theta' \mbox{ is lub closed}\}$ is the smallest lub
closed set including $\Theta$.
\end{enumerate}
\end{proposition}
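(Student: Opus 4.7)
The plan is to establish the eight items in the listed order so that later ones may invoke earlier ones, leaning on Propositions~\ref{prop:basic-lub}, \ref{prop:basic-ub}, \ref{prop:ub}, \ref{prop:sqcup-simple} and \ref{prop:sqcup-partition} for the basic bookkeeping. Item \textit{1.} follows by taking $\Theta'=\emptyset\subseteq\Theta$ in Definition~\ref{dfn:lub-closure}: $\emptyset$ trivially admits any $\theta'$ as upper bound, and $\sqcup\emptyset=\bot$ by Proposition~\ref{prop:ub}(\textit{1.}), so $\bot\in\Theta$; the statement about $\emptyset$ is then the contrapositive. Item \textit{2.} is obtained by taking $\Theta'=\Theta$ itself: $\Theta$ has an upper bound by hypothesis, so $\sqcup\Theta\in\Theta$, which is by definition the maximum.

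For item \textit{3.}, I would unfold Definition~\ref{dfn:sqcup-Theta}: an element of $\sqcup\{\Theta\mid i\in I\}$ is the lub of some family $\{\theta_i\mid i\in I\}\subseteq\Theta$ whose lub exists, and since this family has an upper bound and $\Theta$ is lub closed, that lub lies in $\Theta$. Conversely, for $\theta\in\Theta$ one may take $\theta_i=\theta$ uniformly to see $\theta\in\sqcup\{\Theta\mid i\in I\}$; for the other implication, given any $\Theta'\subseteq\Theta$ with an upper bound, index $\Theta'$ by itself to realize $\sqcup\Theta'$ as an element of $\sqcup\{\Theta\mid i\in I\}=\Theta$. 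Item \textit{4.} then follows by combining the monotonicity observation of Proposition~\ref{prop:sqcup-simple}(\textit{6.}) with item~\textit{3.}: from $\Theta_i\subseteq\Theta$ one obtains $\sqcup\{\Theta_i\mid i\in I\}\subseteq\sqcup\{\Theta\mid i\in I\}=\Theta$.

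Item \textit{5.} is the technical heart. For the inclusion $\cup\{\Theta_i\mid i\in I\}\subseteq\sqcup\{\Theta_i\mid i\in I\}$, given $\theta\in\Theta_j$ I would pick $\theta_j=\theta$ and $\theta_i=\bot$ for $i\neq j$ (legitimate thanks to item~\textit{1.}) and note by Proposition~\ref{prop:ub}(\textit{3.}) that the lub of this family is $\theta$. For lub-closedness itself, take $\Theta'\subseteq\sqcup\{\Theta_i\mid i\in I\}$ having an upper bound; each $q\in\Theta'$ comes with a witnessing family $\{\theta_i^q\mid i\in I\}$ with $\theta_i^q\in\Theta_i$ and $q=\sqcup\{\theta_i^q\mid i\in I\}$. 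Define $\Theta_i'=\{\theta_i^q\mid q\in\Theta'\}\subseteq\Theta_i$; the chain $\theta_i^q\sqsubseteq q\sqsubseteq\sqcup\Theta'$ gives each $\Theta_i'$ an upper bound, so by lub-closedness of $\Theta_i$, the element $\theta_i:=\sqcup\Theta_i'$ lies in $\Theta_i$. Two-sided reasoning with Proposition~\ref{prop:ub}(\textit{5.}) then yields $\sqcup\{\theta_i\mid i\in I\}=\sqcup\Theta'$, placing $\sqcup\Theta'$ inside $\sqcup\{\Theta_i\mid i\in I\}$. I expect this reorganization of witnesses---from per-element $\{\theta_i^q\}_i$ to per-coordinate $\Theta_i'$ and back to a single family realizing $\sqcup\Theta'$---to be the main obstacle.

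The remaining items are quick. Item \textit{6.} is a counting argument: the family $\{\Theta_i\}_{i\in I}$ admits at most $\prod_{i\in I}|\Theta_i|$ coordinatewise selections and hence yields at most that many lubs. Item \textit{7.} is direct from the definition: a subset $\Theta'$ of $\cap\{\Theta_i\mid i\in I\}$ with an upper bound lies inside each $\Theta_i$, so $\sqcup\Theta'$ lies inside each $\Theta_i$ by lub-closedness, and thus inside the intersection. For item \textit{8.}, I would first observe that $\XV$ is itself lub closed (every subset of $\XV$ with an upper bound has its lub in $\XV$), so the intersected family is non-empty; item \textit{7.} then shows the intersection is lub closed, and it is by construction the smallest lub-closed superset of $\Theta$.
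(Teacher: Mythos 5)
Your proposal is correct and follows essentially the same route as the paper's own proof: every item is handled the same way, and in particular the key step --- lub-closedness of $\sqcup\{\Theta_i\mid i\in I\}$ in item \textit{5.} --- uses the identical regrouping of per-element witnesses $\{\theta_i^{q}\mid i\in I\}$ into per-coordinate sets $\Theta_i'$, closing each inside $\Theta_i$, and applying \textit{5.} of Proposition~\ref{prop:ub} twice. Your only deviations are harmless streamlinings: you derive item \textit{4.} from item \textit{3.} plus monotonicity where the paper argues directly from the definition, and you obtain $\cup\{\Theta_i\mid i\in I\}\subseteq\sqcup\{\Theta_i\mid i\in I\}$ by padding a selection with $\bot$ (using item \textit{1.}) instead of the paper's partition argument via Proposition~\ref{prop:sqcup-partition}.
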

\begin{proof}
\textit{1.} follows taking $\Theta'=\emptyset$ in Definition
\ref{dfn:lub-closure} and using $\sqcup\emptyset=\bot$
(\textit{1.} in Proposition \ref{prop:ub}).

\textit{2.} follows taking $\Theta'=\Theta$ in Definition
\ref{dfn:lub-closure}: $\sqcup\Theta\in\Theta$, so $\max\,\Theta$
exists (and equals $\sqcup\Theta$).

\textit{3.} Definition \ref{dfn:sqcup-Theta} implies that
$\sqcup\{\Theta \mid i\in I\}$ equals
$\{\sqcup\{\theta_i\mid i\in I\} \mid \theta_i \in\Theta
\mbox{ for each } i\in I, \mbox{ such that }
\sqcup\{\theta_i\mid i\in I\} \mbox{ exists} \}$, which is nothing but
$\{\sqcup\Theta'\mid\Theta'\subseteq\Theta
\mbox{ such that }\sqcup\Theta'\mbox{ exists}\}$; the later can be now
shown equal to $\Theta$ by double inclusion:
$\{\sqcup\Theta'\mid\Theta'\subseteq\Theta
\mbox{ such that }\sqcup\Theta'\mbox{ exists}\}
\subseteq\Theta$ because $\Theta$ is lub closed, and
$\Theta\subseteq\{\sqcup\Theta'\mid\Theta'\subseteq\Theta
\mbox{ such that }\sqcup\Theta'\mbox{ exists}\}$ because one can pick
$\Theta'=\{\theta\}$ for each $\theta\in\Theta$ and use the fact that
$\sqcup\{\theta\}=\theta$ (\textit{2.} in Proposition \ref{prop:ub}).

\textit{4.} Let $\theta$ be an arbitrary partial function in
$\sqcup\{\Theta_i\mid i\in I\}$, that is,
$\theta=\sqcup\{\theta_i\mid i\in I\}$ for some $\theta_i \in\Theta_i$,
one for each $i\in I$, such that $\{\theta_i\mid i\in I\}$ has upper
bounds.  Since $\Theta$ is lub closed and $\Theta_i\subseteq\Theta$
for each $i\in I$, it follows that $\theta\in\Theta$.  Therefore,
$\sqcup\{\Theta_i\mid i\in I\}\subseteq\Theta$.

\textit{5.} Let $\Theta'$ be a set of partial functions included in
$\sqcup\{\Theta_i\mid i\in I\}$ which admits an upper bound; moreover,
for each $\theta'\in\Theta'$, let us fix a set
$\{\theta_i^{\theta'}\mid i\in I\}$ such that
$\theta_i^{\theta'}\in\Theta_i$ for each $i\in I$ and
$\theta'=\sqcup\{\theta_i^{\theta'}\mid i\in I\}$ (such sets exist
because $\theta'\in\Theta'\subseteq\sqcup\{\Theta_i\mid i\in I\}$).
Let $\Theta^{\theta'}$ be the set $\{\theta_i^{\theta'}\mid i\in I\}$
for each $\theta'\in\Theta'$, let $\Theta'_i$ be the set
$\{\theta_i^{\theta'}\mid \theta'\in\Theta'\}$ for each $i\in I$, and
let $\Theta$ be the set
$\{\theta_i^{\theta'}\mid\theta'\in\Theta',i\in
I^{\theta'}\}$.  It is easy to see that $\Theta=
\cup\{\Theta^{\theta'}\mid\theta'\in\Theta'\} =
\cup\{\Theta'_i\mid i\in I\}$ and that $\Theta'_i\subseteq\Theta_i$
for each $i\in I$.  Since $\sqcup\Theta'$ exists (because
$\Theta'$ has upper bounds) and
$\sqcup\Theta'= \sqcup\{\theta'\mid\theta'\in\Theta'\} = 
\sqcup\{\sqcup\Theta^{\theta'}\mid\theta'\in\Theta'\}$,
by \textit{5.} in Proposition \ref{prop:ub} it follows that
$\sqcup\Theta$ exists and $\sqcup\Theta'=\sqcup\Theta$.
Since $\Theta=\cup\{\Theta'_i\mid i\in I\}$ and $\sqcup\Theta$ exists,
by \textit{5.} in Proposition \ref{prop:ub} again we get that
$\sqcup\{\sqcup\Theta'_i\mid i\in I\}$ exists and is equal to
$\sqcup\Theta$, which is equal to $\sqcup\Theta'$.  Since
$\Theta'_i\subseteq\Theta_i$ and $\Theta_i$ is lub closed, we get that
$\sqcup\Theta'_i\in\Theta_i$.  That means that
$\sqcup\{\sqcup\Theta'_i\mid i\in I\}\in\sqcup\{\Theta_i\mid i\in
I\}$, that is, that $\sqcup\Theta'\in\sqcup\{\Theta_i\mid i\in I\}$.
Since $\Theta'\subseteq\sqcup\{\Theta_i\mid i\in I\}$ was chosen
arbitrarily, we conclude that $\sqcup\{\Theta_i\mid i\in I\}$ is lub
closed.  To show that
$\cup\{\Theta_i\mid i\in I\}\subseteq\sqcup\{\Theta_i\mid i\in I\}$,
let us pick an $i\in I$ and let us partition $I$ as
$\{i\}\cup (I\backslash\{i\})$.  By Proposition
\ref{prop:sqcup-partition}, 
$\sqcup\{\Theta_i\mid i\in I\} = \Theta_i \sqcup (\sqcup\{\Theta_j\mid
j\in I\backslash\{i\}\})$.  The proof above also implies that
$\sqcup\{\Theta_j\mid j\in I\backslash\{i\}\}$ is lub closed, so by
\textit{1.} we get that $\bot\in\sqcup\{\Theta_j\mid
j\in I\backslash\{i\}\}$.  Finally, \textit{6.} in Proposition
\ref{prop:sqcup-simple} implies
$\Theta_i \sqsubseteq \Theta_i \sqcup (\sqcup\{\Theta_j\mid
j\in I\backslash\{i\}\})$, so
$\Theta_i\subseteq\sqcup\{\Theta_i\mid i\in I\}$ for each $i\in I$,
that is, $\cup\{\Theta_i\mid i\in I\}\subseteq\sqcup\{\Theta_i\mid i\in I\}$.

\textit{6.} Recall from Definition \ref{dfn:sqcup-Theta} that
$\sqcup\{\Theta_i\mid i\in I\}$ contains the existing least upper
bounds of sets of partial functions containing precisely one partial
function in each $\Theta_i$.  If $I$ and each of the $\Theta_i$ for
each $i\in I$ is finite, then
$|\sqcup\{\Theta_i\mid i\in I\}|\leq\prod_{i\in I}|\Theta_i|$, because
there at most $\prod_{i\in I}|\Theta_i|$ combinations of partial
functions, one in each $\Theta_i$, that admit an upper bound.
Therefore, $\sqcup\{\Theta_i\mid i\in I\}$ is also finite.

\textit{7.} Let $\Theta'\subseteq\cap\{\Theta_i\mid i\in I\}$ be a set
of partial functions admitting an upper bound.  Then
$\Theta'\subseteq\Theta_i$ for each $i\in I$ and, since each
$\Theta_i$ is lub closed, $\sqcup\Theta'\in\Theta_i$ for each
$i\in I$.  Therefore, $\sqcup\Theta'\in\cap\{\Theta_i\mid i\in I\}$.

\textit{8.} Anticipating the definition of and notation for lub
closures (Definition \ref{dfn:lub-closure}), we let
$\overline{\Theta}$ denote the set
$\cap\{\Theta' \mid \Theta'\subseteq\XV
\mbox{ with } \Theta\subseteq\Theta'
\mbox{ and } \Theta' \mbox{ is lub closed}\}$.  It is clear that
$\Theta\subseteq\overline{\Theta}$ and, by \textit{7.}, that
$\overline{\Theta}$ is lub closed.  It is also the \textit{smallest}
lub closed set including $\Theta$, because all such sets $\Theta'$ are
among those whose intersection defines $\overline{\Theta}$.
\end{proof}

\begin{definition}
Given $\theta'\in\XV$ and $\Theta\subseteq\XV$,
%let $(\theta)_\Theta$ be the set of partial functions in $\Theta$ that
%are compatible with $\theta$, and
let
$$(\theta']_\Theta \stackrel{\rm def}{=}
\{\theta\mid \theta\in\Theta \mbox{ and } \theta \sqsubseteq \theta'\}$$
be the set of partial functions in $\Theta$ that are less informative
than $\theta'$.
\end{definition}

\begin{proposition}
\label{prop:max}
If $\theta,\theta',\theta'',\theta_1,\theta_2\in\XV$ and
if $\Theta\subseteq\XV$ is lub closed, then:
\begin{enumerate}[(1)]
\item $(\theta']_\Theta$ is lub closed and $\max\,(\theta']_\Theta$ exists;
\item If $\theta'\in\{\theta\}\sqcup\Theta$ then 
$\{\theta'' \mid \theta''\in\Theta \mbox{ and } \theta'=\theta\sqcup\theta''\}$
has maximum and that equals $\max\,(\theta']_\Theta$;
\item If $\theta_1,\theta_2\in\{\theta\}\sqcup\Theta$ such that
$\theta_1=\max\,(\theta_2]_\Theta$, then $\theta_1=\theta_2$.
\end{enumerate}
\end{proposition}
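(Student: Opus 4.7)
The plan is to prove the three parts in order, leaning on the max--exists criterion from item~\textit{2.} of Proposition~\ref{prop:lub-closure} for part~\textit{1.}, using part~\textit{1.} to pin down a canonical witness inside $\Theta$ for part~\textit{2.}, and then folding parts~\textit{1.}--\textit{2.} together for part~\textit{3.}

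For \textit{1.}, I would first show $(\theta']_\Theta$ is lub closed: take any $\Theta' \subseteq (\theta']_\Theta$ that has upper bounds. Since $\Theta' \subseteq \Theta$ and $\Theta$ is lub closed, $\sqcup \Theta' \in \Theta$; and since every $\vartheta \in \Theta'$ satisfies $\vartheta \sqsubseteq \theta'$, the function $\theta'$ is itself an upper bound of $\Theta'$, so $\sqcup \Theta' \sqsubseteq \theta'$, hence $\sqcup \Theta' \in (\theta']_\Theta$. Existence of $\max\,(\theta']_\Theta$ then follows immediately because $(\theta']_\Theta$ is lub closed and admits $\theta'$ as an upper bound, and item~\textit{2.} of Proposition~\ref{prop:lub-closure} delivers a maximum under exactly these conditions.

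For \textit{2.}, set $S = \{\theta'' \mid \theta'' \in \Theta,\ \theta' = \theta \sqcup \theta''\}$ and $\theta^* = \max\,(\theta']_\Theta$. Note $S$ is nonempty because $\theta' \in \{\theta\} \sqcup \Theta$, and $S \subseteq (\theta']_\Theta$ because each $\theta'' \in S$ satisfies $\theta'' \sqsubseteq \theta \sqcup \theta'' = \theta'$ and lies in $\Theta$. Consequently every element of $S$ is $\sqsubseteq \theta^*$, so it only remains to show $\theta^* \in S$, i.e., $\theta \sqcup \theta^* = \theta'$. Pick any witness $\theta'' \in S$. From $\theta \sqsubseteq \theta'$ and $\theta^* \sqsubseteq \theta'$ we get compatibility, so $\theta \sqcup \theta^*$ exists and $\theta \sqcup \theta^* \sqsubseteq \theta'$. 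Conversely, $\mathsf{Dom}(\theta') = \mathsf{Dom}(\theta) \cup \mathsf{Dom}(\theta'') \subseteq \mathsf{Dom}(\theta) \cup \mathsf{Dom}(\theta^*) = \mathsf{Dom}(\theta \sqcup \theta^*)$ (using $\theta'' \sqsubseteq \theta^*$), which together with $\theta \sqcup \theta^* \sqsubseteq \theta'$ forces equality via Proposition~\ref{prop:basic-ub}. Thus $\theta^* \in S$ and $\max S = \theta^*$.

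For \textit{3.}, apply part~\textit{2.} with $\theta'$ taken to be $\theta_2$: since $\theta_2 \in \{\theta\} \sqcup \Theta$ and $\theta_1 = \max\,(\theta_2]_\Theta$, we conclude from part~\textit{2.} that $\theta_1 \in \Theta$ and $\theta_2 = \theta \sqcup \theta_1$. In particular $\theta \sqsubseteq \theta_2$. On the other hand $\theta_1 \in \{\theta\} \sqcup \Theta$ means $\theta_1 = \theta \sqcup \vartheta$ for some $\vartheta \in \Theta$, which gives $\theta \sqsubseteq \theta_1$. Hence $\theta \sqcup \theta_1 = \theta_1$, and combining with $\theta_2 = \theta \sqcup \theta_1$ yields $\theta_1 = \theta_2$.

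The main obstacle is the equality $\theta \sqcup \theta^* = \theta'$ in part~\textit{2.}: the inequality $\sqsubseteq$ is immediate by compatibility, but the reverse requires turning the containment $\theta'' \sqsubseteq \theta^*$ provided by the maximum into a domain inclusion $\mathsf{Dom}(\theta') \subseteq \mathsf{Dom}(\theta \sqcup \theta^*)$, which is a pointwise argument using Proposition~\ref{prop:basic-ub} rather than a purely order-theoretic one. Once that is settled, parts~\textit{1.} and~\textit{3.} are short consequences.
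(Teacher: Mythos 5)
Your proof is correct and follows essentially the same route as the paper's: part \textit{1.} is obtained by showing $(\theta']_\Theta$ is lub closed (using that $\theta'$ bounds every subset and that $\Theta$ is lub closed) and then invoking item \textit{2.} of Proposition~\ref{prop:lub-closure}, and part \textit{3.} by applying part \textit{2.} with $\theta'=\theta_2$ and collapsing $\theta_2=\theta\sqcup\theta_1=\theta_1$. The only divergence is in part \textit{2.}: the paper first forms $q=\sqcup Q$, checks $\theta'=\theta\sqcup q$ and $q\in\Theta$ (via lub closedness) to conclude $q=\max Q$, and only then identifies $q$ with $\max\,(\theta']_\Theta$; you instead show directly that $\max\,(\theta']_\Theta$ dominates $S$ and belongs to $S$, which is a slightly leaner argument hinging on the same key identity $\theta\sqcup\max\,(\theta']_\Theta=\theta'$ --- with the reverse inequality derived by a domain-inclusion computation where the paper uses monotonicity of $\sqcup$ (item \textit{4.} of Proposition~\ref{prop:ub}). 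Both are sound; nothing is missing.
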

\begin{proof}
\textit{1.} First, note that $\theta'$ is an upper bound for
$(\theta']_\Theta$ as well as for any subset $\Theta'$ of it, so any
$\Theta'\subseteq(\theta']_\Theta$ has upper bounds, so by \textit{2.}
in Proposition \ref{prop:basic-ub}, $\sqcup\Theta'$ exists for any
$\Theta'\subseteq(\theta']_\Theta$.  Moreover, if
$\Theta'\subseteq(\theta']_\Theta$ then
$\sqcup\Theta'\sqsubseteq\theta'$, and since $\Theta$ is lub closed it 
follows that $\sqcup\Theta'\in\Theta$, so
$\sqcup\Theta'\in(\theta']_\Theta$.  Therefore, $(\theta']_\Theta$ is
lub closed. \textit{2.} in Proposition \ref{prop:lub-closure} now
implies that $(\theta']_\Theta$ has maximum; to be concrete,
$\max\,(\theta']_\Theta$ is nothing but 
$\sqcup\,(\theta']_\Theta$, which belongs to $(\theta']_\Theta$
(because one can pick $\Theta'=(\theta']_\Theta$ above).

\textit{2.} 
Let $Q$ be the set of partial functions
$\{\theta'' \mid \theta''\in\Theta \mbox{ and }\theta'=\theta\sqcup\theta''\}$.
Note that $Q$ is non-empty (because $\theta'\in\{\theta\}\sqcup\Theta$,
so there is some $\theta''\in\Theta$ such as
$\theta'=\theta\sqcup\theta''$) and has upper-bounds (because
$\theta'$ is an upper bound for it), but that it is not necessarily
lub closed (because, unless $\theta'=\theta$, $Q$ does not
contain $\bot$, contradicting \textit{1.} in Proposition
\ref{prop:lub-closure}).  Hence $Q$ has a lub (by \textit{2.}
in Proposition \ref{prop:basic-ub}), say $q$, and
$q=\sqcup\,Q\sqsubseteq\theta'$; since $\theta\sqsubseteq\theta'$,
it follows that $\theta \sqcup q \sqsubseteq \theta'$.  On the other hand
$\theta'\sqsubseteq\theta\sqcup q$ by \textit{4.} in Proposition
\ref{prop:ub}, because there is some $\theta''\in
Q$ such that $\theta'=\theta\sqcup\theta''$ and $\theta''\sqsubseteq
q$.  Therefore, $\theta'=\theta\sqcup q$.  Since $\Theta$ is lub closed, it
follows that $q\in\Theta$.  Therefore, $q\in Q$, so $q$ is the maximum
element of $Q$.
Let us next show that $q = \max\,(\theta']_\Theta$.  The relation
$q \sqsubseteq \max\,(\theta']_\Theta$ is immediate because
$q\in(\theta']_\Theta$ (we proved above that $q\in\Theta$ and
$q\sqsubseteq\theta'$).  For $\max\,(\theta']_\Theta\sqsubseteq q$ it
suffices to show that $\max\,(\theta']_\Theta \in Q$, that is, that
$\theta\sqcup\max\,(\theta']_\Theta=\theta'$:
$\theta\sqcup\max\,(\theta']_\Theta\sqsubseteq\theta'$ follows because
$\theta\sqsubseteq\theta'$ and 
$\max\,(\theta']_\Theta\sqsubseteq\theta'$, while
$\theta'\sqsubseteq\theta\sqcup\max\,(\theta']_\Theta$ follows because
there is some $\theta''\in\Theta$ such that
$\theta'=\theta\sqcup\theta''$ and, since
$\theta''\sqsubseteq\max\,(\theta']_\Theta$,
$\theta\sqcup\theta''\sqsubseteq\theta\sqcup\max\,(\theta']_\Theta$
(by \textit{4.} in Proposition \ref{prop:ub}).

\textit{3.} admits a direct proof simpler than that of
\textit{2.}; however, since \textit{2.} is needed anyway, we prefer to
use \textit{2.}  
Note that $\theta\sqsubseteq\theta_1\sqsubseteq\theta_2$.
%If $\theta_2=\theta$ then the result follows immediately, so let us
%assume that $\theta_2\in\{\theta\}\sqcup\Theta$, that is, that
%$(\theta_2\backslash\theta)_\Theta\neq\emptyset$.  
By \textit{2.},
$\theta_1=\max\,\{\theta'' \mid \theta''\in\Theta \mbox{ and }\theta_2=\theta\sqcup\theta''\}$, which implies
$\theta_2=\theta\sqcup\theta_1=\theta_1$.
\end{proof}

\begin{definition}
\label{dfn:compatibility-closure}
Given $\Theta\subseteq\XV$, we let $\overline{\Theta}$, 
the \textbf{least upper bound (lub) closure} of $\Theta$,
be defined as follows:
$$\overline{\Theta}\stackrel{\rm def}{=}
\cap\{\Theta' \mid \Theta'\subseteq\XV
\mbox{ with } \Theta\subseteq\Theta'
\mbox{ and } \Theta' \mbox{ is lub closed}\}.$$
\end{definition}

\begin{proposition}
\label{prop:closure-simple}
The next hold ($\Theta\subseteq\XV,\theta\in\XV$):
\begin{enumerate}[(1)]
\item $\overline{\Theta}$ is the smallest lub closed set including
$\Theta$;
\item $\overline{\emptyset}=\overline{\{\bot\}}=\{\bot\}$;
\item $\overline{\{\theta\}}=\{\bot,\theta\}$.
\end{enumerate}
\end{proposition}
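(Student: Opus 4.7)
The plan is to reduce all three claims to facts already established earlier, so no new combinatorial work is needed. The key ingredients are item \textit{8.} of Proposition \ref{prop:lub-closure} (which essentially already proves part \textit{1.}), item \textit{1.} of Proposition \ref{prop:lub-closure} (every lub closed set contains $\bot$), and Proposition \ref{prop:bot-closed} (the sets $\{\bot\}$ and $\{\bot,\theta\}$ are themselves lub closed).

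For part \textit{1.}, I would just observe that Definition \ref{dfn:compatibility-closure} defines $\overline{\Theta}$ as exactly the intersection appearing in item \textit{8.} of Proposition \ref{prop:lub-closure}, which was already shown to be the smallest lub closed set including $\Theta$. Thus part \textit{1.} is immediate; the remaining parts can then freely use this characterization.

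For part \textit{2.}, I would argue as follows. By Proposition \ref{prop:bot-closed}, $\{\bot\}$ is lub closed, and it trivially includes both $\emptyset$ and $\{\bot\}$, so it is a candidate whose smallness needs to be checked. For minimality, observe that by item \textit{1.} of Proposition \ref{prop:lub-closure} every lub closed set must contain $\bot$, hence any lub closed set including $\emptyset$ or $\{\bot\}$ has at least one element, namely $\bot$, so $\{\bot\}$ is the smallest. Applying part \textit{1.} gives $\overline{\emptyset}=\overline{\{\bot\}}=\{\bot\}$.

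For part \textit{3.}, by Proposition \ref{prop:bot-closed} the set $\{\bot,\theta\}$ is lub closed, and it obviously includes $\{\theta\}$. For minimality, any lub closed set $\Theta'$ containing $\theta$ must, by item \textit{1.} of Proposition \ref{prop:lub-closure}, also contain $\bot$; hence $\{\bot,\theta\}\subseteq\Theta'$. By part \textit{1.}, $\overline{\{\theta\}}=\{\bot,\theta\}$. No step presents any real obstacle; the only point requiring minor care is to note that the claimed sets are in fact lub closed (handled by Proposition \ref{prop:bot-closed}) before invoking the smallness characterization from part \textit{1.}
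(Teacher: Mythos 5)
Your proposal is correct and follows essentially the same route as the paper's proof: part \textit{1.} is read off from the last item of Proposition~\ref{prop:lub-closure}, and parts \textit{2.} and \textit{3.} combine Proposition~\ref{prop:bot-closed} (the candidate sets are lub closed) with item \textit{1.} of Proposition~\ref{prop:lub-closure} (every lub closed set contains $\bot$) for minimality. No gaps.
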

\begin{proof}
\textit{1.} follows by \textit{7.} in Proposition \ref{prop:lub-closure}.
For \textit{2.} and \textit{3.}, first note that $\{\bot\}$ and
$\{\bot,\theta\}$ are lub closed by Proposition \ref{prop:bot-closed};
second, note that they are indeed the smallest lub closed sets
including $\bot$ and resp. $\theta$, as any lub closed
set must include $\bot$ (\textit{1.} in Proposition \ref{prop:lub-closure}).
\end{proof}

\begin{proposition}
\label{prop:closure-operator}
The lub closure map
$\overline{\cdot}:2^{\XV} \rightarrow
2^{\XV}$ is a closure operator, that is, for any
$\Theta,\Theta_1,\Theta_2\subseteq\XV$,
\begin{enumerate}[(1)]
\item (extensivity) $\Theta \subseteq \overline{\Theta}$;
\item (monotonicity) If $\Theta_1 \subseteq \Theta_2$ then
$\overline{\Theta_1} \subseteq \overline{\Theta_2}$;
\item (idempotency) $\overline{\overline{\Theta}}=\overline{\Theta}$.
\end{enumerate}
\end{proposition}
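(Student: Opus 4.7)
The plan is to prove each of the three closure-operator axioms by leaning on the characterization already established in Proposition \ref{prop:closure-simple}, namely that $\overline{\Theta}$ is the smallest lub-closed set including $\Theta$. Once that characterization is in hand, all three properties follow by routine set-theoretic reasoning about intersections of lub-closed sets.

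First, I would prove \emph{extensivity}: by Proposition \ref{prop:closure-simple}(1), $\overline{\Theta}$ includes $\Theta$, giving $\Theta\subseteq\overline{\Theta}$ immediately. Alternatively, one can argue directly from Definition \ref{dfn:compatibility-closure} that every set $\Theta'$ appearing in the intersection satisfies $\Theta\subseteq\Theta'$, so $\Theta$ is contained in the intersection.

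Next, for \emph{monotonicity}, suppose $\Theta_1\subseteq\Theta_2$. Then $\overline{\Theta_2}$ is a lub-closed set, and by extensivity applied to $\Theta_2$ together with the hypothesis we have $\Theta_1\subseteq\Theta_2\subseteq\overline{\Theta_2}$. Thus $\overline{\Theta_2}$ is a lub-closed set including $\Theta_1$, so it is one of the sets whose intersection forms $\overline{\Theta_1}$. Since $\overline{\Theta_1}$ is the smallest such set (Proposition \ref{prop:closure-simple}(1)), we conclude $\overline{\Theta_1}\subseteq\overline{\Theta_2}$.

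Finally, \emph{idempotency} reduces to observing that $\overline{\Theta}$ is itself already lub-closed: by Proposition \ref{prop:closure-simple}(1), $\overline{\Theta}$ is a lub-closed set including $\overline{\Theta}$, hence is the smallest such set, so $\overline{\overline{\Theta}}=\overline{\Theta}$. Equivalently, the inclusion $\overline{\overline{\Theta}}\subseteq\overline{\Theta}$ holds because $\overline{\Theta}$ participates in the intersection defining $\overline{\overline{\Theta}}$, while $\overline{\Theta}\subseteq\overline{\overline{\Theta}}$ is extensivity applied to $\overline{\Theta}$. There is no real obstacle here; the only conceptual point worth emphasizing is that the heavy lifting was already done in Proposition \ref{prop:lub-closure}(7), which guarantees that an arbitrary intersection of lub-closed sets is lub-closed, and this is what makes Proposition \ref{prop:closure-simple}(1) available so that the closure-operator axioms become nearly automatic.
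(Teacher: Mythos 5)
Your proof is correct and follows essentially the same route as the paper: all three axioms are derived from the characterization of $\overline{\Theta}$ as the smallest lub closed set including $\Theta$ (Proposition \ref{prop:closure-simple}), with monotonicity obtained by observing that $\overline{\Theta_2}$ is a lub closed set containing $\Theta_1$ and hence contains the smallest such set. The extra detail you supply (e.g., invoking Proposition \ref{prop:lub-closure}(7) explicitly) is consistent with, and merely expands, the paper's terser argument.
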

\begin{proof}
Extensivity and idempotency follow immediately from the definitions of
$\overline{\Theta}$ and $\overline{\overline{\Theta}}$ (which are
lub closed by \textit{1.} in Proposition \ref{prop:closure-simple}).
For monotonicity, one should note that $\overline{\Theta_2}$ satisfies
the properties of $\overline{\Theta_1}$ (i.e., $\Theta_1 \subseteq
\overline{\Theta_2}$ and $\Theta_2$ is lub closed); since
$\overline{\Theta_1}$ is the smallest with those properties, it
follows that $\overline{\Theta_1}\subseteq\overline{\Theta_2}$.
\end{proof}

\begin{proposition}
\label{prop:overline-union}
$\overline{\cup\{\Theta_i \mid i \in I\}} =
       \sqcup\{\overline{\Theta_i}\mid i\in I\}$
for any family $\{\Theta_i\}_{i\in I}$ of partial functions in
$\XV$.
\end{proposition}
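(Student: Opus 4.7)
The plan is to prove the equality by double inclusion, exploiting the characterization of $\overline{\,\cdot\,}$ as the \emph{smallest} lub closed set containing its argument (Proposition~\ref{prop:closure-simple}, item~1), together with the key structural fact from Proposition~\ref{prop:lub-closure}, item~5, that the family lub of lub closed sets is itself lub closed and contains their union.

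For the inclusion $\sqcup\{\overline{\Theta_i}\mid i\in I\}\subseteq\overline{\cup\{\Theta_i\mid i\in I\}}$, I would pick an arbitrary element of the left-hand side, namely $\sqcup\{\theta_i\mid i\in I\}$ for some choice $\theta_i\in\overline{\Theta_i}$ whose set admits an upper bound. By monotonicity of $\overline{\,\cdot\,}$ (Proposition~\ref{prop:closure-operator}) applied to the inclusions $\Theta_i\subseteq\cup\{\Theta_j\mid j\in I\}$, each $\overline{\Theta_i}$ sits inside $\overline{\cup\{\Theta_j\mid j\in I\}}$, so every $\theta_i$ lies in $\overline{\cup\{\Theta_j\mid j\in I\}}$. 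Since the latter set is lub closed and $\{\theta_i\mid i\in I\}$ has upper bounds, its lub also belongs to $\overline{\cup\{\Theta_j\mid j\in I\}}$.

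For the reverse inclusion $\overline{\cup\{\Theta_i\mid i\in I\}}\subseteq\sqcup\{\overline{\Theta_i}\mid i\in I\}$, I would verify the two defining properties of the closure on the candidate $\sqcup\{\overline{\Theta_i}\mid i\in I\}$: (a) it is lub closed, and (b) it contains $\cup\{\Theta_i\mid i\in I\}$. Property (a) is exactly the first conclusion of Proposition~\ref{prop:lub-closure}, item~5, applied to the lub closed family $\{\overline{\Theta_i}\}_{i\in I}$. For (b), the same item~5 gives $\cup\{\overline{\Theta_i}\mid i\in I\}\subseteq\sqcup\{\overline{\Theta_i}\mid i\in I\}$, and extensivity (Proposition~\ref{prop:closure-operator}, item~1) yields $\Theta_i\subseteq\overline{\Theta_i}$ for each $i$, so $\cup\{\Theta_i\mid i\in I\}\subseteq\cup\{\overline{\Theta_i}\mid i\in I\}\subseteq\sqcup\{\overline{\Theta_i}\mid i\in I\}$. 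The minimality clause of Proposition~\ref{prop:closure-simple}, item~1, then finishes the inclusion.

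The only subtle point, and the one I expect to be the main (minor) obstacle, is making sure that in the first inclusion we may indeed invoke the lub closedness of $\overline{\cup\{\Theta_j\mid j\in I\}}$ on the set $\{\theta_i\mid i\in I\}$; this is legitimate precisely because membership in the family lub $\sqcup\{\overline{\Theta_i}\mid i\in I\}$ means, by Definition~\ref{dfn:sqcup-Theta}, that the chosen family $\{\theta_i\mid i\in I\}$ admits an upper bound, hence (by Proposition~\ref{prop:basic-ub}, item~2) its lub exists. Once this is observed, the whole argument reduces to routine applications of the results of Section~\ref{sec:lub-closure}.
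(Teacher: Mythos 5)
Your proof is correct and follows essentially the same route as the paper: one inclusion via the minimality of the closure applied to the lub-closed candidate $\sqcup\{\overline{\Theta_i}\mid i\in I\}$ (Proposition~\ref{prop:lub-closure}, item~5, plus extensivity), and the other via monotonicity of $\overline{\,\cdot\,}$. The only cosmetic difference is that for the latter direction you unfold the argument elementwise, whereas the paper cites item~4 of Proposition~\ref{prop:lub-closure} directly; the content is identical.
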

\begin{proof}
Since $\overline{\Theta_i}$ is lub closed for any $i\in I$,
\textit{5.} in Proposition \ref{prop:lub-closure} implies that
$\sqcup\{\overline{\Theta_i}\mid i\in I\}$ is lub closed and
$\cup\{\overline{\Theta_i} \mid i \in I\} \subseteq
\sqcup\{\overline{\Theta_i}\mid i\in I\}$.  Since \textit{1.}
in Proposition \ref{prop:closure-operator} implies
$\Theta_i\subseteq\overline{\Theta_i}$ for each $i\in I$ and since
$\overline{\cup\{\Theta_i \mid i \in I\}}$ is the smallest lub closed
set including $\cup\{\Theta_i \mid i \in I\}$ (\textit{1.} in
Proposition \ref{prop:closure-simple}), the inclusion
$\overline{\cup\{\Theta_i\mid i \in I\}} \subseteq 
\sqcup\{\overline{\Theta_i}\mid i\in I\}$ holds.  Conversely,
\textit{2.} in Proposition \ref{prop:closure-operator} implies that
$\overline{\Theta_i}\subseteq\overline{\cup\{\Theta_i\mid i \in I\}}$
for any $i\in I$, so  
$\sqcup\{\overline{\Theta_i}\mid i\in I\} \subseteq
\overline{\cup\{\Theta_i\mid i \in I\}}$ holds by \textit{4.} in
Proposition \ref{prop:lub-closure}.
\end{proof}

\begin{corollary}
\label{cor:theta-Theta}
For any $\theta\in\XV$ and
$\Theta\subseteq\XV$, equality
$\overline{\{\theta\}\cup\Theta}=\{\bot,\theta\}\sqcup\overline{\Theta}$ holds.
\end{corollary}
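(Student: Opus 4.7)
The plan is to derive this corollary as a direct specialization of Proposition \ref{prop:overline-union} combined with the explicit computation of the lub closure of a singleton from Proposition \ref{prop:closure-simple}.

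First I would instantiate Proposition \ref{prop:overline-union} to the two-element family indexed by $I=\{1,2\}$, taking $\Theta_1 = \{\theta\}$ and $\Theta_2 = \Theta$. This immediately yields
$$\overline{\{\theta\}\cup\Theta} \;=\; \overline{\{\theta\}} \sqcup \overline{\Theta}.$$
Then I would invoke item (3) of Proposition \ref{prop:closure-simple}, which gives $\overline{\{\theta\}} = \{\bot,\theta\}$, and substitute this into the equation above to obtain exactly the desired identity $\overline{\{\theta\}\cup\Theta}=\{\bot,\theta\}\sqcup\overline{\Theta}$.

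There is essentially no obstacle here: the work has already been done in establishing Proposition \ref{prop:overline-union} (which relates closure of a union to the lub of closures) and in computing $\overline{\{\theta\}}$ explicitly. The only thing worth a brief sanity check is that the infix notation $\{\bot,\theta\}\sqcup\overline{\Theta}$ matches $\sqcup\{\overline{\Theta_i}\mid i\in\{1,2\}\}$ in the sense of Definition \ref{dfn:sqcup-Theta}, which is precisely the convention fixed right after that definition for finite index sets. Thus the corollary follows in two lines.
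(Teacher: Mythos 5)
Your proof is correct and is exactly the paper's argument: apply Proposition \ref{prop:overline-union} to the two-set family $\{\theta\}$ and $\Theta$, then substitute $\overline{\{\theta\}}=\{\bot,\theta\}$ from item (3) of Proposition \ref{prop:closure-simple}. Nothing further is needed.
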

\begin{proof}
$\overline{\{\theta\}\cup\Theta}=\overline{\{\theta\}}\sqcup\overline{\Theta}$
by Proposition \ref{prop:overline-union}, and
$\overline{\{\theta\}}=\{\bot,\theta\}$ by \textit{3.} in Proposition~\ref{prop:closure-simple}.
\end{proof}

\begin{corollary}
\label{cor:finite}
If $\Theta\subseteq\XV$ is finite then 
$\overline{\Theta}$ is also finite.
\end{corollary}
\begin{proof}
Suppose that $\Theta=\{\theta_1,\theta_2,\dots,\theta_n\}$ for some
$n\geq 0$.  Iteratively applying Corollary
\ref{cor:theta-Theta},
$\overline\Theta=\{\bot,\theta_1\}\sqcup\{\bot,\theta_2\}\sqcup\cdots\{\bot,\theta_n\}$;
in obtaining that, we used \textit{2.} in Proposition
\ref{prop:closure-simple} and \textit{1.} in Corollary \ref{cor:lub-family}.
The result follows now by \textit{6.} in Proposition \ref{prop:lub-closure}.
\end{proof}

\section{Parametric Traces and Properties}
\label{sec:parametric-trace-property}

Here we introduce the notions of event, trace and property, first
non-parametric and then parametric.  Traces are sequences of
events.  Parametric events can carry data-values as instances of
parameters.  Parametric traces are traces over parametric events.
Properties are trace classifiers, that is, mappings partitioning the
space of traces into categories (violating traces, validating traces,
don't know traces, etc.).  Parametric properties are parametric trace
classifiers and provide, for each parameter instance, the category to
which the trace slice corresponding to that parameter instance
belongs.  Trace slicing is defined as a reduct operation that forgets
all the events that are unrelated to the given parameter instance.

\subsection{The Non-Parametric Case}
\label{sec:non-param-case}
We next introduce non-parametric events, traces and properties,
which will serve as an infrastructure for their parametric variants in
Section~\ref{sec:param-case}.

\begin{definition}
\label{dfn:nonparametric-events}
Let $\cal E$ be a set of (non-parametric) events, called
\textbf{base events} or simply \textbf{events}.
An \textbf{$\cal E$-trace}, or simply a (non-parametric)
\textbf{trace} when $\cal E$ is understood or not
important, is any finite sequence of events in $\cal E$, that is, an
element in ${\cal E}^*$.  If event $e\in{\cal E}$ appears in trace
$w\in{\cal E}^*$ then we write $e \in w$.
\end{definition}
Our parametric trace slicing and monitoring techniques in Sections
\ref{sec:trace-slicing} and \ref{sec:trace-monitoring} can be
easily adapted to also work with infinite traces.  Since infinite
versus finite traces is not an important aspect of the work reported
here, we keep the presentation free of unnecessary technical
complications and consider only finite traces.

\vspace*{1ex}

\begin{example}
\textit{(part 1 of simple running example)} Like in Section~\ref{sec:acquire-release},
consider a certain resource (e.g., a synchronization object) that can be acquired and
released during the lifetime of a given procedure (between its begin and end).  Then 
the set of events ${\cal E}$ is
$\{\textsf{\footnotesize acquire},\textsf{\footnotesize release},\textsf{\footnotesize begin},\textsf{\footnotesize end}\}$
and execution traces corresponding to this resource are sequences
$\textsf{\footnotesize begin}\ \textsf{\footnotesize acquire}\ \textsf{\footnotesize acquire}\
\textsf{\footnotesize release}\ \textsf{\footnotesize end}\ \textsf{\footnotesize begin}\ \textsf{\footnotesize end}$,
$\textsf{\footnotesize begin}\ \textsf{\footnotesize acquire}\ \textsf{\footnotesize acquire}$,
$\textsf{\footnotesize begin}\ \textsf{\footnotesize acquire}\ \textsf{\footnotesize release}\
\textsf{\footnotesize acquire}\ \textsf{\footnotesize end}$, etc.
For the time being, there are no ``good'' or ``bad'' execution traces.
\end{example}

There is a plethora of formalisms to specify trace requirements.
Many of these result in specifying at least two types of traces:
those {\em validating} the specification (i.e, correct traces), and
those {\em violating} the specification (i.e., incorrect traces).

\vspace*{1ex}

\begin{example}
\textit{(part 2)}
Consider a regular expression specification,
$$(\textsf{\footnotesize begin} (\epsilon\ |\ 
     (\textsf{\footnotesize acquire} 
       (\textsf{\footnotesize acquire} \ |\  \textsf{\footnotesize release})^* 
     \textsf{\footnotesize release})) \textsf{\footnotesize end})^*$$
stating that the procedure can
(non-recursively) take place multiple times and, if the resource
is acquired during the procedure then it is released by the end of the
procedure.  Assume that the resource can be acquired and released
multiple times, with the effect of acquiring and respectively
releasing it precisely once.  The validating (or matching)
traces for this property are those satisfying the pattern, e.g.,
$\textsf{\footnotesize begin}\ \textsf{\footnotesize acquire}\ \textsf{\footnotesize acquire}\
\textsf{\footnotesize release}\ \textsf{\footnotesize end}\ \textsf{\footnotesize begin}\ \textsf{\footnotesize end}$.
At first sight, one may say that all the other traces are violating (or failing)
traces, because they are not in the language of the regular expression.
However, there are two interesting types of such ``failing'' traces: ones
which may still lead to a matching trace provided the right
events will be received in the future, e.g.,
$\textsf{\footnotesize begin}\ \textsf{\footnotesize acquire}\ \textsf{\footnotesize acquire}$, and
ones which have no chance of becoming a matching trace, e.g.,
$\textsf{\footnotesize begin}\ \textsf{\footnotesize acquire}\ \textsf{\footnotesize release}\
\textsf{\footnotesize acquire}\ \textsf{\footnotesize end}$.
\end{example}

In general, traces are not enforced to correspond to terminated
programs (this is particularly useful in monitoring); if one wants to
enforce traces to correspond to terminated programs, then one can have
the system generate a special \textsf{\footnotesize end-of-trace} event and have
the property specification require that event at the end of each trace.
%JavaMOP ({\tt http://javamop.org}), e.g., now supports that.

Therefore, a trace property may partition the space of traces into
more than two categories.  For some specification formalisms, for
example ones based on fuzzy logics or multiple truth values, the set
of traces may be split into more than three categories, even into a
continuous space of categories.  Section~\ref{sec:success-ratio}
showed an example of a property where the space of trace categories
was the set of rational numbers between 0 and 1.

\begin{definition}
\label{dfn:nonparametric-property}
An \textbf{$\cal E$-property} $P$, or simply a
(base or non-parametric) \textbf{property}, is a function
$P:{\cal E}^*\rightarrow {\cal C}$ partitioning the set of traces into
(verdict) categories $\cal C$.
\end{definition}

It is common, though not enforced, that the set of property verdict categories
$\cal C$ in Definition~\ref{dfn:nonparametric-property}  includes
\textsf{\footnotesize validating} (or similar), \textsf{\footnotesize violating} (or similar), and
\textsf{\footnotesize don't know}  (or \textsf{\footnotesize ?}) categories.  In general,
$\cal C$ can be any set, finite or infinite.

We believe that the definition of non-parametric trace property above
is general enough that it can easily accommodate any particular
specification formalism, such as ones based on linear temporal logics,
regular expressions, context-free grammars, etc.  All one needs to do
in order to instantiate the general results in this paper for a
particular specification formalism is to decide upon the desired
categories in which traces are intended to be classified, and then
define the property associated to a specification accordingly.

For example, if the specification formalism of choice is that of regular
expressions over $\cal E$ and one is interested in classifying traces
in three categories as in our example above, then one can pick
$\cal C$ to be the set 
$\{\textsf{\footnotesize match},\textsf{\footnotesize fail},\textsf{\footnotesize don't know}\}$
and, for a given regular expression $E$, define its associated
property $P_E : {\cal E}^* \rightarrow {\cal C}$ as follows:
$P_E(w)=\textsf{\footnotesize match}$ iff $w$ is in the language of $E$,
$P_E(w)=\textsf{\footnotesize fail}$ iff there is no $w'\in{\cal E}^*$ such
that $w\,w'$ is in the language of $E$, and
$P_E(w)=\textsf{\footnotesize don't know}$ otherwise; this is the
monitoring semantics of regular expressions in the JavaMOP and RV systems
\cite{meredith-jin-griffith-chen-rosu-2010-jsttt,chen-rosu-2007-oopsla,meredith-rosu-2010-rv}.

Other semantic choices are possible even for the simple case of
regular expressions; for example, one may choose $\cal C$ to
be the set $\{\textsf{\footnotesize match},\textsf{\footnotesize don't care}\}$ and
define $P_E(w)=\textsf{\footnotesize match}$ iff $w$ is in the language of $E$,
and $P_E(w)=\textsf{\footnotesize don't care}$ otherwise;
this is the semantics of regular expressions in
Tracematches \cite{tracematches-oopsla}, where, depending upon how one
writes the regular expression, matching can mean either a
violation or a validation of the desired property.

Similarly, one can have various verdict categories for linear temporal
logic (LTL).  For example, one can report \textsf{\footnotesize violation} when a bad prefix is
encountered, can report \textsf{\footnotesize validation} when a good prefix is encountered,
and can report \textsf{\footnotesize inconclusive} when neither of the above; this is the
current monitoring semantics of monitoring LTL in JavaMOP and RV
\cite{meredith-jin-griffith-chen-rosu-2010-jsttt,meredith-rosu-2010-rv}.  Semantical and algorithmic aspects
regarding LTL monitoring can be found, e.g., in
\cite{kupferman-vardi-2001,rosu-havelund-2005-jase,stolz-2006-rv,bauer-leucker-schallhart-2010-tosem}.

In some
applications, one may not be interested in certain categories of
traces, such as in those classified as \textsf{\footnotesize don't know},
\textsf{\footnotesize don't care}, or \textsf{\footnotesize inconclusive}; if that is the case, then those
applications can
simply ignore these, like Tracematches, JavaMOP and RV do.  It may be
worth making it explicit that in this paper we do not attempt to
propose or promote any particular formalism for specifying properties
about execution traces.  Instead, our approach is to define properties
as generally as possible to capture the various specification formalisms
that we are aware of as special cases, and then to develop our
subsequent techniques to work with such general properties.
We believe that our definition of property in 
Definition~\ref{dfn:nonparametric-property} is general enough to
allow us to claim that our results are specification-formalism-independent.

An additional benefit of defining properties so generally, as mappings
from traces to categories, is that parametric properties, in spite of
their much more general flavor, are also properties (but, obviously,
over different traces and over different categories).

\subsection{The Parametric Case}
\label{sec:param-case}

%Let us next define parametric events, traces and properties.
Events often carry concrete data instantiating parameters.

\vspace*{1ex}

\begin{example}
\textit{(part 3)}
In our running example, events \textsf{\footnotesize acquire} and
\textsf{\footnotesize release} are parametric in the resource being acquired
or released; if $r$ is the name of the generic resource parameter
and $r_1$ and $r_2$ are two concrete resources, then parametric
acquire/release events have the form
$\textsf{\footnotesize acquire}\langle r\mapsto r_1\rangle$,
$\textsf{\footnotesize release}\langle r\mapsto r_2\rangle$, etc.
Not all events need carry instances for all parameters; e.g., the
begin/end parametric events have the form
$\textsf{\footnotesize begin}\langle \bot\rangle$ and
$\textsf{\footnotesize end}\langle \bot\rangle$,
where $\bot$, the partial map undefined everywhere, instantiates no
parameter.
\end{example}

Recall from Definition \ref{dfn:partial-functions} that
$\totalf{A}{B}$ and $\partialf{A}{B}$ denote the sets of total
and, respectively, partial functions from $A$ to $B$.

\begin{definition}
\label{dfn:parametric-trace}
\textbf{(Parametric events and traces).}
Let $X$ be a set of \textbf{parameters} and let $V$ be a set of
corresponding \textbf{parameter values}.  If $\cal E$ is a set of base
events like in Definition~\ref{dfn:nonparametric-events}, then let
$\paramevents$ denote the set of corresponding
\textbf{parametric events} $e\langle\theta\rangle$, where
$e$ is a base event in $\cal E$ and $\theta$ is a partial function in
$\XV$.  A \textbf{parametric trace} is a trace with events
in $\paramevents$, that is, a word in $\paramevents^*$.
\end{definition}

Therefore, a parametric event is an event carrying values for zero,
one, several or even all the parameters, and a parametric trace is a
finite sequence of parametric events.  In practice, the number of
values carried by an event is finite; however, we do not need to
enforce this restriction in our theoretical developments.  Also, in
practice the parameters may be typed, in which case the set of their
corresponding values is given by their type.  To simplify writing, we
occasionally assume the set of parameter values $V$ implicit.

\vspace*{1ex}

\begin{example}
\textit{(part 4)}
A parametric trace for our running example can be the following:
$
\textsf{\footnotesize begin}\langle\!\bot\!\rangle\,
\textsf{\footnotesize acquire}\langle\theta_1\rangle\,
\textsf{\footnotesize acquire}\langle\theta_2\rangle\,
\textsf{\footnotesize acquire}\langle\theta_1\rangle\,
\textsf{\footnotesize release}\langle\theta_1\rangle\,
\textsf{\footnotesize end}\langle\!\bot\!\rangle\, 
\textsf{\footnotesize begin}\langle\!\bot\!\rangle\, 
\textsf{\footnotesize acquire}\langle\theta_2\rangle\, 
\textsf{\footnotesize release}\langle\theta_2\rangle\, 
\textsf{\footnotesize end}\langle\!\bot\!\rangle,
$
where $\theta_1$ maps $r$ to $r_1$ and $\theta_2$ maps $r$ to $r_2$.
To simplify writing, we take the freedom to only list the parameter
instance values when writing parameter instances, that is,
$\langle r_1 \rangle$ instead of 
$\langle r \mapsto r_1 \rangle$, or $\tau\!\!\!\!\upharpoonright_{r_2}$
instead of $\tau\!\!\!\!\upharpoonright_{r \mapsto r_2}$, etc.  
This notation is formally introduced in the next section, as Notation
\ref{notation:compact}.
With
this notation, the above trace becomes
$
\textsf{\footnotesize begin}\langle\rangle\ 
\textsf{\footnotesize acquire}\langle r_1\rangle\
\textsf{\footnotesize acquire}\langle r_2\rangle\ 
\textsf{\footnotesize acquire}\langle r_1\rangle\ 
\textsf{\footnotesize release}\langle r_1\rangle\
\textsf{\footnotesize end}\langle\rangle\ 
\textsf{\footnotesize begin}\langle\rangle\ 
\textsf{\footnotesize acquire}\langle r_2\rangle\
\textsf{\footnotesize release}\langle r_2\rangle\ 
\textsf{\footnotesize end}\langle\rangle.
$
This trace involves two resources, $r_1$ and $r_2$, and it really
consists of \textit{two trace slices}, one for each resource, merged
together.  The begin and end events belong to both trace
slices.  The slice corresponding to $\theta_1$ is
$\textsf{\footnotesize begin}\ 
\textsf{\footnotesize acquire}\ 
\textsf{\footnotesize acquire}\ 
\textsf{\footnotesize release}\ 
\textsf{\footnotesize end}\
\textsf{\footnotesize begin}\ 
\textsf{\footnotesize end}
$, while the one for $\theta_2$ is
$\textsf{\footnotesize begin}\,
\textsf{\footnotesize acquire}\, 
\textsf{\footnotesize end}\, 
\textsf{\footnotesize begin}\, 
\textsf{\footnotesize acquire}\, 
\textsf{\footnotesize release}\, 
\textsf{\footnotesize end}
$.
\end{example}

Recall from Definition \ref{dfn:partial-functions} that two partial
maps of same source and target are compatible when they do not
disagree on any of the elements on which they are both defined, and
that one is less informative than another, written
$\theta\sqsubseteq\theta'$, when they are compatible and the domain of
the former in included in the domain of the latter.  With the notation
in the example above we have, for our running example, that  $\langle
\rangle$ is compatible with $\langle r_1 \rangle$ and with $\langle
r_2 \rangle$, but $\langle r_1 \rangle$ and $\langle r_2 \rangle$ are
not compatible; moreover, $\langle \rangle \sqsubseteq \langle r_1
\rangle$ and $\langle \rangle \sqsubseteq \langle r_2 \rangle$.

\begin{definition}
\label{dfn:trace-slicing}
\textbf{(Trace slicing)}
Given parametric trace $\tau\in\paramevents^*$
and partial function $\theta$ in $\XV$, we let the
\textbf{$\theta$-trace slice}
$\tau\!\!\upharpoonright_\theta\,\in{\cal E}^*$ be the
non-parametric trace in ${\cal E}^*$ defined as:
\begin{iteMize}{$\bullet$}
\item $\epsilon\!\!\upharpoonright_\theta = \epsilon$, where $\epsilon$
  is the empty trace/word, and
\item $(\tau\,e\langle\theta'\rangle)\!\!\upharpoonright_\theta =
\left\{
\begin{array}{ll}
(\tau\!\!\upharpoonright_\theta)\,e
  & \mbox{when } \theta' \sqsubseteq \theta \\
\tau\!\!\upharpoonright_\theta
  & \mbox{when } \theta' \not\sqsubseteq \theta
\end{array}
\right.
$
\end{iteMize}
\end{definition}

\noindent
Therefore, the trace slice $\tau\!\!\upharpoonright_\theta$ first
filters out all the parametric events that are not relevant for the
instance $\theta$, i.e., which contain instances of parameters that
$\theta$ does not care about, and then, for the remaining events
relevant to $\theta$, it forgets the parameters so that the trace
can be checked against base, non-parametric properties.
It is crucial to discard parameter instances that are not relevant to
$\theta$ during the slicing, including those more informative than
$\theta$, in order to achieve a correct slice for $\theta$: in our
running example, the trace slice for $\langle \rangle$ should contain
only \textsf{\footnotesize begin} and \textsf{\footnotesize end} events and no \textsf{\footnotesize acquire} or
\textsf{\footnotesize release}.  Otherwise, the \textsf{\footnotesize acquire} and \textsf{\footnotesize release} of
different resources will interfere with each other in the trace slice
for $\langle \rangle$.

One should not confuse extracting/abstracting traces from executions
with slicing traces.  The former determines the events to include in
the trace, as well as parameter instances carried by events, while the
latter dispatches each event in the given trace to corresponding trace
slices according to the event's parameter instance.  Different
abstractions may result in different parametric traces from the same
execution and thus may lead to different trace slices for the same
parameter instance $\theta$.  For the (map, collection, iterator)
example in Section \ref{sec:intro}, $X = \{m, c, i\}$ and an
execution may generate the following parametric trace:
$\textsf{\footnotesize createColl}\langle m_1, c_1 \rangle$  
$\textsf{\footnotesize createIter}\langle c_1, i_1 \rangle$
$\textsf{\footnotesize next}\langle i_1 \rangle$ $\textsf{\footnotesize updateMap}\langle m_1 \rangle$.
The trace slice for $\langle m_1 \rangle$ is \textsf{\footnotesize updateMap} for this
parametric trace.  Now suppose that we are only interested in
operations on maps.  Then $X = \{ m \}$ and the trace abstracted from
the execution generating the above trace is
$\textsf{\footnotesize createColl}\langle m_1 \rangle$
$\textsf{\footnotesize updateMap}\langle m_1 \rangle$, in which events and parameter
bindings irrelevant to $m$ are removed.
Then the trace slice for $\langle m_1 \rangle$ is
\textsf{\footnotesize createColl}\ \textsf{\footnotesize updateMap}.
In this paper we focus only on trace slicing; more discussion about
trace abstraction can be found in \cite{chen-rosu-2008-tr-a}.

Specifying properties over parametric traces is rather challenging,
because one may want to specify a property for one generic parameter
instance and then say ``and so on for all the other instances''.  In
other words, one may want to specify a sort of a universally quantified
property over base events, but, unfortunately, the underlying specification
formalism may not allow such quantification over data; for
example, none of the conventional formalisms to specify properties on
linear traces listed above (i.e, linear temporal logics, regular
expressions, context-free grammars) or mentioned in the rest of the
paper has data quantification by default.  We say ``by default''
because, in some cases, there are attempts to add data quantification;
for example, \cite{manna-pnueli-1992} investigates the implications and
the necessary restrictions resulting from adding quantifiers to LTL, and
\cite{stolz-2006-rv} investigates a finite-trace variant of parametric LTL
together with a translation to parametric alternating automata.

\begin{definition}
\label{dfn:parametric-property}
Let $X$ be a set of parameters together with their corresponding
parameter values $V$, like in Definition \ref{dfn:parametric-trace},
and let $P:{\cal E}^*\rightarrow{\cal C}$ be a non-parametric property
like in Definition \ref{dfn:nonparametric-property}.  Then we define
the \textbf{parametric property} $\parametric{X}{P}$ as the
property (over traces $\paramevents^*$ and
categories $[\XV \rightarrow {\cal C}]$)
$$
\parametric{X}{P} \ \ : \ \ \paramevents^*
  \rightarrow [\XV \rightarrow {\cal C}]
$$
defined as
$$(\parametric{X}{P})(\tau)(\theta) =
P(\tau\!\!\upharpoonright_\theta)$$
for any $\tau\in\paramevents^*$
and any $\theta\in\XV$.
If $X=\{x_1,...,x_n\}$ we may write $\parametric{x_1,...,x_n}{P}$
instead of $(\parametric{\{x_1,...,x_n\}}{P}$.  Also, if $P_\varphi$
is defined using a pattern or formula $\varphi$ in some particular
trace specification formalism, we take the liberty to write
$\parametric{X}{\varphi}$ instead of $\parametric{X}{P_\varphi}$.
\end{definition}

Parametric properties $\parametric{X}{P}$ over base properties
$P:{\cal E}^*\rightarrow{\cal C}$ are therefore properties taking
traces in $\paramevents^*$ to categories 
$\totalf{\XV}{\cal C}$, i.e., to function domains from parameter
instances to base property categories.  $\parametric{X}{P}$
is defined as if many instances of $P$ are observed at the
same time on the parametric trace, one property instance for each
parameter instance, each property instance concerned with its events
only, dropping the unrelated ones.

\vspace*{1ex}

\begin{example}
\textit{(part 5)}
Let $P$ be the non-parametric property specified by the regular
expression in the second part of our running example above (using the
mapping of regular expressions to properties discussed in the second
part of our running example and after Definition
\ref{dfn:nonparametric-property} -- i.e., the JavaMOP/RV semantic
approach to parametric monitoring \cite{chen-rosu-2007-oopsla}).
Since we want $P$ to hold for any
resource instance, we define the following parametric property:
$$\parametric{r}{(\textsf{\footnotesize begin}\ (\epsilon \ |\  
     (\textsf{\footnotesize acquire}\ 
       (\textsf{\footnotesize acquire} \ |\ \textsf{\footnotesize release})^*\ 
     \textsf{\footnotesize release}))\ \textsf{\footnotesize end})^*}.$$
If $\tau$ is the parametric trace and $\theta_1$ and $\theta_2$ are the
parameter instances in the fourth part of our running example, then the
semantics of the parametric property above on trace $\tau$ is
\textsf{\footnotesize validating} for parameter instance $\theta_1$ and 
\textsf{\footnotesize violating} for parameter instance $\theta_2$.
\end{example}

\section{Slicing With Less}
\label{sec:parametric-trace}

Consider a parametric trace $\tau$ in $\paramevents^*$ and a parameter
instance $\theta$.  Since there is no apriori correlation between the
parameters being instantiated by $\theta$ and those by the various
parametric events in $\tau$, it may very well be the case that
$\theta$ contains parameter instances that never appear in $\tau$.
In this section we show that slicing $\tau$ by $\theta$ is the same as
slicing it by a ``smaller'' parameter instance than $\theta$, namely
one containing only those parameters instantiated by $\theta$ that
also appear as instances of some parameters in some events in $\tau$.
Formally, this smaller parameter instance is the largest partial map
smaller than $\theta$ in the lub closure of all the parameter
instances of events in $\tau$; this partial function is proved to
indeed exist.
We first formalize a notation used informally so far in this paper:

\begin{notation}
\label{notation:compact}
When the domain of $\theta$ is finite, which is always the case in our
examples in this paper and probably will also be the case in most
practical uses of our trace slicing algorithm, and when the
corresponding parameter names are clear from context, we take the
liberty to write partial functions compactly by simply listing their
parameter values;  for example, we write a partial function $\theta$
with $\theta(a)=a_2$, $\theta(b)=b_1$ and $\theta(c)=c_1$ as the
sequence ``$a_2b_1c_1$''.  The function $\bot$ then corresponds to the
empty sequence.
\end{notation}

\begin{example}
Here is a parametric trace with events parametric in $\{a,b,c\}$,
where the parameters take values in the set $\{a_1,a_2,b_1,c_1\}$:
$$
\tau =
e_1\langle a_1 \rangle\ 
e_2\langle a_2 \rangle\ 
e_3\langle b_1 \rangle\
e_4\langle a_2 b_1 \rangle\ 
e_5\langle a_1 \rangle\ 
e_6\langle \rangle\ 
e_7\langle b_1 \rangle\ 
e_8\langle c_1 \rangle\ 
e_9\langle a_2 c_1 \rangle\ 
e_{10}\langle a_1 b_1 c_1 \rangle\ 
e_{11}\langle \rangle.
$$
It may be the case that some of the base events appearing in a trace
are the same; for example, $e_1$ may be equal to $e_2$ and to $e_5$.
It is actually frequently the case in practice (at least in PQL
\cite{pql-oopsla}, Tracematches \cite{tracematches-oopsla}, JavaMOP
\cite{chen-rosu-2007-oopsla} and RV \cite{meredith-rosu-2010-rv}) that parametric events are specified
apriori with a given (sub)set of parameters, so that each event in
$\cal E$ is always instantiated with partial functions over the same
domain, i.e., if $e\langle\theta\rangle$ and
$e\langle\theta'\rangle$ appear in a parametric trace, then
$\Dom(\theta)=\Dom(\theta')$.  While this restriction is reasonable and
sometimes useful, our trace slicing and monitoring algorithms in this
paper do not need it.
\end{example}

Recall from Definition \ref{dfn:trace-slicing} that the trace slice
$\tau\!\!\upharpoonright_\theta$ keeps from $\tau$ only those events
that are relevant for $\theta$ and drops their parameters.

\vspace*{1ex}

\begin{example}
Consider again the sample parametric trace above with events
parametric in $\{a,b,c\}$:
$
\tau =
e_1\langle a_1 \rangle\ 
e_2\langle a_2 \rangle\ 
e_3\langle b_1 \rangle\ 
e_4\langle a_2 b_1 \rangle\ 
e_5\langle a_1 \rangle\ 
e_6\langle \rangle\ 
e_7\langle b_1 \rangle\ 
e_8\langle c_1 \rangle\ 
e_9\langle a_2 c_1 \rangle\ 
e_{10}\langle a_1 b_1 c_1 \rangle\ 
e_{11}\langle \rangle.
$
Several slices of $\tau$ are listed below:
$$
\begin{array}{lcl}
\tau\!\!\upharpoonright_{a_1} & = & e_1 e_5 e_6 e_{11} \\
\tau\!\!\upharpoonright_{a_2} & = & e_2 e_6 e_{11} \\
\tau\!\!\upharpoonright_{a_1b_1} & = & e_1 e_3 e_5 e_6 e_7 e_{11} \\
\tau\!\!\upharpoonright_{a_2b_1} & = & e_2 e_3 e_4 e_6 e_7 e_{11} \\
\tau\!\!\upharpoonright_{\ } & = & e_6 e_{11} \\
\tau\!\!\upharpoonright_{a_1b_1c_1} & = & e_1 e_3 e_5 e_6 e_7 e_8 e_{10} e_{11} \\
\tau\!\!\upharpoonright_{a_2b_1c_1} & = & e_2 e_3 e_4 e_6 e_7 e_8 e_9 e_{11} \\
\tau\!\!\upharpoonright_{a_1b_2c_1} & = & e_1 e_5 e_6 e_8 e_{11} \\
\tau\!\!\upharpoonright_{b_2c_2} & = & e_6 e_{11} \\
\end{array}
$$
In order for the partial functions above to make sense, we assumed
that the set $V$ in which parameters $X=\{a,b,c\}$ take values
includes $\{a_1,a_2,b_1,c_1\}$.
\end{example}

\begin{definition}
\label{dfn:Theta-tau}
Given parametric trace $\tau\in\paramevents^*$,
we let $\Theta_\tau$ denote the lub closure of all the
parameter instances appearing in events in $\tau$, that is,
$\Theta_\tau=\overline{\{\theta\mid \theta\in\XV,\ e\langle\theta\rangle\in\tau\}}$.
\end{definition}

\begin{proposition}
\label{prop:Theta-tau-closed}
For any parametric trace $\tau\in\paramevents^*$, the set
$\Theta_\tau$ is a finite lub closed set.
\end{proposition}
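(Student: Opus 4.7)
The plan is to decompose the proposition into its two claims and dispatch each by appealing directly to results already established in Section~\ref{sec:math}. The base set
\[
S_\tau \;=\; \{\theta \mid \theta\in\XV,\ e\langle\theta\rangle \in \tau\}
\]
is the set of parameter instances carried by events of $\tau$, and $\Theta_\tau$ is, by definition, the lub closure $\overline{S_\tau}$.

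First I would handle lub closedness. Since $\Theta_\tau = \overline{S_\tau}$ by Definition~\ref{dfn:Theta-tau}, item~(1) of Proposition~\ref{prop:closure-simple} (which says that $\overline{\Theta}$ is the smallest lub closed set containing $\Theta$) immediately yields that $\Theta_\tau$ is lub closed. Nothing specific to the trace $\tau$ is used here beyond the definition.

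Next I would argue finiteness. The key observation is that a parametric trace $\tau \in \paramevents^*$ is by Definition~\ref{dfn:parametric-trace} a \emph{finite} word of parametric events. Hence $\tau$ contains only finitely many occurrences of parametric events, so the set $S_\tau$ of partial functions appearing as parameter instances in $\tau$ is finite (at most $|\tau|$ elements). Then Corollary~\ref{cor:finite} applied to $S_\tau$ gives that $\overline{S_\tau} = \Theta_\tau$ is finite as well.

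There is essentially no obstacle here: the real work was done in Section~\ref{sec:math}, specifically in Proposition~\ref{prop:closure-simple}(1) and in Corollary~\ref{cor:finite} (which in turn relied on Proposition~\ref{prop:lub-closure}(6) for finiteness of lubs of finite families of finite sets). The proposition simply packages the observation that these two facts apply to the particular set $S_\tau$ determined by a parametric trace $\tau$, whose finiteness follows from the fact that traces are finite words. If anything subtle were to arise it would be the implicit reliance on traces being finite; if one later wished to extend the framework to infinite traces, this proposition (and many algorithms building on it) would need revisiting, but under the paper's conventions it is immediate.
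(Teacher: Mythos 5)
Your proof is correct and follows exactly the paper's argument: $\Theta_\tau$ is lub closed because it is defined as a lub closure, and it is finite because the trace $\tau$ is a finite word, so the underlying set of parameter instances is finite and Corollary~\ref{cor:finite} applies. You merely spell out the two steps in slightly more detail than the paper does.
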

\begin{proof}
$\Theta_\tau$ is already defined as a lub closed set; since $\tau$ is
finite, Corollary \ref{cor:finite} implies that $\Theta_\tau$ is
finite.
\end{proof}

\begin{proposition}
\label{prop:Theta-tau}
Given $\tau\,e\langle\theta\rangle\in\paramevents^*$,
the following equality holds:
$\Theta_{\tau\,e\langle\theta\rangle}=\{\bot,\theta\}\sqcup\Theta_\tau$.
\end{proposition}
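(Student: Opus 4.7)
The plan is to unfold the definition of $\Theta_{\tau\,e\langle\theta\rangle}$ from Definition \ref{dfn:Theta-tau}, observe that the set of parameter instances carried by events in $\tau\,e\langle\theta\rangle$ is just the set for $\tau$ augmented with $\theta$, and then reduce the statement to a direct application of Corollary \ref{cor:theta-Theta}.

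More concretely, first I would write
\[
\{\theta' \mid \theta' \in \XV,\ e'\langle\theta'\rangle \in \tau\,e\langle\theta\rangle\}
\;=\;
\{\theta\} \cup \{\theta' \mid \theta' \in \XV,\ e'\langle\theta'\rangle \in \tau\},
\]
which is immediate from the definition of the predicate $e'\langle\theta'\rangle\in\cdot$ on a sequence obtained by appending one parametric event. Taking lub closures of both sides and using Definition \ref{dfn:Theta-tau} on the right summand then yields
\[
\Theta_{\tau\,e\langle\theta\rangle}
\;=\;
\overline{\{\theta\} \cup \{\theta' \mid \theta'\in\XV,\ e'\langle\theta'\rangle\in\tau\}}.
\]

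Next, I would apply Corollary \ref{cor:theta-Theta} (which states $\overline{\{\theta\}\cup\Theta}=\{\bot,\theta\}\sqcup\overline{\Theta}$) with $\Theta=\{\theta' \mid e'\langle\theta'\rangle \in \tau\}$ to rewrite the right-hand side as $\{\bot,\theta\}\sqcup\overline{\{\theta' \mid e'\langle\theta'\rangle\in\tau\}}$, and then recognize the second factor as $\Theta_\tau$ by Definition \ref{dfn:Theta-tau} once more. This gives exactly $\{\bot,\theta\}\sqcup\Theta_\tau$, completing the proof.

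There is no real obstacle here — the statement is essentially an instance of Corollary \ref{cor:theta-Theta} once one spots the decomposition of the underlying set of parameter instances. The only point requiring a moment's care is the bookkeeping that $\{\theta\}$ really is the new contribution from appending $e\langle\theta\rangle$ (and in particular that the possibility of $\theta$ already occurring in $\tau$ does no harm, since the union is set-theoretic and the corollary applies unconditionally). No induction on $\tau$ is needed, and the finiteness / lub-closure side conditions are automatically handled inside Corollary \ref{cor:theta-Theta} via Proposition \ref{prop:closure-simple} and Proposition \ref{prop:overline-union}.
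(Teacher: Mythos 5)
Your proof is correct and takes essentially the same route as the paper's: unfold Definition~\ref{dfn:Theta-tau}, split off $\{\theta\}$ from the set of parameter instances carried by the extended trace, and conclude by Corollary~\ref{cor:theta-Theta}. The only (harmless) difference is that you apply the corollary directly to the raw set of instances, whereas the paper first replaces that set by its closure $\Theta_\tau$ and then appeals to Proposition~\ref{prop:Theta-tau-closed} to drop the extra overline --- your version simply skips that last step.
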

\begin{proof}
It follows by the following sequence of equalities:
$$
\begin{array}{lll}
\Theta_{\tau\,e\langle\theta\rangle}
 & = & 
  \overline{\{\theta'\mid \theta'\in\XV,\
    e'\langle\theta'\rangle\in\tau\,e\langle\theta\rangle\}} \\
 & = &
  \overline{\{\theta\} \cup \{\theta'\mid \theta'\in\XV,\
    e'\langle\theta'\rangle\in\tau\}} \\
 & = &
  \overline{\{\theta\} \cup \Theta_\tau} \\
 & = &
\{\bot,\theta\}\sqcup\overline{\Theta_\tau} \\
 & = &
\{\bot,\theta\}\sqcup\Theta_\tau.
\end{array}
$$
The first equality follows by Definition \ref{dfn:Theta-tau},
the second by separating the case
$e'\langle\theta'\rangle=e\langle\theta\rangle$,
the third again by Definition \ref{dfn:Theta-tau},
the fourth by Corollary \ref{cor:theta-Theta}, and the fifth
by Proposition \ref{prop:Theta-tau-closed}.
Therefore, $\Theta_{\tau\,e\langle\theta\rangle}$ is the
smallest lub closed set that contains $\theta$ and includes $\Theta_\tau$. 
\end{proof}

\begin{proposition}
\label{prop:tau-max}
Given $\tau\in\paramevents^*$
and $\theta\in\XV$, 
the equality
$\tau\!\!\upharpoonright_\theta = 
\tau\!\!\upharpoonright_{\max\, (\theta]_{\Theta_\tau}}$
holds.
\end{proposition}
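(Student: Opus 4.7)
The plan is to set $\theta^* = \max\,(\theta]_{\Theta_\tau}$ (which exists by Proposition~\ref{prop:max}(1), since $\Theta_\tau$ is lub closed by Proposition~\ref{prop:Theta-tau-closed}) and to show that $\tau\!\!\upharpoonright_\theta$ and $\tau\!\!\upharpoonright_{\theta^*}$ keep exactly the same events from $\tau$ in the same order, whence they are equal as non-parametric traces.

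Concretely, it suffices to prove the following key equivalence: for every parametric event $e\langle\theta'\rangle$ occurring in $\tau$, one has $\theta'\sqsubseteq\theta$ if and only if $\theta'\sqsubseteq\theta^*$. The ``if'' direction is immediate, since $\theta^*\sqsubseteq\theta$ by definition of $(\theta]_{\Theta_\tau}$, so $\theta'\sqsubseteq\theta^*\sqsubseteq\theta$. For the ``only if'' direction, observe that $\theta'$ is one of the parameter instances carried by an event of $\tau$, so $\theta'$ lies in the set $\{\theta''\mid e''\langle\theta''\rangle\in\tau\}$, and hence in its lub closure $\Theta_\tau$ by extensivity of $\overline{\cdot}$ (Proposition~\ref{prop:closure-operator}(1)). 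Combined with the assumption $\theta'\sqsubseteq\theta$, this places $\theta'$ in $(\theta]_{\Theta_\tau}$, and therefore $\theta'\sqsubseteq\max\,(\theta]_{\Theta_\tau}=\theta^*$.

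With the key equivalence in hand, I would conclude by a routine induction on the length of $\tau$, following the recursive definition of trace slicing in Definition~\ref{dfn:trace-slicing}: in the inductive step on $\tau\,e\langle\theta'\rangle$, both $(\tau\,e\langle\theta'\rangle)\!\!\upharpoonright_\theta$ and $(\tau\,e\langle\theta'\rangle)\!\!\upharpoonright_{\theta^*}$ either both append $e$ to $\tau\!\!\upharpoonright_\theta$ and $\tau\!\!\upharpoonright_{\theta^*}$ respectively (if $\theta'\sqsubseteq\theta$, equivalently $\theta'\sqsubseteq\theta^*$ by the claim applied to the last event), or both leave them unchanged; the induction hypothesis gives $\tau\!\!\upharpoonright_\theta=\tau\!\!\upharpoonright_{\theta^*}$ and the result follows. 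Alternatively, one can give a direct non-inductive argument by observing that the slicing operation is completely determined by which events to retain and that the key equivalence ensures identical choices for $\theta$ and for $\theta^*$.

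The only slightly subtle point, and the one worth flagging, is the mismatch between the $\Theta_\tau$ appearing in $\theta^*$ (fixed at the start) and the changing $\Theta$ one might accumulate during an induction on the prefixes of $\tau$; the equivalence must be stated using the full $\Theta_\tau$ of the whole trace, which is exactly what the definition and Proposition~\ref{prop:closure-operator}(1) give us. No finer analysis of lub closures is required beyond the extensivity property.
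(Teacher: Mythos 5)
Your proof is correct, and it reaches the conclusion by a slightly different decomposition than the paper's. The paper proves a \emph{generalized} statement --- $\tau\!\!\upharpoonright_\theta = \tau\!\!\upharpoonright_{\max\,(\theta]_{\Theta}}$ for any lub closed $\Theta$ with $\Theta_\tau\subseteq\Theta$ --- precisely so that the induction hypothesis on a prefix (whose own closure is smaller) can still be invoked with the same fixed $\Theta$; the ``iff'' on $\sqsubseteq$ is then re-derived at each inductive step for the last event's instance, using the fact that this instance lies in $\Theta$. You instead fix $\theta^* = \max\,(\theta]_{\Theta_\tau}$ once, for the full trace, and isolate the pointwise equivalence $\theta'\sqsubseteq\theta \iff \theta'\sqsubseteq\theta^*$ quantified over \emph{all} event instances of $\tau$ simultaneously; since both slicings then make identical retain/drop decisions at every position, the remaining induction is trivial (indeed dispensable). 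The underlying fact is the same in both arguments --- an event instance $\theta'$ with $\theta'\sqsubseteq\theta$ belongs to $\Theta_\tau$ by extensivity, hence to $(\theta]_{\Theta_\tau}$, hence sits below its maximum --- but your route buys a cleaner statement of the key lemma and avoids the strengthening of the induction hypothesis, while the paper's generalized formulation is the one it actually reuses later (e.g., in the correctness proof of algorithm $\A$, where the evolving $\Theta$ of the algorithm is compared against prefix closures). Your flagged subtlety about fixing $\Theta_\tau$ for the whole trace rather than letting it grow with the prefix is exactly the right point, and your resolution of it is sound.
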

\begin{proof}
We prove the following more general result:
\begin{quote}
``Let $\Theta\subseteq\XV$ be lub closed and let
$\theta\in\XV$;

then $\tau\!\!\upharpoonright_\theta = 
\tau\!\!\upharpoonright_{\max\, (\theta]_{\Theta}}$
for any $\tau\in\paramevents^*$ with
$\Theta_\tau\subseteq\Theta$.''
\end{quote}
First note that the statement above is well-formed because
$\max\,(\theta]_{\Theta}$ exists whenever $\Theta$ is 
lub closed (\textit{1.} in Proposition \ref{prop:max}), and
that it is indeed more general than the stated result: for the
given $\tau\in\paramevents^*$ and
$\theta\in\XV$, we pick $\Theta$ to be $\Theta_\tau$.
We prove the general result by induction on the {\em length} of $\tau$:

- If $|\tau|=0$ then $\tau=\epsilon$ and $\epsilon\!\!\!\upharpoonright_\theta = 
\epsilon\!\!\!\upharpoonright_{\max\, (\theta]_{\Theta}}=\epsilon$.

- Now suppose that $\tau\!\!\upharpoonright_\theta = 
\tau\!\!\upharpoonright_{\max\, (\theta]_{\Theta}}$
for any $\tau\in\paramevents^*$ with
$\Theta_\tau\subseteq\Theta$ and $|\tau|=n\geq 0$, and let us show
that $\tau'\!\!\upharpoonright_\theta = 
\tau'\!\!\upharpoonright_{\max\, (\theta]_{\Theta}}$
for any $\tau'\in\paramevents^*$ with
$\Theta_{\tau'}\subseteq\Theta$ and $|\tau'|=n+1$.  Pick such a
$\tau'$ and let $\tau'= \tau\,e\langle\theta'\rangle$
for a $\tau\in\paramevents^*$ with
$|\tau|=n$ and an
$e\langle\theta'\rangle\in\paramevents$.
Since $\Theta_{\tau'}\subseteq\Theta$, by \textit{6.} in Proposition
\ref{prop:sqcup-simple} and by Proposition \ref{prop:Theta-tau} it
follows that
$\Theta_\tau\subseteq\{\bot,\theta'\}\sqcup\Theta_\tau\subseteq\Theta$,
so the induction hypothesis implies $\tau\!\!\upharpoonright_\theta = 
\tau\!\!\upharpoonright_{\max\, (\theta]_{\Theta}}$.
The rest follows noticing that $\theta'\sqsubseteq\theta$ iff
$\theta'\sqsubseteq \max\, (\theta]_{\Theta}$, which is a consequence
of the definition of $\max\, (\theta]_{\Theta}$ because
$\theta'\in\{\bot,\theta'\}\subseteq 
 \{\bot,\theta'\}\sqcup\Theta_\tau\subseteq\Theta$
(again by \textit{6.} in Proposition
\ref{prop:sqcup-simple} and by Proposition \ref{prop:Theta-tau}).

Alternatively, one could have also done the proof above by induction
on $\tau$, not on its length, but the proof would be more involved, 
because one would need to prove that the domain over which the
property is universally quantified, namely
``any $\tau\in\paramevents^*$ with
$\Theta_\tau\subseteq\Theta$'' is inductively generated.  We therefore
preferred to choose a more elementary induction schema.
\end{proof}

\section{Algorithm for Online Parametric Trace Slicing}
\label{sec:trace-slicing}

Definition \ref{dfn:trace-slicing} illustrates a way to slice a
parametric trace for {\em given} parameter bindings.  However, it is
not suitable for online trace slicing, where the trace is observed
incrementally and no future knowledge is available, because we cannot
know all possible parameter instances $\theta$ apriori.
We next define an algorithm $\A$ that takes a parametric trace
$\tau\in\paramevents^*$ incrementally (i.e.,
event by event), and builds a partial function
$\T \in \partialf{\XV}{{\cal E}^*}$ of finite domain
that serves as a quick lookup table for all slices of $\tau$.  More
precisely, Theorem \ref{thm:trace-slicing} shows that, for any
$\theta\in\XV$, the trace slice
$\tau\!\!\upharpoonright_\theta$ is
$\T(\max\,(\theta]_{\Theta})$
after $\A$ processes $\tau$, where $\Theta=\Theta_\tau$
is the domain of $\T$, a finite lub closed set of
partial functions also calculated by $\A$ incrementally (see Definition~\ref{dfn:Theta-tau}
for $\Theta_\tau$). Therefore, assuming that $\A$ is run on trace $\tau$, all one
has to do in order to calculate a slice $\tau\!\!\upharpoonright_\theta$ for a given
$\theta\in\XV$ is to calculate $\max\,(\theta]_{\Theta}$
followed by a lookup into $\T$.  This way the trace $\tau$,
which can be very long, is processed/traversed only once, as it is being
generated, and appropriate data-structures are maintained by our
algorithm that allow for retrieval of slices for any parameter
instance $\theta$, without having to traverse the trace $\tau$ again,
as an algorithm blindly following the definition of trace slicing 
(Definition \ref{dfn:trace-slicing}) would do.

\newcommand{\mydots}{\parbox{0ex}{\vspace*{-1.5ex}\vdots}}
\newcommand{\hsp}{\mydots\hspace*{1ex}}
\newcommand{\myforall}{\textsf{\footnotesize foreach }}
\newcommand{\mydo}{\textsf{\footnotesize  do }}
\newcommand{\myendfor}{\textsf{\footnotesize endfor }}
\newcommand{\myif}{\textsf{\footnotesize if }}
\newcommand{\mythen}{\textsf{\footnotesize  then }}

\begin{figure}
\begin{center}
$
\begin{array}{l}
\textsf{\footnotesize Algorithm} \ \A \\
\textsf{\footnotesize Input:} \ \mbox{parametric trace }\tau\in
  \paramevents^*\\
\textsf{\footnotesize Output:} \ \mbox{map }\T\in
  \partialf{\XV}{{\cal E}^*}
  \mbox{ and set }\Theta\subseteq\XV\\
\\
1\ \ \T \leftarrow \bot;\ \T(\bot)\leftarrow\epsilon; \ \Theta \leftarrow \{\bot\} \\
2\ \ \myforall \textit{parametric event } e\langle{\theta}\rangle
\textit{ in order (first to last) in }\tau\ \mydo \\
%\hsp \myif (\theta]_\Delta = \emptyset \mythen \T\leftarrow e \\
3\ \ \hsp \myforall \ \theta' \in \{\theta\} \sqcup \Theta \ \mydo \\
4\ \ \hsp\hsp \T(\theta') \leftarrow\T(\max\,(\theta']_\Theta)\,e \\
5\ \ \hsp \myendfor \\
6\ \ \hsp \Theta \leftarrow \{\bot,\theta\}\sqcup\Theta \\
7\ \ \myendfor
\end{array}
$
\end{center}
\caption{\label{fig:A}Parametric trace slicing algorithm $\A$.}
\end{figure}

Figure \ref{fig:A} shows our trace slicing algorithm $\A$.  In
spite of $\A$'s small size, its proof of correctness is surprisingly
intricate, making use of almost all the mathematical machinery
developed so far in the paper.
The algorithm $\A$ on input $\tau$, written more succinctly $\A(\tau)$,
traverses $\tau$ from its first event to its last event and, for each
encountered event $e\langle\theta\rangle$, updates both its
data-structures, $\T$ and $\Theta$.  After processing each event, the
relationship between $\T$ and $\Theta$ is that the latter is
the domain of the former.
Line \textsf{\footnotesize 1} initializes the data-structures: $\T$ is
undefined everywhere (i.e., $\bot$) except for the
undefined-everywhere function $\bot$, where
$\T(\bot)=\epsilon$; as expected, $\Theta$ is then initialized
to the set $\{\bot\}$.
The code (lines \textsf{\footnotesize 3} to \textsf{\footnotesize 6}) inside the outer loop
(lines \textsf{\footnotesize 2} to \textsf{\footnotesize 7}) can be triggered when a new event is
received, as in most online runtime verification systems.  When a new
event is received, say $e\langle\theta\rangle$, the mapping
$\T$ is updated as follows: for each
$\theta'\in\XV$ that can be obtained by combining $\theta$
with the compatible partial functions in the domain of the current
$\T$, update $\T(\theta')$ by adding the
non-parametric event $e$ to the end of the slice corresponding to the
largest (i.e., most ``knowledgeable'') entry in the current table
$\T$ that is less informative or as informative as $\theta'$;
the $\Theta$ data-structure is then extended in line \textsf{\footnotesize 6}
(see Proposition~\ref{prop:Theta-tau} for why this way).

\vspace*{1ex}

\begin{example}
\begin{table*}
\begin{center}
\noindent
{\footnotesize
$
\begin{array}{@{}l@{}}
\begin{array}{|l|l|l|l|l|l|l|}
\hline
e_1\langle a_1 \rangle &
e_2\langle a_2 \rangle &
e_3\langle b_1 \rangle &
e_4\langle a_2 b_1 \rangle &
e_5\langle a_1 \rangle &
e_6\langle \rangle &
e_7\langle b_1 \rangle
\\
\hline
\!\!\!\!
\begin{array}[t]{l}
\langle\rangle\!\!:\!\epsilon \\
\langle a_1 \rangle \!\!:\! e_1
\end{array}
&
\!\!\!\!
\begin{array}[t]{l}
\langle\rangle\!\!:\!\epsilon \\
\langle a_1 \rangle \!\!:\! e_1 \\
\langle a_2 \rangle \!\!:\! e_2
\end{array}
&
\!\!\!\!
\begin{array}[t]{l}
\langle\rangle\!\!:\!\epsilon \\
\langle a_1 \rangle \!\!:\! e_1 \\
\langle a_2 \rangle \!\!:\! e_2 \\
\langle b_1 \rangle \!\!:\! e_3 \\
\langle a_1 b_1 \rangle \!\!:\! e_1 e_3 \\
\langle a_2 b_1 \rangle \!\!:\! e_2 e_3
\end{array}
&
\!\!\!\!
\begin{array}[t]{l}
\langle\rangle\!\!:\!\epsilon \\
\langle a_1 \rangle \!\!:\! e_1 \\
\langle a_2 \rangle \!\!:\! e_2 \\
\langle b_1 \rangle \!\!:\! e_3 \\
\langle a_1 b_1 \rangle \!\!:\! e_1 e_3 \\
\langle a_2 b_1 \rangle \!\!:\! e_2 e_3 e_4
\end{array}
&
\!\!\!\!
\begin{array}[t]{l}
\langle\rangle\!\!:\!\epsilon \\
\langle a_1 \rangle \!\!:\! e_1 e_5\\
\langle a_2 \rangle \!\!:\! e_2 \\
\langle b_1 \rangle \!\!:\! e_3 \\
\langle a_1 b_1 \rangle \!\!:\! e_1 e_3 e_5\\
\langle a_2 b_1 \rangle \!\!:\! e_2 e_3 e_4
\end{array}
&
\!\!\!\!
\begin{array}[t]{l}
\langle \rangle \!\!:\! e_6\\
\langle a_1 \rangle \!\!:\! e_1 e_5 e_6\\
\langle a_2 \rangle \!\!:\! e_2 e_6 \\
\langle b_1 \rangle \!\!:\! e_3 e_6 \\
\langle a_1 b_1 \rangle \!\!:\! e_1 e_3 e_5 e_6 \\
\langle a_2 b_1 \rangle \!\!:\! e_2 e_3 e_4 e_6
\end{array}
&
\!\!\!\!
\begin{array}[t]{l}
\langle \rangle \!\!:\! e_6\\
\langle a_1 \rangle \!\!:\! e_1 e_5 e_6\\
\langle a_2 \rangle \!\!:\! e_2 e_6 \\
\langle b_1 \rangle \!\!:\! e_3 e_6 \\
\langle a_1 b_1 \rangle \!\!:\! e_1 e_3 e_5 e_6 \\
\langle a_2 b_1 \rangle \!\!:\! e_2 e_3 e_4 e_6
\end{array}
\\
\hline
\end{array}
\\
\\
\begin{array}{|l|l|l|l|}
\hline
e_8\langle c_1 \rangle &
e_9\langle a_2 c_1 \rangle &
e_{10}\langle a_1 b_1 c_1 \rangle &
e_{11}\langle \rangle
\\
\hline
\!\!\!
\begin{array}[t]{l}
\langle \rangle \!\!:\! e_6 \\
\langle a_1 \rangle \!\!:\! e_1 e_5 e_6\\
\langle a_2 \rangle \!\!:\! e_2 e_6 \\
\langle b_1 \rangle \!\!:\! e_3 e_6 e_7 \\
\langle a_1 b_1 \rangle \!\!:\! e_1 e_3 e_5 e_6 e_7 \\
\langle a_2 b_1 \rangle \!\!:\! e_2 e_3 e_4 e_6 e_7 \\
\langle c_1 \rangle \!\!:\! e_6 e_8\\
\langle a_1 c_1 \rangle \!\!:\! e_1 e_5 e_6 e_8 \\
\langle a_2 c_1 \rangle \!\!:\! e_2 e_6 e_8 \\
\langle b_1 c_1 \rangle \!\!:\! e_3 e_6 e_7 e_8 \\
\langle a_{1} b_{1} c_{1} \rangle \!\!:\! e_1 e_3 e_5 e_6 e_7 e_8 \\
\langle a_{2} b_{1} c_{1} \rangle \!\!:\! e_2 e_3 e_4 e_6 e_7 e_8
\end{array}
\!\!\!
&
\!\!\!\!
\begin{array}[t]{l}
\langle \rangle \!\!:\! e_6 \\
\langle a_1 \rangle \!\!:\! e_1 e_5 e_6\\
\langle a_2 \rangle \!\!:\! e_2 e_6 \\
\langle b_1 \rangle \!\!:\! e_3 e_6 e_7 \\
\langle a_1 b_1 \rangle \!\!:\! e_1 e_3 e_5 e_6 e_7 \\
\langle a_2 b_1 \rangle \!\!:\! e_2 e_3 e_4 e_6 e_7 \\
\langle c_1 \rangle \!\!:\! e_6 e_8\\
\langle a_1 c_1 \rangle \!\!:\! e_1 e_5 e_6 e_8 \\
\langle a_2 c_1 \rangle \!\!:\! e_2 e_6 e_8 e_9 \\
\langle b_1 c_1 \rangle \!\!:\! e_3 e_6 e_7 e_8 \\
\langle a_{1} b_{1} c_{1} \rangle \!\!:\! e_1 e_3 e_5 e_6 e_7 e_8 \\
\langle a_{2} b_{1} c_{1} \rangle \!\!:\! e_2 e_3 e_4 e_6 e_7 e_8 e_9
\end{array}
\!\!\!
&
\!\!\!\!
\begin{array}[t]{l}
\langle \rangle \!\!:\! e_6 \\
\langle a_1 \rangle \!\!:\! e_1 e_5 e_6\\
\langle a_2 \rangle \!\!:\! e_2 e_6 \\
\langle b_1 \rangle \!\!:\! e_3 e_6 e_7 \\
\langle a_1 b_1 \rangle \!\!:\! e_1 e_3 e_5 e_6 e_7 \\
\langle a_2 b_1 \rangle \!\!:\! e_2 e_3 e_4 e_6 e_7 \\
\langle c_1 \rangle \!\!:\! e_6 e_8 \\
\langle a_1 c_1 \rangle \!\!:\! e_1 e_5 e_6 e_8 \\
\langle a_2 c_1 \rangle \!\!:\! e_2 e_6 e_8 e_9 \\
\langle b_1 c_1 \rangle \!\!:\! e_3 e_6 e_7 e_8 \\
\langle a_{1} b_{1} c_{1} \rangle \!\!:\! e_1 e_3 e_5 e_6 e_7 e_8 e_{10} \\
\langle a_{2} b_{1} c_{1} \rangle \!\!:\! e_2 e_3 e_4 e_6 e_7 e_8 e_9
\end{array}
\!\!\!
&
\!\!\!\!
\begin{array}[t]{l}
\langle \rangle \!\!:\! e_6 e_{11} \\
\langle a_1 \rangle \!\!:\! e_1 e_5 e_6 e_{11} \\
\langle a_2 \rangle \!\!:\! e_2 e_6 e_{11} \\
\langle b_1 \rangle \!\!:\! e_3 e_6 e_7 e_{11} \\
\langle a_1 b_1 \rangle \!\!:\! e_1 e_3 e_5 e_6 e_7 e_{11} \\
\langle a_2 b_1 \rangle \!\!:\! e_2 e_3 e_4 e_6 e_7 e_{11} \\
\langle c_1 \rangle \!\!:\! e_6 e_8 e_{11} \\
\langle a_1 c_1 \rangle \!\!:\! e_1 e_5 e_6 e_8 e_{11} \\
\langle a_2 c_1 \rangle \!\!:\! e_2 e_6 e_8 e_9 e_{11} \\
\langle b_1 c_1 \rangle \!\!:\! e_3 e_6 e_7 e_8 e_{11} \\
\langle a_{1} b_{1} c_{1} \rangle \!\!:\! e_1 e_3 e_5 e_6 e_7 e_8 e_{10}
e_{\!11} \\
\langle a_{2} b_{1} c_{1} \rangle \!\!:\! e_2 e_3 e_4 e_6 e_7 e_8 e_9
e_{\!11}
\end{array}
\!\!\!\!
\\
\hline
\end{array}
\end{array}
$
}
\caption{\label{table:run-A}A run of the trace slicing algorithm $\A$ (top-left table
  first, followed by bottom-left table, followed by the right table).}
\end{center}
\end{table*}
Consider again the sample trace in Section
\ref{sec:parametric-trace} with events parametric in $\{a,b,c\}$,
namely
$
\tau =
e_1\langle a_1 \rangle\ 
e_2\langle a_2 \rangle\ 
e_3\langle b_1 \rangle\ 
\
e_4\langle a_2 b_1 \rangle\ 
e_5\langle a_1 \rangle\ 
e_6\langle \rangle\ 
e_7\langle b_1 \rangle\ 
e_8\langle c_1 \rangle\ 
e_9\langle a_2 c_1 \rangle\ 
e_{10}\langle a_1 b_1 c_1 \rangle\ 
e_{11}\langle \rangle
$.
Table \ref{table:run-A} shows how $\A$ works on $\tau$.  An entry of
the form $\langle\theta\rangle\!:\!w$ in a table cell corresponding to
a current parametric event $e\langle\theta\rangle$ means that
$\T(\theta)=w$ after processing all the parametric events up to and
including the current one; $\T$ is undefined on any other partial
function.  Obviously, the $\Theta$ corresponding to a cell is the
union of all the $\theta$'s that appear in pairs
$\langle\theta\rangle\!:\!w$ in that cell.
Note that, as each parametric event $e\langle\theta\rangle$ is
processed, the non-parametric event $e$ is added at most once to each
slice, and that $\Theta$ stays lub closed.
\end{example}

$\A$ computes trace slices for all combinations of parameter instances
observed in parametric trace events.  Its complexity is therefore
$O(n \times m)$ where $n$ is the length of the trace and $m$ is the
number of all possible parameter combinations.  However, $\A$ is not
intended to be implemented directly; it is only used as a correctness
backbone for other trace analysis algorithms, such as the monitoring
algorithms discussed below.  An alternative and apparently more
efficient solution is to only record trace slices for parameter
instances that actually appear in the trace (instead of for all
combinations of them), and then construct the slice for a given
parameter instance by combining such trace slices for compatible
parameter instances.  However, the complexity of constructing all
possible trace slices at the end using such an algorithm is also
$O(n \times m)$, so it would not bring any benefit overall compared to
$\A$.  In addition, $\A$ is more suitable as a backbone for developing
online monitoring algorithms such as those in Section \ref{sec:monitors},
because each event is sent to its slices (that will be consumed by
corresponding monitors) and never touched again.

$\A$ compactly and uniformly captures several special cases and
subcases that are worth discussing.  The discussion below can be
formalized as an inductive (on the length of  $\tau$) proof of
correctness for $\A$, but we prefer to keep this discussion informal
and give a rigorous proof shortly after.  The role of this discussion
is twofold: (1) to better explain the algorithm $\A$, providing the
reader with additional intuition for its difficulty and compactness, and
(2) to give a proof sketch for the correctness of $\A$.

Let us first note that a partial function added to $\Theta$ will never
be removed from $\Theta$; that's because
$\Theta\subseteq\{\bot,\theta\}\sqcup\Theta$.  The same holds true for
the domain of $\T$, because line \textsf{\footnotesize 4} can only add new
elements to $\Dom(\T)$; in fact, the domain of $\T$ is
extended with precisely the set $\{\theta\}\sqcup\Theta$ after each
event parametric in $\theta$ is processed by $\A$.  Moreover,
since $\Dom(\T)=\Theta=\Theta_\epsilon=\{\bot\}$ initially and
since  \textit{5.} and \textit{7.} in Proposition \ref{prop:sqcup-simple}
imply $\Theta\cup(\{\theta\}\sqcup\Theta) =
\{\bot,\theta\}\sqcup\Theta$ while Proposition \ref{prop:Theta-tau}
states that
$\Theta_{\tau\,e\langle\theta\rangle}=\{\bot,\theta\}\sqcup\Theta_\tau$,
we can inductively show that $\Dom(\T)=\Theta=\Theta_\tau$
each time after $\A$ is executed on a parametric trace $\tau$.

Each $\theta'$ considered by the loop at lines \textsf{\footnotesize 3-5} has the
property that $\theta\sqsubseteq\theta'$, and at (precisely) one
iteration of the loop $\theta'$ is $\theta$;
indeed, $\theta\in\{\theta\}\sqcup\Theta$ because $\bot\in\Theta$.
%Moreover, we are going to show (\textit{2.} in Proposition
%\ref{prop:alg}) that after processing trace $\tau$, the domain
%of $\T$ (held by $\Theta$) is precisely the compatibility
%closure $\Theta_\tau$.  
Thanks to Proposition~\ref{prop:tau-max},
Theorem~\ref{thm:trace-slicing} holds essentially iff
$\T(\theta')=\tau\!\!\upharpoonright_{\theta'}$
after $\T(\theta')$ is updated in line~\textsf{\footnotesize 4}.  A tricky
observation which is crucial for this is that \textit{3.}
in Proposition \ref{prop:max} implies that the updates of
$\T(\theta')$ do not interfere with each other for different
$\theta'\in\{\theta\}\sqcup\Theta$; otherwise the non-parametric event
$e$ may wrongly be added multiple times to some trace slices $\T(\theta')$.

Let us next informally argue, inductively, that it is indeed the case
that $\T(\theta')=\tau\!\!\upharpoonright_{\theta'}$
after $\T(\theta')$ is updated in line \textsf{\footnotesize 4} (it
vacuously holds on the empty trace).  Since
$\max\,(\theta']_\Theta\in\Theta$, the inductive hypothesis tells us
that $\T(\max\,(\theta']_\Theta)
= \tau\!\!\upharpoonright_{\max\,(\theta']_\Theta}$; these are further
equal to $\tau\!\!\upharpoonright_{\theta'}$ by Proposition
\ref{prop:tau-max}.  Since $\theta\sqsubseteq\theta'$, the definition
of trace slicing implies that 
$(\tau\,e\langle\theta\rangle)\!\!\upharpoonright_{\theta'} =
\tau\!\!\upharpoonright_{\theta'}\,e$.  Therefore,
$\T(\theta')$ is indeed
$(\tau\,e\langle\theta\rangle)\!\!\upharpoonright_{\theta'}$ after
line \textsf{\footnotesize 4} of $\A$ is executed while processing the
event $e\langle\theta\rangle$ that follows trace $\tau$.  This
concludes our informal proof sketch; let us next give a rigorous proof
of correctness for our trace slicing algorithm $\A$.

\begin{definition}
\label{dfn:A}
Let $\A(\tau).\T$ and 
$\A(\tau).\Theta$ be the two data-structures ($\T$ and
$\Theta$) maintained by the algorithm $\A$ in Figure~\ref{fig:A} after it processes $\tau$.
\end{definition}

\begin{theorem}
\label{thm:trace-slicing}
With the notation in Definition~\ref{dfn:A}, the following hold for any $\tau\in\paramevents^*$:
\begin{enumerate}[(1)]
\item $\Dom(\A(\tau).\T)=\A(\tau).\Theta=\Theta_\tau$;
\item $\A(\tau).\T(\theta)=\tau\!\!\!\upharpoonright_{\theta}$
for any $\theta\in\A(\tau).\Theta$;
\item $\tau\!\!\!\upharpoonright_{\theta}=
\A(\tau).\T(\max\,(\theta]_{\A(\tau).\Theta})$
for any $\theta\in\XV$.
\end{enumerate}
\end{theorem}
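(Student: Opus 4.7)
The plan is to prove all three statements simultaneously by induction on $|\tau|$, leveraging the author's informal sketch together with the machinery of Section~\ref{sec:math}. For the base case $\tau = \epsilon$, line~\textsf{\footnotesize 1} sets $\T$ to have domain $\{\bot\}$ mapping $\bot$ to $\epsilon$, and $\Theta$ to $\{\bot\}$. Since $\Theta_\epsilon = \overline{\emptyset} = \{\bot\}$ by Proposition~\ref{prop:closure-simple}(2) and $\epsilon\!\!\upharpoonright_\bot = \epsilon$ by Definition~\ref{dfn:trace-slicing}, statements (1) and (2) follow; for (3), any $\theta\in\XV$ has $\max\,(\theta]_{\{\bot\}}=\bot$, so the right-hand side is $\epsilon = \epsilon\!\!\upharpoonright_\theta$.

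For the inductive step, I would assume all three claims hold for $\tau$ and track what happens when $\A$ processes the next event $e\langle\theta\rangle$ to produce the state for $\tau' = \tau\,e\langle\theta\rangle$. Claim~(1) follows by bookkeeping: the loop at lines~\textsf{\footnotesize 3-5} extends $\Dom(\T)$ by exactly the set $\{\theta\}\sqcup\Theta_\tau$, so the post-loop domain is $\Theta_\tau \cup (\{\theta\}\sqcup\Theta_\tau)$, which by Proposition~\ref{prop:sqcup-simple}(5) and~(7) (using $\bot\in\Theta_\tau$) equals $\{\bot,\theta\}\sqcup\Theta_\tau$; line~\textsf{\footnotesize 6} sets $\Theta$ to the same set, and Proposition~\ref{prop:Theta-tau} identifies this with $\Theta_{\tau'}$.

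The crux is claim~(2) for partial maps in the new domain. For $\theta'\in\Theta_\tau\setminus(\{\theta\}\sqcup\Theta_\tau)$, the entry $\T(\theta')$ is untouched and one only needs $\theta\not\sqsubseteq\theta'$, which holds because otherwise $\theta'=\theta\sqcup\theta'\in\{\theta\}\sqcup\Theta_\tau$; then $\tau'\!\!\upharpoonright_{\theta'}=\tau\!\!\upharpoonright_{\theta'}$ by Definition~\ref{dfn:trace-slicing}, and the induction hypothesis closes the case. The interesting case is $\theta'\in\{\theta\}\sqcup\Theta_\tau$, where line~\textsf{\footnotesize 4} assigns $\T(\theta')\leftarrow\T(\max\,(\theta']_\Theta)\,e$. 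The main obstacle, which I expect to be the only really delicate point, is that this right-hand-side lookup could in principle observe an already-modified entry from an earlier iteration of the loop; Proposition~\ref{prop:max}(3) defuses this precisely, because it forces either $\max\,(\theta']_\Theta=\theta'$ (and each $\theta'$ is visited only once in the loop, so the value read is still the pre-loop $\tau\!\!\upharpoonright_{\theta'}$) or $\max\,(\theta']_\Theta\notin\{\theta\}\sqcup\Theta_\tau$ (so the entry is not written at all during the loop). In either subcase, the read yields $\tau\!\!\upharpoonright_{\max\,(\theta']_{\Theta_\tau}}$, which equals $\tau\!\!\upharpoonright_{\theta'}$ by Proposition~\ref{prop:tau-max}; since $\theta\sqsubseteq\theta'$ (because $\theta'$ is a join involving $\theta$), Definition~\ref{dfn:trace-slicing} gives $\tau'\!\!\upharpoonright_{\theta'}=\tau\!\!\upharpoonright_{\theta'}\,e$, matching the written value.

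Claim~(3) then drops out from (1) and (2): for arbitrary $\theta\in\XV$, Proposition~\ref{prop:Theta-tau-closed} ensures $\Theta_\tau$ is lub closed, so Proposition~\ref{prop:max}(1) supplies $\max\,(\theta]_{\Theta_\tau}\in\Theta_\tau=\A(\tau).\Theta$; applying~(2) gives $\A(\tau).\T(\max\,(\theta]_{\A(\tau).\Theta})=\tau\!\!\upharpoonright_{\max\,(\theta]_{\Theta_\tau}}$, and Proposition~\ref{prop:tau-max} equates this with $\tau\!\!\upharpoonright_\theta$, completing the proof.
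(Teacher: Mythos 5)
Your proposal is correct and follows essentially the same route as the paper's proof: induction on the length of $\tau$, the same bookkeeping for claim~(1) via Propositions~\ref{prop:sqcup-simple} and~\ref{prop:Theta-tau}, the same case split for claim~(2) with Proposition~\ref{prop:max}(3) doing the crucial work of ruling out interference between the updates at line~\textsf{\footnotesize 4}, and claim~(3) obtained from (1), (2) and Proposition~\ref{prop:tau-max}. The only cosmetic difference is that the paper packages the non-interference observation as a separately stated auxiliary equality before proving (2), whereas you inline it as a dichotomy on $\max\,(\theta']_\Theta$; the two formulations are equivalent.
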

\begin{proof}
Since $\A$ processes the events in the input trace in order,
when given the input $\tau\,e\langle\theta\rangle$, the $\Theta$ and
$\T$ structures after $\A$ processes $\tau$ but
before it processes $e\langle\theta\rangle$ (i.e., right before the
last iteration of the loop at lines \textsf{\footnotesize 2-7}) are precisely
$\A(\tau).\Theta$ and $\A(\tau).\T$,
respectively.  Further, the loop at lines \textsf{\footnotesize 3-5}
%in $\A$
updates $\T$ on all
$\theta'\in\{\theta\}\sqcup\Theta$; in case $\T$ was not
defined on such a $\theta'$, then it will be defined after
$e\langle\theta\rangle$ is processed.  The definitional domain of
$\T$ is thus continuously growing or potentially remains
stationary as parametric events are processed, but it never decreases.
With these observations, we can prove \textit{1.}\ by induction
on $\tau$.  If $\tau=\epsilon$ then
$\Dom(\A(\epsilon).\T)=\A(\epsilon).\Theta=\Theta_\epsilon=\{\bot\}$.
Suppose now that 
$\Dom(\A(\tau).\T)=\A(\tau).\Theta=\Theta_\tau$
holds for $\tau\in\paramevents^*$, and
let $e\langle\theta\rangle\in\paramevents$ be
any parametric event.  Then the following concludes the proof of
\textit{1.}:
$$
\begin{array}{lcl}
\Dom(\A(\tau\,e\langle\theta\rangle).\T) & = & \Dom(\A(\tau).\T)\cup
       (\{\theta\}\sqcup\A(\tau).\Theta) \\
& = & \A(\tau).\Theta \cup
  (\{\theta\}\sqcup\A(\tau).\Theta) \\
& = & (\{\bot\} \sqcup \A(\tau).\Theta) \cup
 (\{\theta\}\sqcup\A(\tau).\Theta) \\
& = & \{\bot,\theta\}\sqcup\A(\tau).\Theta \\
& = & \A(\tau\, e\langle\theta\rangle).\Theta \\
& = & \{\bot,\theta\}\sqcup\Theta_\tau \\
& = & \Theta_{\tau\, e\langle\theta\rangle}
\end{array}
$$
where the first equality follows from how the loop at lines
\textsf{\footnotesize 3-5} updates $\T$, the second by the induction hypothesis,
the third by \textit{5.} in Proposition \ref{prop:sqcup-simple},
the fourth by \textit{7.} in Proposition \ref{prop:sqcup-simple},
the fifth by how $\Theta$ is updated at line \textsf{\footnotesize 6},
the sixth again by the induction hypothesis,
and, finally, the seventh by Proposition \ref{prop:Theta-tau}.

Before we continue, let us first prove the following property:
\begin{quote}
$\A(\tau\, e\langle\theta\rangle).\T(\theta') =
\A(\tau).\T(\max\,(\theta']_{\A(\tau).\Theta})\,e$\\
for any $e\langle\theta\rangle\in\paramevents$
and any $\theta'\in\{\theta\}\sqcup\A(\tau).\Theta$.
\end{quote}
One should be careful here to {\em not} get tricked thinking that this
property is straightforward, because it says only what line
\textsf{\footnotesize 4} of $\A$ does.  The complexity comes from the fact 
that if there were two different
$\theta_1,\theta_2\in\{\theta\}\sqcup\A(\tau).\Theta$
such that $\theta_1=\max\,(\theta_2]_{\A(\tau).\Theta}$, then
an unfortunate enumeration of the partial functions $\theta'$ in 
$\{\theta\}\sqcup\A(\tau).\Theta$ by the loop at lines
\textsf{\footnotesize 3-5} may lead to the non-parametric event $e$ to be added
twice to a slice: indeed, if $\theta_1$ is processed before 
$\theta_2$, then $e$ is first added to the end of
$\T(\theta_1)$ when $\theta'=\theta_1$, and then
$\T(\theta_1)\,e$ is assigned to $\T(\theta_2)$ when
$\theta'=\theta_2$; this way, $\T(\theta_2)$ ends up
accumulating $e$ twice instead of once, which is obviously wrong.
Fortunately, since $\A(\tau).\Theta$ is lub closed
(by \textit{1.} above and Proposition \ref{prop:Theta-tau-closed}),
\textit{3.} in Proposition \ref{prop:max} implies that there are no
such different
$\theta_1,\theta_2\in\{\theta\}\sqcup\A(\tau).\Theta$.
Therefore, there is no interference between the various assignments at
line~\textsf{\footnotesize 4}, regardless of the order in which the partial
functions $\theta'\in\{\theta\}\sqcup\Theta$ are enumerated, which
means that, indeed, 
$\A(\tau\, e\langle\theta\rangle).\T(\theta') =
\A(\tau).\T(\max\,(\theta']_{\A(\tau).\Theta})\,e$ 
for any $e\langle\theta\rangle\in\paramevents$
and for any $\theta'\in\{\theta\}\sqcup\A(\tau).\Theta$.
This lack of interference between updates of $\T$ also
suggests an important implementation optimization:
\begin{quote}
{\em The loop at lines \textsf{\footnotesize 3-5} can be parallelized
without duplicating the table $\T$!}
\end{quote}
Of course, the loop can be parallelized anyway if
the table is duplicated and then merged within the original table,
in the sense that all the writes to $\T(\theta')$ are done in a copy
of $\T$.  However, experiments show that the table $\T$ can be
literally huge in real applications, in the order of billions of
entries, so duplicating and merging it can be prohibitive.

\textit{2.} can be now proved by induction on the length of $\tau$.  If
$\tau=\epsilon$ then $\A(\epsilon).\Theta=\{\bot\}$, so
$\theta'\in\A(\epsilon).\Theta$ can only be $\bot$; then
$\A(\epsilon).\T(\bot)=\tau\!\!\upharpoonright_{\bot}=\epsilon$.
Suppose now that
$\A(\tau).\T(\theta')=\tau\!\!\upharpoonright_{\theta'}$
for any $\theta'\in\A(\tau).\Theta$ and let us show
that $\A(\tau\,e\langle\theta\rangle).\T(\theta') =
(\tau\,e\langle\theta\rangle)\!\!\upharpoonright_{\theta'}$
for any $\theta'\in\A(\tau\,e\langle\theta\rangle).\Theta$.
As shown in the proof of \textit{1.} above,
$\A(\tau\, e\langle\theta\rangle).\Theta =
 \A(\tau).\Theta \cup
 (\{\theta\}\sqcup\A(\tau).\Theta)$,
so we have two cases to analyze.  First, if
$\theta'\in\{\theta\}\sqcup\A(\tau).\Theta$ then
$\theta\sqsubseteq\theta'$ and so
$(\tau\,e\langle\theta\rangle)\!\!\upharpoonright_{\theta'}=
\tau\!\!\upharpoonright_{\theta'}\,e$; further,
%\vspace*{-1ex}
$$
\begin{array}{lll}
\A(\tau\,e\langle\theta\rangle).\T(\theta')
 & = & 
  \A(\tau).\T(\max\,(\theta']_{\A(\tau).\Theta})\,e \\
 & = & \tau\!\!\upharpoonright_{\max\,(\theta']_{\A(\tau).\Theta}}\,e\\
 & = & \tau\!\!\upharpoonright_{\theta'}\,e \\
 & = & (\tau\,e\langle\theta\rangle)\!\!\upharpoonright_{\theta'},
\end{array}
$$
%\vspace*{-1ex}
where the first equality follows by the auxiliary property proved
above, the second by the induction hypothesis using the fact that
$\max\,(\theta']_{\A(\tau).\Theta}\in{\A(\tau).\Theta}$, 
and the third by Proposition \ref{prop:tau-max}.
Second, if $\theta'\in\A(\tau).\Theta$ but
$\theta'\not\in\{\theta\}\sqcup\A(\tau).\Theta$ then
$\theta\not\sqsubseteq\theta'$ and so 
$(\tau\,e\langle\theta\rangle)\!\!\upharpoonright_{\theta'}=
\tau\!\!\upharpoonright_{\theta'}$; furthermore,
%\vspace*{-1ex}
$$
\begin{array}{lll}
\A(\tau\,e\langle\theta\rangle).\T(\theta')
 & = & 
  \A(\tau).\T(\theta') \\
 & = & \tau\!\!\upharpoonright_{\theta'} \\
 & = & (\tau\,e\langle\theta\rangle)\!\!\upharpoonright_{\theta'},
\end{array}
$$
%\vspace*{-1ex}
where the first equality holds because $\theta'$ is not considered by
the loop in lines \textsf{\footnotesize 3-5} in $\A$, that is,
$\theta'\not\in\{\theta\}\sqcup\A(\tau).\Theta$,
and the second equality follows by the induction hypothesis, as
$\theta'\in{\A(\tau).\Theta}$.
Therefore, $\A(\tau\,e\langle\theta\rangle).\T(\theta') =
(\tau\,e\langle\theta\rangle)\!\!\!\upharpoonright_{\theta'}$
for any $\theta'\in\A(\tau\,e\langle\theta\rangle).\Theta$,
which completes the proof of \textit{2.}

\textit{3.} is the main result concerning our trace slicing algorithm
and it follows now easily:
%\vspace*{-1ex}
$$
\begin{array}{lll}
\tau\!\!\upharpoonright_{\theta}
 & = & \tau\!\!\upharpoonright_{\max\,(\theta]_{\Theta_\tau}} \\
 & = & \tau\!\!\upharpoonright_{\max\,(\theta]_{\A(\tau).\Theta}} \\
 & = & \A(\tau).\T(\max\,(\theta]_{\A(\tau).\Theta})
\end{array}
$$
%\vspace*{-1ex}
The first equality follows by Proposition \ref{prop:tau-max}, the
second equality by \textit{1.} above, and the third equality by \textit{2.} above,
because $\max\,(\theta]_{\A(\tau).\Theta}\in{\A(\tau).\Theta}$.
This concludes the correctness proof of our trace slicing
algorithm $\A$.
\end{proof}

\section{Monitors and Parametric Monitors}
\label{sec:monitors}

In this section we first define monitors $M$ as a variant of Moore
machines with potentially infinitely many states; then we define
parametric monitors $\parametric{X}{M}$ as monitors maintaining one
state of $M$ per parameter instance.  Like for parametric properties,
which turned out to be just properties over parametric traces, we show
that parametric monitors are also just monitors, but for parametric
events and with instance-indexed states and output categories.  We
also show that a parametric monitor $\parametric{X}{M}$ is a
monitor for the parametric property $\parametric{X}{P}$, with $P$
the property monitored by $M$.

\subsection{The Non-Parametric Case}

We start by defining non-parametric monitors as a variant of (deterministic)
Moore machine \cite{moore-56} that allows infinitely many states:

\begin{definition}
\label{dfn:monitor}
A \textbf{monitor} $M$ is a tuple
$(S,{\cal E}, {\cal C}, \i,
  \sigma:S\times{\cal E}\rightarrow S,
  \gamma:S\rightarrow{\cal C})$,
where $S$ is a set of states, $\cal E$ is a set of input events,
$\cal C$ is a set of output categories, $\i\in S$ is the initial
state, $\sigma$ is the transition function, and $\gamma$ is the
output function.  The transition function is extended to
$\sigma:S\times{\cal E}^*\rightarrow S$ as expected:
$\sigma(s,\epsilon)=s$ and $\sigma(s,we)=\sigma(\sigma(s,w),e)$ for
any $s\in S$, $e\in{\cal E}$, and $w\in{\cal E}^*$.
\end{definition}

The notion of a monitor above is rather conceptual.  Actual
implementations of monitors need not generate all
the state space apriori, but on a ``by need'' basis.  Consider, for
example, a monitor for a property specified using an NFA which
performs an NFA-to-DFA construction on the fly, as events are
received.  Such a monitor generates only those states in the DFA that
are needed by the monitored execution trace.  Moreover, the monitor
only needs to store one such state of the DFA, i.e., set of states in the NFA,
namely the current one: once an event is received, the next state is
(deterministically) computed and the old one is discarded.  Therefore,
assuming that one needs constant space to store a state of the original NFA,
then the memory needed by this monitor is linear in the number of states of
the NFA.  An alternative and probably more conventional monitor could be
one which generates the corresponding DFA statically, paying upfront the
exponential price in both time and space.  As empirically suggested by
\cite{tabakov-vardi-2010-rv}, if one is able to statically generate and store
the corresponding DFA then one should most likely take this route,
because in practice it tends to be much faster to jump to a known
next state than to compute it.

Allowing monitors with infinitely many states is a necessity in our
context.  Even though only a finite number of states is reached during
any given (finite) execution trace, there is, in general, no bound on
how many states are reached.  For example, monitors for context-free
grammars like the ones in \cite{meredith-jin-chen-rosu-2008-ase} have
potentially unbounded stacks as part of their state.  Also, as shown
shortly, parametric monitors have domains of functions as state
spaces, which are infinite as well.  Nevertheless, what is common to
all monitors is that they can classify traces into categories.  When a
monitor does not have enough information about a trace to put it in a
category of interest, we can assume that it actually categorizes it as
a ``don't know'' trace, where ``don't know'' can be regarded as a
special category; this is similar to regarding partial functions as total
functions by adding a special ``undefined'' value in their codomain.
The following is therefore natural:

\begin{definition}
\label{dfn:monitor-for-property}
Monitor $M = (S, {\cal E}, {\cal C}, \i, \sigma, \gamma)$ is a
\textbf{monitor for property} $P:{\cal E}^*\rightarrow{\cal C}$ if and only if
$\gamma(\sigma(\i,w))=P(w)$ for each $w\in{\cal E}^*$.
\end{definition}

A property can be associated to each monitor, in a similar style to
which we can associate a language to each automaton:

\begin{definition}
\label{dfn:M-Property}
Monitor $M = (S, {\cal E}, {\cal C}, \i, \sigma, \gamma)$
defines \textbf{the $M$-property}
${\cal P}_M:{\cal E}^*\rightarrow{\cal C}$ as follows:
${\cal P}_M(w)=\gamma(\sigma(\i,w))$ for each $w\in{\cal E}^*$.
\end{definition}

The following result is straightforward, it follows immediately from
Definitions~\ref{dfn:monitor-for-property} and \ref{dfn:M-Property}.
The only reason we frame it as a numbered proposition is because we
need to refer to it in the proof of Corollary~\ref{cor:trace-monitoring}.

\begin{proposition}
\label{prop:M-Property}
With the notation in Definition \ref{dfn:M-Property},
monitor $M$ is indeed a monitor for its corresponding $M$-property
${\cal P}_M$.  Moreover, a monitor can only be a monitor for one
property, that is, if $M$ is a monitor for property $P$ then $P={\cal P}_M$.
\end{proposition}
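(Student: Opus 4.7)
The plan is to unpack the two definitions and observe that the claims are essentially tautologies. Both parts reduce to the simple fact that ${\cal P}_M(w)$ and the ``monitor-for-$P$'' condition are expressed by the very same formula $\gamma(\sigma(\i,w))$.

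For the first part, I would simply substitute: to show $M$ is a monitor for ${\cal P}_M$, Definition~\ref{dfn:monitor-for-property} requires $\gamma(\sigma(\i,w))={\cal P}_M(w)$ for every $w\in{\cal E}^*$. But Definition~\ref{dfn:M-Property} \emph{defines} ${\cal P}_M(w)$ as $\gamma(\sigma(\i,w))$, so the required equality holds by definition.

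For the second part, suppose $M$ is a monitor for some property $P:{\cal E}^*\rightarrow{\cal C}$. Then Definition~\ref{dfn:monitor-for-property} gives $\gamma(\sigma(\i,w))=P(w)$ for every $w\in{\cal E}^*$, while Definition~\ref{dfn:M-Property} gives ${\cal P}_M(w)=\gamma(\sigma(\i,w))$. Chaining these two equalities yields $P(w)={\cal P}_M(w)$ for every $w$, hence $P={\cal P}_M$ as functions.

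There is no real obstacle here: the proposition is essentially a sanity check that the definitions of ``monitor for a property'' and ``the property of a monitor'' are inverse to each other in the expected sense. The only thing to be careful about is making the function-equality step explicit (pointwise equality on all $w\in{\cal E}^*$ implies equality of the underlying functions $P$ and ${\cal P}_M$), and to keep the two definitional references clearly separated so the reader sees that each direction uses a different definition.
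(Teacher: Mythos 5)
Your proof is correct and matches the paper's treatment: the paper offers no written proof at all, stating only that the result ``follows immediately from Definitions \ref{dfn:monitor-for-property} and \ref{dfn:M-Property},'' which is precisely the definitional unpacking you carried out. Nothing further is needed.
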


Since we allow monitors to have infinitely many states, there is a
strong correspondence between properties and monitors:

\begin{definition}
\label{dfn:P-Monitor}
Property $P:{\cal E}^*\rightarrow{\cal C}$ defines 
\textbf{the $P$-monitor}
${\cal M}_P=(S_P,{\cal E},{\cal C},\i_P, \sigma_P,\gamma_P)$ as
follows:
\begin{itemize}
\item[] $S_P={\cal E}^*$,
\item[] $\i_P=\epsilon$,
\item[] $\sigma_P(w,e)=we$ for each $w\in S_P={\cal E}^*$ and $e\in{\cal E}$,
\item[] $\gamma_P(w)=P(w)$ for each $w\in S_P={\cal E}^*$.
\end{itemize}
\end{definition}

\noindent Thus, ${\cal M}_Pw$ holds traces as states, appends events to
them as transition and, as output, it looks up the category of the
corresponding trace using $P$.  The following results are also
straightforward and, again, we frame them as numbered propositions
only because we will refer to them later.

\begin{proposition}
\label{prop:P-Monitor}
With the notation in Definition \ref{dfn:P-Monitor}, the monitor
${\cal M}_P$ is indeed a monitor for property $P$.
\end{proposition}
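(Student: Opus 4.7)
The plan is to unfold Definition \ref{dfn:monitor-for-property} and verify the condition $\gamma_P(\sigma_P(\i_P, w)) = P(w)$ for every $w \in {\cal E}^*$, using only the components of ${\cal M}_P$ supplied by Definition \ref{dfn:P-Monitor}. Since $\gamma_P$ is defined by $\gamma_P(w) = P(w)$, the whole statement reduces to showing that $\sigma_P(\epsilon, w) = w$ for all $w \in {\cal E}^*$, i.e., that starting in the initial state $\epsilon$ and feeding the monitor the input word $w$ returns exactly the state $w$.

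I would establish the equality $\sigma_P(\epsilon, w) = w$ by a straightforward induction on the length of $w$, using the inductive clause from Definition \ref{dfn:monitor} which extends $\sigma_P$ from single events to words. The base case $w = \epsilon$ is immediate from $\sigma_P(s, \epsilon) = s$. For the step case $w = w' e$, the inductive clause gives $\sigma_P(\epsilon, w' e) = \sigma_P(\sigma_P(\epsilon, w'), e)$; the inductive hypothesis replaces $\sigma_P(\epsilon, w')$ by $w'$, and then the single-event definition $\sigma_P(w', e) = w' e$ from Definition \ref{dfn:P-Monitor} finishes the step.

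Once $\sigma_P(\epsilon, w) = w$ is in hand, the verification concludes in one line: $\gamma_P(\sigma_P(\i_P, w)) = \gamma_P(w) = P(w)$, which is exactly the condition required by Definition \ref{dfn:monitor-for-property}. There is no real obstacle here — the proposition is essentially a sanity check that the canonical "trace-as-state" monitor associated to $P$ actually computes $P$; the only content is making sure the two clauses defining the extended $\sigma_P$ interact correctly with the single-event clause given for ${\cal M}_P$, and this is dispatched by the one-line induction above.
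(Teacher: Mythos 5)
Your proof is correct and follows essentially the same route as the paper's: the paper compresses your induction into the single asserted equality $\sigma_P(\epsilon,w)=\epsilon w$ and then concludes $\gamma_P(\epsilon w)=\gamma_P(w)=P(w)$, which is exactly your final line. Making the induction on $|w|$ explicit is a harmless elaboration, not a different argument.
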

\begin{proof}
It follows from the sequence of equalities
$\gamma_P(\sigma_P(\i_P,w))=\gamma_P(\sigma_P(\epsilon,w))=\gamma_P(\epsilon
w)=\gamma_P(w)=P(w)$.
\end{proof}

\begin{proposition}
With the notations in Definitions \ref{dfn:M-Property} and
\ref{dfn:P-Monitor},
${\cal P}_{{\cal M}_P}=P$ for any property 
$P:{\cal E}^*\rightarrow{\cal C}$.
\end{proposition}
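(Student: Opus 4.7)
The plan is to observe that this claim follows almost immediately by combining the two preceding propositions. Proposition~\ref{prop:P-Monitor} establishes that ${\cal M}_P$ is a monitor for $P$, and the second half of Proposition~\ref{prop:M-Property} establishes that a monitor can be a monitor for at most one property. Since Proposition~\ref{prop:M-Property} also tells us that ${\cal M}_P$ is a monitor for ${\cal P}_{{\cal M}_P}$, uniqueness forces $P = {\cal P}_{{\cal M}_P}$.

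Alternatively, one can give a direct unfolding of definitions, which is what I would present to make the proof fully self-contained and transparent. First I would fix an arbitrary $w \in {\cal E}^*$ and compute ${\cal P}_{{\cal M}_P}(w)$ step by step. By Definition~\ref{dfn:M-Property} applied to $M = {\cal M}_P$, ${\cal P}_{{\cal M}_P}(w) = \gamma_P(\sigma_P(\i_P, w))$. Plugging in the data of ${\cal M}_P$ from Definition~\ref{dfn:P-Monitor}, we have $\i_P = \epsilon$, so this becomes $\gamma_P(\sigma_P(\epsilon, w))$.

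The only substep requiring a tiny observation is that $\sigma_P(\epsilon, w) = w$ for every $w$. This follows by induction on the length of $w$ using the extended transition function from Definition~\ref{dfn:monitor}: the base case $\sigma_P(\epsilon, \epsilon) = \epsilon$ is immediate, and the inductive step $\sigma_P(\epsilon, w'e) = \sigma_P(\sigma_P(\epsilon, w'), e) = \sigma_P(w', e) = w' e$ uses the inductive hypothesis together with the defining clause $\sigma_P(w', e) = w' e$ from Definition~\ref{dfn:P-Monitor}. Finally, $\gamma_P(w) = P(w)$ again by Definition~\ref{dfn:P-Monitor}, closing the chain of equalities.

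There is really no obstacle here; the statement is a sanity check confirming that the maps $P \mapsto {\cal M}_P$ and $M \mapsto {\cal P}_M$ are mutual inverses in one direction. The only point where one could slip is forgetting to verify that $\sigma_P(\epsilon, w) = w$ for arbitrary $w$ rather than only for single-event traces, but this is handled by the one-line induction above.
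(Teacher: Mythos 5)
Your proposal is correct and your main (direct) argument is essentially the paper's own proof, which simply writes ${\cal P}_{{\cal M}_P}(w)=\gamma_P(\sigma_P(\i_P,w))=P(w)$; you are in fact slightly more careful than the paper, which silently identifies $\sigma_P(\epsilon,w)$ with $w$ where you justify it by the short induction on $|w|$. The alternative route via Proposition~\ref{prop:P-Monitor} and the uniqueness clause of Proposition~\ref{prop:M-Property} is also valid, but the direct unfolding you present is what the paper does.
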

\begin{proof}
${\cal P}_{{\cal M}_P}(w)=\gamma_P(\sigma_P(\i_P,w))=P(w)$ for any
$w\in{\cal E}^*$.
\end{proof}\medskip

The equality of monitors ${\cal M}_{{\cal P}_M} = M$ does
not hold for any monitor $M$; it does hold when $M={\cal M}_P$ for
some property $P$, though.

\begin{definition}
\label{dfn:mon-equiv}
Monitors $M$ and $M'$ are \textbf{property equivalent}, or just
\textbf{equivalent}, written $M \equiv M'$, iff they are monitors for
the same property (see Definition \ref{dfn:monitor-for-property}).
With the notation in Definition \ref{dfn:M-Property}, we have that
$M\equiv M'$ iff ${\cal P}_M={\cal P}_{M'}$.
\end{definition}

\begin{proposition}
With the notations in Definitions \ref{dfn:M-Property} and
\ref{dfn:P-Monitor},
${\cal M}_{{\cal P}_M} \equiv M$ for any
monitor $M = (S, {\cal E}, {\cal C}, \i, \sigma, \gamma)$.
\end{proposition}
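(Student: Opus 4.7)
The plan is to chain together the results that were just proved about ${\cal M}_P$ and ${\cal P}_M$, using the fact that property equivalence (Definition \ref{dfn:mon-equiv}) is ultimately about agreement of the associated properties. First, I would unfold the definition of $\equiv$: by Definition \ref{dfn:mon-equiv}, showing ${\cal M}_{{\cal P}_M} \equiv M$ amounts to showing ${\cal P}_{{\cal M}_{{\cal P}_M}} = {\cal P}_M$, i.e., that the property recovered from the monitor built out of the property of $M$ is the original property ${\cal P}_M$.

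Next, I would invoke Proposition \ref{prop:P-Monitor} applied to the property $P := {\cal P}_M$: it tells us that ${\cal M}_{{\cal P}_M}$ is a monitor for the property ${\cal P}_M$. Then the uniqueness half of Proposition \ref{prop:M-Property} finishes the job: a monitor is a monitor for at most one property, and ${\cal M}_{{\cal P}_M}$ is also (by Proposition \ref{prop:M-Property} itself) a monitor for its own associated property ${\cal P}_{{\cal M}_{{\cal P}_M}}$, so the two must coincide, giving ${\cal P}_{{\cal M}_{{\cal P}_M}} = {\cal P}_M$ and hence ${\cal M}_{{\cal P}_M} \equiv M$.

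There is really no technical obstacle here; the statement is a direct corollary of the two propositions about ${\cal M}_P$ and ${\cal P}_M$ together with the definition of equivalence. The only thing one should be careful about is \emph{not} trying to prove the stronger equality ${\cal M}_{{\cal P}_M} = M$, which the paper explicitly warns is false in general (different monitors with different state spaces or transition structures can induce the same property); all we get, and all we need, is agreement at the level of the induced properties, which is exactly what $\equiv$ captures. So the proof is essentially a three-line chase through Definitions \ref{dfn:M-Property}, \ref{dfn:P-Monitor}, \ref{dfn:mon-equiv} and Propositions \ref{prop:M-Property}, \ref{prop:P-Monitor}.
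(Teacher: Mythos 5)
Your proposal is correct and follows essentially the same route as the paper: unfold $\equiv$ to the equality ${\cal P}_{{\cal M}_{{\cal P}_M}}={\cal P}_M$ and obtain it from Proposition \ref{prop:P-Monitor} instantiated at $P={\cal P}_M$ (you merely make explicit the appeal to the uniqueness part of Proposition \ref{prop:M-Property}, which the paper leaves implicit). Your cautionary remark about not attempting the strict equality ${\cal M}_{{\cal P}_M}=M$ matches the paper's own warning.
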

\begin{proof}
By Definition \ref{dfn:mon-equiv}, 
${\cal M}_{{\cal P}_M} \equiv M$ iff
${\cal P}_{{\cal M}_{{\cal P}_M}} = {\cal P}_M$, and the latter
follows by Proposition \ref{prop:P-Monitor} taking $P$ to be
${\cal P}_M$.
\end{proof}

\subsection{The Parametric Case}

We next define parametric monitors in the same style as the other
parametric entities defined in this paper: starting with a base
monitor and a set of parameters, the corresponding parametric monitor
can be thought of as a set of base monitors running in parallel, one
for each parameter instance.

\begin{definition}
\label{dfn:parametric-monitor}
Given parameter set $X$ with corresponding values $V$ and a monitor
$M=(S,\ {\cal E},\ {\cal C},\ \i,\ \sigma:S\times{\cal E}\rightarrow S,\
  \gamma:S\rightarrow{\cal C})$, we define the
\textbf{parametric monitor} $\parametric{X}{M}$ as the monitor
$$
%\parametric{X}{M}\stackrel{\rm def}{=}
(\totalf{\XV}{S},\ 
\paramevents,\ \totalf{\XV}{\cal C},\ \lambda\theta.\i,\ 
\parametric{X}{\sigma},\ \parametric{X}{\gamma}),
$$
with
$$
\begin{array}{l}
\parametric{X}{\sigma} : \totalf{\XV}{S} \times
\paramevents\rightarrow\totalf{\XV}{S} \\
\parametric{X}{\gamma}:\totalf{\XV}{S}
\rightarrow\totalf{\XV}{\cal C}
\end{array}
$$
defined as
$$
\begin{array}{l}
(\parametric{X}{\sigma})(\delta,e\langle\theta'\rangle)(\theta) =
\left\{
\begin{array}{ll}
\sigma(\delta(\theta),e) & \mbox{ if } \theta'\sqsubseteq\theta \\
\delta(\theta) & \mbox{ if } \theta'\not\sqsubseteq\theta
\end{array}
\right.
\\
\\
(\parametric{X}{\gamma})(\delta)(\theta)=\gamma(\delta(\theta))
\end{array}
$$
for any $\delta\in\totalf{\XV}{S}$ and any
$\theta,\theta'\in\XV$.
\end{definition}

Therefore, a state $\delta$ of parametric monitor $\parametric{X}{M}$
maintains a state $\delta(\theta)$ of $M$ for each parameter instance
$\theta$, takes parametric events as input, and outputs categories
indexed by parameter instances (one output category of $M$ per
parameter instance).

\begin{proposition}
\label{prop:parametric-monitor}
If $M$ is a monitor for $P$ then parametric monitor
$\parametric{X}{M}$ is a monitor for parametric property
$\parametric{X}{P}$, or, with the notation in Definition
\ref{dfn:M-Property},
${\cal P}_{\parametric{X}{M}}=\parametric{X}{{\cal P}_M}$.
\end{proposition}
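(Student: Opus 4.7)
The plan is to unfold both sides of the desired equality ${\cal P}_{\parametric{X}{M}} = \parametric{X}{{\cal P}_M}$ using the relevant definitions and then reduce everything to a single inductive lemma about parametric runs versus sliced runs. By Definition~\ref{dfn:M-Property}, showing that $\parametric{X}{M}$ is a monitor for $\parametric{X}{P}$ is, via Proposition~\ref{prop:M-Property}, equivalent to showing that for every parametric trace $\tau \in \paramevents^*$ and every $\theta \in \XV$,
$$(\parametric{X}{\gamma})\bigl((\parametric{X}{\sigma})(\lambda\theta.\i,\tau)\bigr)(\theta) \;=\; (\parametric{X}{P})(\tau)(\theta).$$
Expanding the right-hand side via Definition~\ref{dfn:parametric-property} gives $P(\tau\!\!\upharpoonright_{\theta})$, and expanding the left-hand side via Definition~\ref{dfn:parametric-monitor} gives $\gamma\bigl((\parametric{X}{\sigma})(\lambda\theta.\i,\tau)(\theta)\bigr)$. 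Since $M$ is assumed to be a monitor for $P$, $P(\tau\!\!\upharpoonright_{\theta})=\gamma(\sigma(\i,\tau\!\!\upharpoonright_{\theta}))$, so the whole claim reduces to establishing the key lemma
$$(\parametric{X}{\sigma})(\lambda\theta.\i,\tau)(\theta) \;=\; \sigma(\i,\tau\!\!\upharpoonright_{\theta}) \qquad\text{for all } \tau\in\paramevents^*,\ \theta\in\XV.$$

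I would prove this lemma by induction on the length of $\tau$, using the canonical extension of $\parametric{X}{\sigma}$ to $\paramevents^*$ provided by Definition~\ref{dfn:monitor}. The base case $\tau=\epsilon$ is immediate: both sides equal $\i$, since $(\lambda\theta.\i)(\theta)=\i$ and $\epsilon\!\!\upharpoonright_{\theta}=\epsilon$. For the inductive step, consider $\tau\,e\langle\theta'\rangle$ and set $\delta=(\parametric{X}{\sigma})(\lambda\theta.\i,\tau)$ so that, by the induction hypothesis, $\delta(\theta)=\sigma(\i,\tau\!\!\upharpoonright_{\theta})$ for every $\theta$. I then split on whether $\theta'\sqsubseteq\theta$, matching the case split built into Definition~\ref{dfn:parametric-monitor} and the case split in Definition~\ref{dfn:trace-slicing}. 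If $\theta'\sqsubseteq\theta$, the left-hand side becomes $\sigma(\delta(\theta),e)=\sigma(\sigma(\i,\tau\!\!\upharpoonright_{\theta}),e)=\sigma(\i,(\tau\!\!\upharpoonright_{\theta})\,e)$, while the right-hand side is $\sigma(\i,(\tau\,e\langle\theta'\rangle)\!\!\upharpoonright_{\theta})=\sigma(\i,(\tau\!\!\upharpoonright_{\theta})\,e)$. If $\theta'\not\sqsubseteq\theta$, the left-hand side stays at $\delta(\theta)=\sigma(\i,\tau\!\!\upharpoonright_{\theta})$, and the right-hand side is also $\sigma(\i,\tau\!\!\upharpoonright_{\theta})$ because $(\tau\,e\langle\theta'\rangle)\!\!\upharpoonright_{\theta}=\tau\!\!\upharpoonright_{\theta}$. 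Either way, the equality holds.

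There is essentially no hard step: the proof is a case-by-case bookkeeping exercise that aligns the two parallel case splits (one in the semantics of $\parametric{X}{\sigma}$, the other in the definition of trace slicing). The only small subtlety worth being explicit about is to check that $\parametric{X}{\sigma}$, when extended from single events in $\paramevents$ to words in $\paramevents^*$ via the standard clause $\sigma(s,we)=\sigma(\sigma(s,w),e)$, does indeed propagate state pointwise in $\theta$; this follows directly from the pointwise nature of the single-event clause in Definition~\ref{dfn:parametric-monitor}. Once the key lemma is in hand, the main statement is a one-line composition of the lemma with the monitoring hypothesis on $M$ and the definition of $\parametric{X}{P}$, and the equivalent reformulation ${\cal P}_{\parametric{X}{M}} = \parametric{X}{{\cal P}_M}$ follows by Proposition~\ref{prop:M-Property} together with the uniqueness of the property monitored by a given monitor.
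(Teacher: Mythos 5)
Your proposal is correct and follows essentially the same route as the paper: both reduce the claim to the key lemma $(\parametric{X}{\sigma})(\lambda\theta.\i,\tau)(\theta) = \sigma(\i,\tau\!\!\upharpoonright_{\theta})$ and prove it by induction on the trace, with the case split on $\theta'\sqsubseteq\theta$ aligning Definition~\ref{dfn:parametric-monitor} with Definition~\ref{dfn:trace-slicing}. No gaps.
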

\begin{proof}
We show that
$(\parametric{X}{\gamma})((\parametric{X}{\sigma})(\lambda\theta.\i,\tau))
= (\parametric{X}{P})(\tau)$ for any $\tau\in\paramevents^*$, i.e.,
after application on $\theta\in\XV$, that
$\gamma((\parametric{X}{\sigma})(\lambda\theta.\i,\tau)(\theta)) =
P(\tau\!\!\upharpoonright_\theta)$ for any $\tau\in\paramevents^*$
and $\theta\in\XV$.  Since $M$ is a monitor for $P$, it
suffices to show that
$(\parametric{X}{\sigma})(\lambda\theta.\i,\tau)(\theta) =
\sigma(\i,\tau\!\!\upharpoonright_\theta)$ for any
$\tau\in\paramevents^*$ and $\theta\in\XV$.
We prove it by induction on $\tau$.  If $\tau=\epsilon$ then
$(\parametric{X}{\sigma})(\lambda\theta.\i,\epsilon)(\theta) =
(\lambda\theta.\i)(\theta)\! =\! \i\! = \!\sigma(\i,\epsilon) = 
\sigma(\i,\epsilon\!\!\upharpoonright_\theta)$.  Suppose that
$(\parametric{X}{\sigma})(\lambda\theta.\i,\tau)(\theta) =
\sigma(\i,\tau\!\!\upharpoonright_\theta)$ for some arbitrary but
fixed $\tau\in\paramevents^*$ and for any $\theta\in\XV$, and
let $e\langle\theta'\rangle$ be any parametric event in $\paramevents$
and let $\theta\in\XV$ be any parameter instance.  The
inductive step is then as follows:
$$
\!\!\!\begin{array}{lll}
(\parametric{X}{\sigma})(\lambda\theta.\i,\tau\,e\langle\theta'\rangle)(\theta)
 \!\!\!\!& =
\!\!\!\!  & (\parametric{X}{\sigma})((\parametric{X}{\sigma})(\lambda\theta.\i,\tau),e\langle\theta'\rangle)(\theta)
\\
 & = 
\!\!\!\!  & (\parametric{X}{\sigma})(\sigma(\i,\tau\!\!\upharpoonright_\theta),e\langle\theta'\rangle)(\theta)
\\
 & =
\!\!\!\!  & \left\{\begin{array}{ll}
    \sigma(\sigma(\i,\tau\!\!\upharpoonright_\theta),e)
     & \mbox{ if } \theta'\sqsubseteq\theta \\
    \sigma(\i,\tau\!\!\upharpoonright_\theta)
     & \mbox{ if } \theta'\not\sqsubseteq\theta
    \end{array}\right.
\\
 & =
\!\!\!\!  & \left\{\begin{array}{ll}
    \sigma(\i,\tau\!\!\upharpoonright_\theta\,e)
     & \mbox{ if } \theta'\sqsubseteq\theta \\
    \sigma(\i,\tau\!\!\upharpoonright_\theta)
     & \mbox{ if } \theta'\not\sqsubseteq\theta
    \end{array}\right.
\\
 & =
\!\!\!\!  & \sigma(\i,(\tau\,e\langle\theta'\rangle)\!\!\upharpoonright_\theta)
\end{array}
$$
The first equality above follows by the second part of Definition
\ref{dfn:monitor}), the second by the induction hypothesis, the third
by Definition \ref{dfn:parametric-monitor}, the fourth again by the
second part of Definition \ref{dfn:monitor}, and the fifth by
Definition \ref{dfn:trace-slicing}.  This concludes our proof.
\end{proof}

\section{Algorithms for Parametric Trace Monitoring}
\label{sec:trace-monitoring}

We next propose two monitoring algorithms for parametric properties.
Our unoptimized but easier to understand algorithm is easily derived
from the parametric trace slicing algorithm in Figure \ref{fig:A}.
Our second algorithm is an online optimization of the first, which
significantly reduces the size of the search space for compatible
parameter instances when a new event is received.

\subsection{Unoptimized but Simpler Algorithm}

Analyzing the definition of a parametric monitor (Definition
\ref{dfn:parametric-monitor}), the first thing we note is that its
state space is not only infinite, but it is not even enumerable.
Therefore, a first challenge in monitoring parametric properties is
how to represent the states of the parametric monitor.  Inspired by
the algorithm for trace slicing in Figure \ref{fig:A}, we encode
functions $\partialf{\XV}{S}$ as tables with entries indexed by
parameter instances in $\XV$ and with contents states in $S$.  Following
similar arguments as in the proof of the trace slicing algorithm, such
tables will have a finite number of entries provided that each event
instantiates only a finite number of parameters.

\begin{figure}
\begin{center}
$
\begin{array}{l}
\textsf{\footnotesize Algorithm} \ \B(M=(S,{\cal E},{\cal C},\i,\sigma,\gamma)) \\
\textsf{\footnotesize Input:} \ \mbox{finite parametric trace }\tau\in
  \paramevents^*\\
\textsf{\footnotesize Output:} \mbox{ mapping } \Gamma:\partialf{\XV}{\cal C} \mbox{
  and set }\Theta\subseteq\XV\\
\\
1\ \ \Delta \leftarrow \bot;\ \Delta(\bot)\leftarrow\i; \ \Theta \leftarrow \{\bot\} \\
2\ \ \myforall \textit{parametric event } e\langle{\theta}\rangle
\textit{ in order in }\tau \ \mydo \\
%\hsp \myif (\theta]_\Delta = \emptyset \mythen \T\leftarrow e \\
3\ \ \hsp \myforall \ \theta' \in \{\theta\} \sqcup \Theta \ \mydo \\
4\ \ \hsp\hsp \Delta(\theta') \leftarrow \sigma(\Delta(\max\,(\theta']_\Theta),e) \\
5\ \ \hsp\hsp \Gamma(\theta') \leftarrow\gamma(\Delta(\theta'))
\mbox{ \ \ \ \ \ \ \ // a message may be output here} \\
6\ \ \hsp \myendfor \\
7\ \ \hsp \Theta \leftarrow \{\bot,\theta\}\sqcup\Theta \\
8\ \ \myendfor
\end{array}
$
\end{center}
\caption{\label{fig:B}Parametric monitoring algorithm $\B$}
\end{figure}

Figure \ref{fig:B} shows our monitoring algorithm for parametric
properties.  Given parametric property $\parametric{X}{P}$ and $M$ a
monitor for $P$, $\B(M)$ yields a monitor that is equivalent to
$\parametric{X}{M}$, that is, a monitor for $\parametric{X}{P}$.
Section \ref{sec:implementation} shows one way to use this algorithm:
a monitor $M$ is first synthesized from the base property $P$, then
that monitor $M$ is used to synthesize the monitor $\B(M)$ for the
parametric property $\parametric{X}{P}$.  $\B(M)$ follows very closely
the algorithm for trace slicing in Figure \ref{fig:A}, the main
difference being that trace slices are processed, as generated, by
$M$: instead of calculating the trace slice of $\theta'$ by appending
base event $e$ to the corresponding existing trace slice in
\textsf{\footnotesize line 4} of $\A$, we now calculate and store in table $\Delta$
the state of the {\em monitor instance} corresponding to $\theta'$ by
sending $e$ to the corresponding existing monitor instance
(\textsf{\footnotesize line 4} in $\B(M)$); at the same time we also calculate the
output corresponding to that monitor instance and store it in table
$\Gamma$.  In other words, we replace trace slices in $\A$ by local
monitors processing online those slices.  In our implementation in
Section \ref{sec:implementation}, we also check whether
$\Gamma(\theta')$ at \textsf{\footnotesize line 5} violates the property and, if so,
an error message including $\theta'$ is output to the user.

\begin{definition}
\label{dfn:B-notation}
Given $\tau\in\paramevents^*$, let $\B(M)(\tau).\Theta$ and
$\B(M)(\tau).\Delta$ and $\B(M)(\theta).\Gamma$ be the three
data-structures maintained by the algorithm $\B(M)$ in Figure
\ref{fig:B} after processing $\tau$.
Let $\bot\mapsto\i=\B(M)(\epsilon).\Delta\in\partialf{\XV}{S}$ be the
partial map taking $\bot\in\XV$ to $\i$ and undefined elsewhere.
\end{definition}

\begin{corollary}
\label{cor:from-thm1}
The following hold for any $\tau\in\paramevents^*$:
\begin{enumerate}[(1)]
\item $\Dom(\B(M)(\tau).\Delta)=\B(M)(\tau).\Theta=\Theta_\tau$;
\item $\B(M)(\tau).\Delta(\theta) =
       \sigma(\i,\tau\!\!\upharpoonright_{\theta})$
      and \\
      $\B(M)(\tau).\Gamma(\theta) = 
       \gamma(\sigma(\i,\tau\!\!\upharpoonright_{\theta}))$
      for any $\theta\in\B(M)(\tau).\Theta$;
\item $\sigma(\i,\tau\!\!\upharpoonright_{\theta})=
       \B(M)(\tau).\Delta(\max\,(\theta]_{\B(M)(\tau).\Theta})$
      and \\
      $\gamma(\sigma(\i,\tau\!\!\upharpoonright_{\theta}))=
       \B(M)(\tau).\Gamma(\max\,(\theta]_{\B(M)(\tau).\Theta})$
for any $\theta\in\XV$.
\end{enumerate}
\end{corollary}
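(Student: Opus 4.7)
The plan is to exploit the tight structural parallel between the slicing algorithm $\A$ and the monitoring algorithm $\B(M)$: both initialize and update the $\Theta$ data-structure identically, and both iterate over $\{\theta\}\sqcup\Theta$ with the same max-lookup pattern at line 4, differing only in that $\A$ appends a base event $e$ to a string while $\B(M)$ applies $\sigma(\cdot,e)$ to a monitor state. This suggests that the correctness of $\B(M)$ should reduce in large part to that of $\A$, already established in Theorem \ref{thm:trace-slicing}.

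For Part (1), I would argue by induction on $\tau$ that $\B(M)(\tau).\Theta = \A(\tau).\Theta = \Theta_\tau$; the argument is essentially word-for-word the one used in Theorem \ref{thm:trace-slicing} Part (1), invoking Proposition \ref{prop:Theta-tau}. The equality $\Dom(\B(M)(\tau).\Delta) = \B(M)(\tau).\Theta$ then follows because the initialization at line 1 gives $\Dom(\Delta) = \{\bot\} = \Theta$, and at each subsequent event line 4 makes $\Dom(\Delta)$ grow to include all of $\{\theta\}\sqcup\Theta$ while line 7 grows $\Theta$ by the same elements, since $\Theta\cup(\{\theta\}\sqcup\Theta) = \{\bot,\theta\}\sqcup\Theta$ by items (5) and (7) of Proposition \ref{prop:sqcup-simple}.

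For Part (2), the natural bridge to Theorem \ref{thm:trace-slicing} is the correspondence
$$
\B(M)(\tau).\Delta(\theta) \;=\; \sigma(\i,\A(\tau).\T(\theta)) \quad \text{for every } \theta\in\Theta_\tau,
$$
which I would prove by induction on $\tau$. The base case is immediate since both sides reduce to $\i$. For the inductive step on $\tau\,e\langle\theta\rangle$, the decisive point, and in my view the main obstacle, is the same non-interference observation that was the crux of Theorem \ref{thm:trace-slicing}: by Proposition \ref{prop:max} item (3), together with lub-closure of $\B(M)(\tau).\Theta$ (from Part (1) combined with Proposition \ref{prop:Theta-tau-closed}), the successive writes $\Delta(\theta') \leftarrow \sigma(\Delta(\max\,(\theta']_\Theta),e)$ performed in the inner loop cannot interfere as $\theta'$ ranges over $\{\theta\}\sqcup\Theta$, so each write reads the pre-iteration value of $\Delta$. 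Given this non-interference, the $\B(M)$ update rule mirrors the $\A$ update rule via $\sigma(\cdot,e)$, and the inductive hypothesis propagates; entries on $\Theta\setminus(\{\theta\}\sqcup\Theta)$ are left untouched by both algorithms, so the hypothesis carries over there unchanged. Chaining this correspondence with Theorem \ref{thm:trace-slicing} Part (2) yields the $\Delta$ equation, and the $\Gamma$ equation follows at once because line 5 sets $\Gamma(\theta') = \gamma(\Delta(\theta'))$.

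Part (3) is then a short consequence of Parts (1) and (2): for arbitrary $\theta\in\XV$, set $\theta^{\star} = \max\,(\theta]_{\B(M)(\tau).\Theta}$. Part (1) gives $\theta^{\star}\in\Theta_\tau$, Part (2) gives $\B(M)(\tau).\Delta(\theta^{\star}) = \sigma(\i,\tau\!\!\upharpoonright_{\theta^{\star}})$, and Proposition \ref{prop:tau-max} rewrites $\tau\!\!\upharpoonright_{\theta^{\star}}$ as $\tau\!\!\upharpoonright_{\theta}$; applying $\gamma$ on both sides yields the $\Gamma$ clause, completing the argument.
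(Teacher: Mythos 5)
Your proposal is correct and follows essentially the same route as the paper, whose entire proof is the one-liner ``Follows from Theorem \ref{thm:trace-slicing} and the discussion above,'' the discussion being precisely the observation that $\B(M)$ is $\A$ with trace slices replaced by monitor states updated via $\sigma(\cdot,e)$. Your explicit bridging invariant $\B(M)(\tau).\Delta(\theta)=\sigma(\i,\A(\tau).\T(\theta))$, together with re-invoking the non-interference argument from Proposition \ref{prop:max}(3) for the $\Delta$-updates, is exactly the content the paper leaves implicit, so you have simply made the intended argument rigorous.
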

\begin{proof}
Follows from Theorem \ref{thm:trace-slicing} and the discussion above.
\end{proof}

\vspace*{2ex}

We next show how to associate a formal monitor to the algorithm $\B(M)$ in
Figure~\ref{fig:B}:

\begin{definition}
\label{dfn:MB}
For the algorithm $\B(M)$ in Figure \ref{fig:B}, let
$${\cal M}_{\B(M)}=(R,\paramevents,\totalf{\XV}{\cal
  C},\bot\mapsto\i,\textit{next},\textit{out})$$ be the monitor
defined as follows:
\begin{iteMize}{$\bullet$}
\item $R\subseteq\partialf{\XV}{S}$ is the set
$$\{\B(M)(\tau).\Delta\mid\tau\in\paramevents^*\}$$
of reachable $\Delta$'s in $\B(M)$, and
\item $\textit{next}: R \times \paramevents \rightarrow R$ and 
$\textit{out}: R \rightarrow \totalf{\XV}{\cal C}$ are functions
defined as follows, where $\tau\in\paramevents^*$, $e\in{\cal E}$,
and $\theta\in\XV$:
$$
\!\!\!\!
\begin{array}{l}
\textit{next}(\B(M)(\tau).\Delta,e\langle\theta\rangle) =
\B(M)(\tau\,e\langle\theta\rangle).\Delta, \mbox{ and} \\
\textit{out}(\B(M)(\tau).\Delta)(\theta) =
\B(M)(\tau).\Gamma(\max\,(\theta]_{\B(M)(\tau).\Theta}).
\end{array}
$$
\end{iteMize}
\end{definition}

\begin{theorem}
\label{thm:trace-monitoring}
${\cal M}_{\B(M)}\equiv \parametric{X}{M}$ for any monitor $M$.
\end{theorem}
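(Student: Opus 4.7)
The plan is to unpack the definition of $\equiv$ and verify the monitoring condition of Definition~\ref{dfn:monitor-for-property} directly, leveraging Corollary~\ref{cor:from-thm1} to do essentially all the work. By Definition~\ref{dfn:mon-equiv}, ${\cal M}_{\B(M)} \equiv \parametric{X}{M}$ means the two monitors realize the same property. Proposition~\ref{prop:parametric-monitor} tells us that $\parametric{X}{M}$ is a monitor for $\parametric{X}{{\cal P}_M}$, so it suffices to show that ${\cal M}_{\B(M)}$ is also a monitor for $\parametric{X}{{\cal P}_M}$.

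Before invoking Definition~\ref{dfn:monitor-for-property}, I would first discharge a well-definedness obligation that is easy to overlook: the functions $\textit{next}$ and $\textit{out}$ in Definition~\ref{dfn:MB} are presented via a chosen witness $\tau$ with $\B(M)(\tau).\Delta$ equal to a given state, and we must check that the resulting value is independent of that choice. For $\textit{next}$ this follows because $\B(M)(\tau\,e\langle\theta\rangle).\Delta$ is computed by a single iteration of the outer loop of $\B(M)$ from $\B(M)(\tau).\Delta$ alone (using $\Theta = \Dom(\Delta)$, which is available by Corollary~\ref{cor:from-thm1}(1), and $\sigma$, which belongs to $M$). For $\textit{out}$ it follows similarly, since $\B(M)(\tau).\Theta = \Dom(\B(M)(\tau).\Delta)$ and $\B(M)(\tau).\Gamma(\theta') = \gamma(\B(M)(\tau).\Delta(\theta'))$ by line~5 of $\B$.

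Next I would prove by a straightforward induction on $\tau\in\paramevents^*$ that the iterated transition $\textit{next}^{*}(\bot\mapsto\i,\tau)$ equals $\B(M)(\tau).\Delta$; the base case is the initial state itself, and the inductive step is immediate from the defining equation $\textit{next}(\B(M)(\tau).\Delta,e\langle\theta\rangle) = \B(M)(\tau\,e\langle\theta\rangle).\Delta$ in Definition~\ref{dfn:MB}. Then for an arbitrary $\tau\in\paramevents^*$ and $\theta\in\XV$ I would chain the following equalities:
$$
\textit{out}(\textit{next}^{*}(\bot\mapsto\i,\tau))(\theta)
\;=\; \B(M)(\tau).\Gamma(\max\,(\theta]_{\B(M)(\tau).\Theta})
\;=\; \gamma(\sigma(\i,\tau\!\!\upharpoonright_\theta))
\;=\; {\cal P}_M(\tau\!\!\upharpoonright_\theta)
\;=\; (\parametric{X}{{\cal P}_M})(\tau)(\theta),
$$
where the first equality uses the definition of $\textit{out}$, the second is Corollary~\ref{cor:from-thm1}(3), the third follows because $M$ is a monitor for ${\cal P}_M$ (Proposition~\ref{prop:M-Property}), and the fourth is the defining equation of parametric properties (Definition~\ref{dfn:parametric-property}). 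This establishes that ${\cal M}_{\B(M)}$ is a monitor for $\parametric{X}{{\cal P}_M}$, and hence ${\cal M}_{\B(M)} \equiv \parametric{X}{M}$.

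I expect the only real obstacle to be the well-definedness step; everything else is pure unfolding, made trivial by the heavy lifting already done in Theorem~\ref{thm:trace-slicing} and repackaged as Corollary~\ref{cor:from-thm1}. The well-definedness subtlety is conceptually important, though, because without it the state set $R$ of ${\cal M}_{\B(M)}$ does not carry enough information to distinguish the $\tau$'s that produced it, and one needs Corollary~\ref{cor:from-thm1}(1) to see that $\Theta$ (and hence $\Gamma$, via $\gamma$) is recoverable from $\Delta$ alone.
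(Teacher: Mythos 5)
Your proposal is correct and follows essentially the same route as the paper's proof: both reduce the claim to the chain of equalities pivoting on Corollary~\ref{cor:from-thm1}(3), with the only cosmetic difference being that you pass through the common property $\parametric{X}{{\cal P}_M}$ via Proposition~\ref{prop:parametric-monitor}, whereas the paper directly expands $(\parametric{X}{\gamma})((\parametric{X}{\sigma})(\lambda\theta.\i,\tau))(\theta)$ by reusing an equality from inside that proposition's proof. Your explicit discharge of the well-definedness of $\textit{next}$ and $\textit{out}$ in Definition~\ref{dfn:MB} (that the value is independent of the witness $\tau$, since one loop iteration of $\B$ depends only on $\Delta$ and $\Theta=\Dom(\Delta)$) is a legitimate obligation the paper leaves implicit, and it is a welcome addition rather than a deviation.
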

\begin{proof}
All we have to do is to show that, for any $\tau\in\paramevents^*$,
$\textit{out}(\textit{next}(\bot\mapsto\i,\tau))$ and
$(\parametric{X}{\gamma})((\parametric{X}{\sigma})(\lambda\theta.\i,\tau))$
are equal as total functions in $\totalf{\XV}{\cal C}$.  Let
$\theta\in\XV$; then:
$$
\!\!\!
\begin{array}{l@{\,}l@{\,}l}
\textit{out}(\textit{next}(\bot\mapsto\i,\tau))(\theta)
 & = & 
  \textit{out}(\B(M)(\tau).\Delta)(\theta) \\
 & = &
  \B(M)(\tau).\Gamma(\max\,(\theta]_{\B(M)(\tau).\Theta}) \\
 & = &
  \gamma(\sigma(\lambda\theta.\i,\tau\!\!\upharpoonright_{\theta})) \\
 & = &
  \gamma((\parametric{X}{\sigma})(\lambda\theta.\i,\tau)(\theta)) \\
 & = &
  (\parametric{X}{\gamma})((\parametric{X}{\sigma})(\lambda\theta.\i,\tau))(\theta).
\end{array}
$$
The first equality above follows inductively by the definition of
$\textit{next}$ (Definition \ref{dfn:MB}), noticing that
$\bot\mapsto\i=\B(M)(\epsilon).\Delta$.  The second equality follows by the
definition of $\textit{out}$ (Definition \ref{dfn:MB}) and the third
by \textit{3.} in Corollary \ref{cor:from-thm1}.  The fourth equality
above follows inductively by the definition of
$\parametric{X}{\sigma}$ (Definition \ref{dfn:parametric-monitor}) and
has already been proved as part of the proof of Proposition
\ref{prop:parametric-monitor}.  Finally, the fifth equality follows by
the definition of $\parametric{X}{\gamma}$ (Definition~\ref{dfn:parametric-monitor}).

Therefore, ${\cal M}_{\B(M)}$ and $\parametric{X}{M}$ define the same
property.
\end{proof}

\begin{corollary}
\label{cor:trace-monitoring}
If $M$ is a monitor for $P$ and $X$ is a set of parameters,
then ${\cal M}_{\B(M)}$ is a monitor for parametric property
$\parametric{X}{P}$.
\end{corollary}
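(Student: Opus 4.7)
The plan is to derive the corollary by chaining two results already established in the excerpt, so the proof should be essentially a one-line composition rather than a fresh argument. First I would invoke Proposition \ref{prop:parametric-monitor}, which asserts that whenever $M$ is a monitor for the base property $P$, the parametric monitor $\parametric{X}{M}$ (as constructed in Definition \ref{dfn:parametric-monitor}) is a monitor for the parametric property $\parametric{X}{P}$. This takes care of connecting the semantic monitoring obligation on the parametric side to the already-established monitor $\parametric{X}{M}$.

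Next I would apply Theorem \ref{thm:trace-monitoring}, which shows ${\cal M}_{\B(M)} \equiv \parametric{X}{M}$. By Definition \ref{dfn:mon-equiv}, property equivalence means that both monitors compute the same property, i.e., ${\cal P}_{{\cal M}_{\B(M)}} = {\cal P}_{\parametric{X}{M}}$. Combined with the previous step, this property is precisely $\parametric{X}{P}$, so by Proposition \ref{prop:M-Property} (which says that a monitor is a monitor only for its associated property, and indeed is a monitor for it), ${\cal M}_{\B(M)}$ is a monitor for $\parametric{X}{P}$.

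There is no real obstacle here: the heavy lifting was already done in Theorem \ref{thm:trace-monitoring}, whose proof hinges on the three data-structure invariants of Corollary \ref{cor:from-thm1} (in turn coming from the trace slicing correctness Theorem \ref{thm:trace-slicing}), and in Proposition \ref{prop:parametric-monitor}, whose inductive argument over $\tau$ reduces the action of $\parametric{X}{\sigma}$ on a parametric trace to the action of $\sigma$ on each slice $\tau\!\!\upharpoonright_\theta$. So my proof proposal for the corollary would read, in essence: \emph{``By Proposition \ref{prop:parametric-monitor}, $\parametric{X}{M}$ is a monitor for $\parametric{X}{P}$. By Theorem \ref{thm:trace-monitoring}, ${\cal M}_{\B(M)} \equiv \parametric{X}{M}$, i.e., both monitors determine the same property. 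Hence ${\cal M}_{\B(M)}$ is also a monitor for $\parametric{X}{P}$.''} The only subtle point worth mentioning explicitly is the appeal to Definition \ref{dfn:mon-equiv}, so that the reader sees how property equivalence transfers the ``monitor for'' relation from $\parametric{X}{M}$ to ${\cal M}_{\B(M)}$; otherwise no further calculation is needed.
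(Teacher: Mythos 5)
Your proposal is correct and matches the paper's own proof: both chain Proposition \ref{prop:parametric-monitor}, Theorem \ref{thm:trace-monitoring}, and Proposition \ref{prop:M-Property} to conclude that ${\cal P}_{{\cal M}_{\B(M)}}=\parametric{X}{P}$, differing only in the order in which the three ingredients are invoked. No gap to report.
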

\begin{proof}
With the notation in Definition \ref{dfn:M-Property}, Theorem
\ref{thm:trace-monitoring} implies that
${\cal P}_{{\cal M}_{\B(M)}}={\cal P}_{\parametric{X}{M}}$.  By
Proposition \ref{prop:parametric-monitor} we have that
${\cal P}_{\parametric{X}{M}}=\parametric{X}{{\cal P}_M}$.
Finally, since $P={\cal P}_M$ by Proposition~\ref{prop:M-Property}, we
conclude that ${\cal P}_{{\cal M}_{\B(M)}}=\parametric{X}{P}$.
\end{proof}

\subsection{Optimized Algorithm}
\label{sec:optimized}

\newcommand{\U}{{\cal U}}
\newcommand{\pf}{{\cal P}_f}
\newcommand{\myendif}{\textsf{\footnotesize endif }}
\newcommand{\myelse}{\textsf{\footnotesize else }}
\newcommand{\myelseif}{\textsf{\footnotesize elseif }}
\newcommand{\myreturn}{\textsf{\footnotesize return }}
\newcommand{\mycontinue}{\textsf{\footnotesize continue }}
\newcommand{\mygoto}{\textsf{\footnotesize goto }}

\begin{figure}
\begin{center}
$
\begin{array}{l}
\textsf{\footnotesize Algorithm} \ \C(M=(S,{\cal E},{\cal C},\i,\sigma,\gamma)) \\
\textsf{\footnotesize Globals:}\ \mbox{mapping }\Delta : \partialf{\XV}{S} \mbox{ and }\\
\  \ \ \ \ \ \ \ \ \ \mbox{mapping }\U : \XV \rightarrow \pf(\XV) \mbox{ and }\\
\  \ \ \ \ \ \ \ \ \ \mbox{mapping }\Gamma : \partialf{\XV}{\cal C}\\
\textsf{\footnotesize Initialization:} \ \U(\theta) \leftarrow \emptyset \mbox{ for any } \theta \in \XV, \Delta(\bot) \leftarrow \i\\
\\
\textsf{\footnotesize function main}(e\langle\theta\rangle)\\
\ \ 1\ \ \myif \Delta(\theta) \mbox{undefined} \ \mythen\\
\ \ 2\ \ \hsp \myforall \theta_{max} \sqsubset \theta\ (\mbox{in reversed topological order})\ \mydo \\
\ \ 3\ \ \hsp \hsp \myif \Delta(\theta_{max}) \mbox{ defined} \ \mythen\\
\ \ 4\ \ \hsp \hsp \hsp \mygoto 7\\
\ \ 5\ \ \hsp \hsp \myendif\\
\ \ 6\ \ \hsp \myendfor\\
\ \ 7\ \ \hsp \textsf{\footnotesize defineTo}(\theta, \theta_{max})\\
\ \ 8\ \ \hsp \myforall \theta_{max} \sqsubset \theta\ (\mbox{in reversed topological order})\ \mydo \\
\ \ 9\ \ \hsp \hsp \myforall \theta_{comp} \in \U(\theta_{max}) \mbox{ that is compatible with } \theta\ \mydo\\
10\ \ \hsp \hsp \hsp \myif \Delta(\theta_{comp} \sqcup \theta) \mbox{ undefined } \mythen \\
11\ \ \hsp \hsp \hsp \hsp \textsf{\footnotesize defineTo}(\theta_{comp} \sqcup \theta, \theta_{comp})\\
12\ \ \hsp \hsp \hsp \myendif\\
13\ \ \hsp \hsp \myendfor\\
14\ \ \hsp \myendfor\\
15\ \ \myendif\\
16\ \ \myforall \theta' \in \{\theta\} \cup \U(\theta) \ \mydo \\
17\ \ \hsp \Delta(\theta') \leftarrow \sigma(\Delta(\theta'), e)\\
18\ \ \hsp \Gamma(\theta') \leftarrow \sigma(\Delta(\theta'))\\
19\ \ \myendfor\\
\\
\textsf{\footnotesize function defineTo}(\theta, \theta')\\
1\ \ \Delta(\theta) \leftarrow \Delta(\theta')\\
2\ \ \myforall \theta'' \sqsubset \theta \ \mydo \\
3\ \ \hsp \U(\theta'') \leftarrow \U(\theta'') \cup \{\theta\}\\
4\ \ \myendfor\\
\end{array}
$
\end{center}
\caption{\label{fig:C}Online parametric monitoring algorithm $\C$}
\end{figure}

Algorithm $\C$ in Figure \ref{fig:C} refines Algorithm $\B$ in Figure
\ref{fig:B} for efficient online monitoring.
Since no complete trace is given in online monitoring, $\C$ focuses on
actions to carry out when a parametric event $e\langle\theta\rangle$
arrives; in other words, it essentially expands the body of the outer
loop in $\B$ (lines 3 to 7 in Figure \ref{fig:B}).  The direct use of 
$\B$ would yield prohibitive runtime overhead when monitoring large
traces, because its inner loop requires search for all parameter
instances in $\Theta$ that are compatible with $\theta$; this search
can be very expensive.  $\C$ introduces an auxiliary data structure
and illustrates a mechanical way to accomplish the search, which also
facilitates further optimizations.
While $\B$ did not require that $\theta$ in $e\langle \theta \rangle$
be of finite domain, $\C$ needs that requirement in order to terminate.
Note that in practice $\Dom(\theta)$ is always finite (because the program
state is finite).

$\C$ uses three tables: $\Delta$, $\U$ and $\Gamma$.  $\Delta$ and
$\Gamma$ are the same as $\Delta$ and $\Gamma$ in $\B$, respectively.
$\U$ is an auxiliary data structure used to optimize the search
``for all $\theta' \in \{\theta\} \sqcup \Theta$'' in $\B$ (line 3 in
Figure \ref{fig:B}). 
It maps each parameter instance $\theta$ into the finite set of parameter instances encountered in $\Delta$ so far that are strictly more informative than $\theta$, i.e.,
$\U(\theta) = \{\theta' \mid \theta' \in \Dom(\Delta) \mbox{ and } \theta \sqsubset \theta'\}$.
Another major difference between $\B$ and $\C$ is that $\C$ does {\em not} maintain $\Theta$ during computation;
instead, $\Theta$ is implicitly captured by the domain of $\Delta$ in $\C$.
Intuitively, the $\Theta$ at the beginning/end of the body of the outer loop in $\B$ is the
$\Dom(\Delta)$ at the beginning/end of $\C$, respectively.
However, $\Theta$ is fixed during the loop at lines 3 to 6 in $\B$ and updated atomically in line~7, 
while $\Dom(\Delta)$ can be changed at any time during the execution of $\C$.

$\C$ is composed of two functions, \textsf{\footnotesize main} and \textsf{\footnotesize defineTo}.
The \textsf{\footnotesize defineTo} function takes two parameter instances, $\theta$
and $\theta'$, and adds a new entry corresponding to $\theta$ into
$\Delta$ and $\U$.  Specifically, it sets $\Delta(\theta)$ to
$\Delta(\theta')$ and adds $\theta$ into the set $\U(\theta'')$
for each $\theta'' \sqsubset \theta$.

The \textsf{\footnotesize main} function differentiates two cases when a new event
$e\langle\theta\rangle$ is received and processed.  The simpler case
is that $\Delta$ is already
defined on $\theta$, i.e., $\theta \in \Theta$ at the beginning of the
iteration of the outer loop in $\B$.  In this case,
$\{\theta\} \sqcup\Theta=\{\theta' \mid \theta' \in \Theta \mbox{ and
} \theta \sqsubseteq \theta'\} \subseteq \Theta$, so the lines 
3 to 6 in $\B$ become precisely the lines 16 to 19 in $\C$.
In the other case, when $\Delta$ is not already defined on $\theta$,
\textsf{\footnotesize main} takes two steps to handle $e$.
The first step searches for new parameter instances introduced by
$\{\theta\} \sqcup \Theta$ and adds entries for them into $\Delta$
(lines~2 to 14).
We first add an entry to $\Delta$ for $\theta$ at lines 2 to 7.
Then we search for all parameter instances $\theta_{comp}$ that are
compatible with $\theta$, making use of $\U$ (lines 8 and 9); for each
such $\theta_{comp}$, an appropriate entry is added to $\Delta$ for
its lub with $\theta$, and $\cal U$ updated accordingly (lines 10 to 12).
This way, $\Delta$ will be defined on all the new parameter instances
introduced by $\{\theta\} \sqcup \Theta$ after the first step.
In the second step, the related monitor states and outputs are updated
in a similar way as in the first case (lines 16 to 19).  It is
interesting to note how $\C$ searches at lines 2 and 8 for the
parameter instance $\max\,(\theta]_\Theta$ that $\B$ refers to at line
4 in Figure \ref{fig:B}: it enumerates all the
$\theta_{max}\sqsubset\theta$ in {\em reversed topological order}
(from larger to smaller); \textit{1.} in Proposition \ref{prop:max}
guarantees that the maximum exists and, since it is unique, our search
will find it.

\vspace{1ex}
\noindent
\textbf{\em Correctness of $\C$.}  We prove the correctness of
$\C$ by showing that it is equivalent to the body of the outer loop in
$\B$.  Suppose that parametric trace $\tau$ has already been
processed by both $\C$ and $\B$, and a new event
$e\langle\theta\rangle$ is to be processed next.

Let us first note that $\C$ terminates if $\Dom(\theta)$ is finite.
Indeed, if $\Dom(\theta)$ is finite then there is only a finite number of partial
maps less informative than $\theta$, that is, only a finite number of iterations
for the loops at lines 2 and 8 in \textsf{\footnotesize main}; since $\U$ is only
updated at line 3 in \textsf{\footnotesize defineTo}, $\U(\theta)$ is finite for any
$\theta \in \XV$ and thus the loop at line 9 in \textsf{\footnotesize main} also
terminates.  Assuming that running the base monitor $M$ takes constant
time, the worst case complexity of $\C(M)$ is $O(k \times l)$ to
process $e \langle \theta \rangle$, where $k$ is
$2^{\mid \Dom(\theta) \mid}$ and $l$ is the number of incompatible
parameter instances in $\tau$.  Parametric properties often have a
fixed and small number of parameters, in which case $k$ is not
significant.  Depending on the trace, $l$ can unavoidably grow
arbitrarily large; in the worst case, each event may carry an instance
incompatible with the previous ones.

\newtheorem{lemma}{Lemma}

\begin{lemma}
\label{lemma:u}
In the algorithm $\C$ in Figure \ref{fig:C},
$\U(\theta) = \{\theta' \mid \theta' \in \Dom(\Delta) \mbox{ and }
\theta \sqsubset \theta'\}$ before and after each execution of
\textsf{\footnotesize defineTo}, for all $\theta\in\XV$.
\end{lemma}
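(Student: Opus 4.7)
The plan is to prove the invariant by induction on the number of calls to \textsf{\footnotesize defineTo} executed so far. The key preliminary observation, which justifies this as the correct granularity, is that $\Dom(\Delta)$ is modified only at line 1 of \textsf{\footnotesize defineTo} and $\U$ is modified only at line 3 of \textsf{\footnotesize defineTo}: all other assignments to $\Delta$ in \textsf{\footnotesize main} (line 17) overwrite values at already-defined entries without changing $\Dom(\Delta)$, and none of the intervening reads at lines 2--15 of \textsf{\footnotesize main} touch $\U$ or enlarge $\Dom(\Delta)$. Hence the invariant, which only constrains $\U$ and $\Dom(\Delta)$, is automatically preserved between consecutive \textsf{\footnotesize defineTo} invocations, and it suffices to check (i) initialization and (ii) a single \textsf{\footnotesize defineTo} call.

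For the base case, $\Dom(\Delta)=\{\bot\}$ and $\U(\theta)=\emptyset$ for every $\theta$. Since $\bot$ is the least element of $(\XV,\sqsubseteq)$, no $\theta\in\XV$ satisfies $\theta\sqsubset\bot$, so $\{\theta'\in\Dom(\Delta)\mid\theta\sqsubset\theta'\}=\emptyset$ for every $\theta$, matching $\U(\theta)$.

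For the inductive step, consider a call \textsf{\footnotesize defineTo}$(\theta,\theta')$. A crucial point, to be extracted by inspecting the two call sites at lines 7 and 11 of \textsf{\footnotesize main} (guarded by the checks at lines 1 and 10 respectively), is that $\theta\notin\Dom(\Delta)$ immediately before the call; otherwise, line 3 of \textsf{\footnotesize defineTo} could redundantly re-add $\theta$ to some $\U(\theta'')$, which would still be harmless as sets, but the accounting below becomes cleaner with this guarantee. After line 1 of \textsf{\footnotesize defineTo}, $\Dom(\Delta)$ gains exactly the element $\theta$. Fix any $\eta\in\XV$ and compare the new $\U(\eta)$ to the new target set $\{\theta''\in\Dom(\Delta)\mid\eta\sqsubset\theta''\}$: if $\eta\sqsubset\theta$, the loop at lines 2--4 adds $\theta$ to $\U(\eta)$, matching the new entry $\theta$ in the target set; if $\eta\not\sqsubset\theta$ (including $\eta=\theta$), neither $\U(\eta)$ nor the target set is affected by the addition of $\theta$ to $\Dom(\Delta)$. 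By the induction hypothesis the two sets agreed before the call, and the updates are in lockstep, so they agree after.

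The only real obstacle is the bookkeeping around the call sites: one must verify that every \textsf{\footnotesize defineTo} invocation indeed satisfies $\theta\notin\Dom(\Delta)$, and that all read-only operations in \textsf{\footnotesize main} (the reversed-topological iteration at line 2, the compatibility scan at line 9, the lub formation at line 11) leave both $\U$ and $\Dom(\Delta)$ untouched. Both facts are immediate from the pseudocode in Figure~\ref{fig:C}, but writing them down explicitly is what makes the inductive argument rigorous rather than a hand-wave.
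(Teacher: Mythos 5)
Your proof is correct and follows essentially the same route as the paper's: an induction on the executions of \textsf{\footnotesize defineTo}, resting on the observation that \textsf{\footnotesize defineTo} is the only place where $\Dom(\Delta)$ or $\U$ changes and that its two updates (adding $\theta$ to $\Dom(\Delta)$ and adding $\theta$ to $\U(\theta'')$ for every $\theta''\sqsubset\theta$) keep the two sides of the invariant in lockstep. Your version is merely more explicit than the paper's about why line 17 of \textsf{\footnotesize main} cannot enlarge $\Dom(\Delta)$ and about the harmlessness of a hypothetical repeated insertion, which is fine.
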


\begin{proof}
By how $\C$ is initialized,
for any $\theta \in \XV$ we have
$\emptyset = \U(\theta) = \{\theta' \mid \theta' \in \Dom(\Delta)
\mbox{ and } \theta \sqsubset \theta'\}$ 
before the first execution of \textsf{\footnotesize defineTo}.
Now suppose that 
$\U(\theta) = \{\theta' \mid \theta' \in \Dom(\Delta) \mbox{ and }
\theta \sqsubset \theta'\}$ for any $\theta\in\XV$ before an execution
of \textsf{\footnotesize defineTo} and show that it also holds after the execution of
\textsf{\footnotesize defineTo}.
Since $\textsf{\footnotesize defineTo}(\theta,\theta')$ adds a new parameter instance
$\theta$ into $\Dom(\Delta)$ and also adds $\theta$ into the set
$\U(\theta'')$ for any $\theta''\in\XV$ with $\theta'' \sqsubset \theta$,
we still have $\U(\theta) = \{\theta' \mid \theta' \in \Dom(\Delta)
\mbox{ and } \theta \sqsubset \theta'\}$ for any $\theta\in\XV$ after
the execution of \textsf{\footnotesize defineTo}.  Also, the only way $\C$ can add a
new parameter instance $\theta$ into $\Dom(\Delta)$ is by using
\textsf{\footnotesize defineTo}.  Therefore the lemma holds.
\end{proof}

\newcommand{\DB}{\Delta_{\mathbb{B}}(\tau)}
\newcommand{\GB}{\Gamma_{\mathbb{B}}(\tau)}
\newcommand{\DBE}{\Delta_{\mathbb{B}}(\tau e)}
\newcommand{\GBE}{\Gamma_{\mathbb{B}}(\tau e)}

\newcommand{\DC}{\Delta_{\mathbb{C}}}
\newcommand{\UC}{\U_{\mathbb{C}}}
\newcommand{\GC}{\Gamma_{\mathbb{C}}}
\newcommand{\TC}{\Theta_{\mathbb{C}}}

\newcommand{\DCB}{\DC^{b}}
\newcommand{\DCM}{\DC^{m}}
\newcommand{\UCM}{\UC^{m}}
\newcommand{\GCB}{\GC^{b}}
\newcommand{\DCE}{\DC^{e}}
\newcommand{\GCE}{\GC^{e}}

The next theorem proves the correctness of $\C$.
Before we state and prove it, let us recall some previously introduced
notation and also introduce some new useful notation.  First, recall from
Definition~\ref{dfn:B-notation} that $\B(M)(\tau).\Delta$ and
$\B(M)(\tau).\Gamma$ are the $\Delta$ and $\Gamma$ data-structures
of $\B(M)$ after it processes trace
$\tau$.  Also, recall that we fixed parametric trace $\tau$ and event
$e\langle\theta\rangle$.
For clarity, let $\UC$, $\DC$, and $\GC$ be the three data-structures
maintained by $\C(M)$ (in other words, we index the data-structures
with the symbol $\mathbb{C}$).
Let $\DCB$ and $\GCB$ be the $\DC$ and $\GC$ when
$\textsf{\footnotesize main}(e\langle\theta\rangle)$ begins (``$b$'' stays for ``at
the \underline{b}eginning'');  
let $\DCE$ and $\GCE$ be the $\DC$ and $\GC$ when
$\textsf{\footnotesize main}(e\langle\theta\rangle)$ ends (``$e$'' stays for ``at
the \underline{e}nd''; and let $\DCM$ and $\UCM$ be the $\DC$ and
$\UC$ when $\textsf{\footnotesize main}(e\langle\theta\rangle)$ reaches line 16
(``$m$ stays for ``in the \underline{m}iddle'').

\begin{theorem}
\label{thm:implementation}
The following hold:
\begin{enumerate}[(1)]
\item
$\Dom(\DCM) = \{\bot, \theta\} \sqcup \Dom(\DCB)$;
\item
$\DCM(\theta')\! =\! \DCM(\max(\theta']_{\Dom(\DCB)}\!)$, for all $\theta'\! \in \!\Dom(\DCM)$;
\item
If $\DCB = \B(M)(\tau).\Delta$ and $\GCB = \B(M)(\tau).\Gamma$,
then $\DCE = \B(M)(\tau\,e\langle\theta\rangle).\Delta$ and
$\GCB = \B(M)(\tau\,e\langle\theta\rangle).\Gamma$.
\end{enumerate}
\end{theorem}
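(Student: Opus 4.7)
My plan is to case-split on whether line~1 of \textsf{main} finds $\Delta(\theta)$ defined, i.e., whether $\theta \in \Dom(\DCB)$. Throughout I rely on three facts: $\Dom(\DCB) = \Theta_\tau$ is lub closed (Theorem~\ref{thm:trace-slicing}(1) and Proposition~\ref{prop:Theta-tau-closed}), so $\max\,(\theta']_{\Dom(\DCB)}$ always exists (Proposition~\ref{prop:max}(1)); and Lemma~\ref{lemma:u} identifies $\U(\eta)$ at every program point with $\{\theta'' \in \Dom(\Delta) \mid \eta \sqsubset \theta''\}$. If $\theta \in \Dom(\DCB)$, the first phase (lines~2--14) is skipped, so $\DCM = \DCB$, and lub closure gives $\{\theta\} \sqcup \Dom(\DCB) \subseteq \Dom(\DCB)$, hence $\{\bot,\theta\} \sqcup \Dom(\DCB) = \Dom(\DCB)$; parts~(1) and~(2) are then immediate, with $\max\,(\theta']_{\Dom(\DCB)} = \theta'$.

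Assume now $\theta \notin \Dom(\DCB)$. For part~(1): line~7 inserts $\theta$, and every subsequent \textsf{defineTo} in lines~8--14 inserts some $\theta_{comp} \sqcup \theta$ with $\theta_{comp} \in \Dom(\Delta)$; a short induction on the order of insertions shows every inserted key lies in $\{\theta\} \sqcup \Dom(\DCB)$, because any newly inserted $\theta_{comp}$ already satisfies $\theta \sqsubseteq \theta_{comp}$, so $\theta_{comp} \sqcup \theta = \theta_{comp}$. Conversely, once the outer iteration reaches $\theta_{max} = \bot$, Lemma~\ref{lemma:u} places every $\theta'' \in \Dom(\DCB) \setminus \{\bot\}$ in $\U(\bot)$, so every compatible $\theta \sqcup \theta''$ is created unless already present. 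Hence $\Dom(\DCM) = \{\bot,\theta\} \sqcup \Dom(\DCB)$.

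For part~(2), the claim is trivial on $\Dom(\DCB)$, and for $\theta' = \theta$ it holds because the scan at lines~2--6 exits at the largest $\theta_{max} \sqsubset \theta$ with $\theta_{max} \in \Dom(\DCB)$, which by lub closure is $\max\,(\theta]_{\Dom(\DCB)}$. The delicate case is $\theta' \in \{\theta\} \sqcup \Dom(\DCB)$ with $\theta' \neq \theta$ and $\theta' \notin \Dom(\DCB)$. Let $\theta^* = \max\,(\theta']_{\Dom(\DCB)}$ and $\theta_{base}$ the common restriction of $\theta$ and $\theta^*$ to $\Dom(\theta) \cap \Dom(\theta^*)$. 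I claim the single \textsf{defineTo} creating $\Delta(\theta')$ is $\textsf{defineTo}(\theta',\theta^*)$, fired at the outer iteration $\theta_{max} = \theta_{base}$. Two combinatorial facts carry the argument. (i) If $\theta_{comp} \in \Dom(\DCB)$ with $\theta_{comp} \sqcup \theta = \theta'$, then $\theta_{comp} \sqsubseteq \theta^*$ (Proposition~\ref{prop:max}(2)); and the domain equation $\Dom(\theta_{comp}) \cup \Dom(\theta) = \Dom(\theta')$ forces $\Dom(\theta_{comp}) \supseteq \Dom(\theta^*) \setminus \Dom(\theta)$, so whenever $\theta_{base} \sqsubseteq \theta_{comp}$ one gets $\Dom(\theta_{comp}) = \Dom(\theta^*)$ and hence $\theta_{comp} = \theta^*$. (ii) For $\theta_{max} \sqsubset \theta$ with $\Dom(\theta_{max}) \not\subseteq \Dom(\theta_{base})$, no original $\theta_{comp} \in \Dom(\DCB) \cap \U(\theta_{max})$ can satisfy $\theta_{comp} \sqcup \theta = \theta'$: such $\theta_{comp}$ inherits from $\theta_{max}$ an element of $\Dom(\theta) \setminus \Dom(\theta^*)$, so $\theta_{comp} \not\sqsubseteq \theta^*$; then $\theta_{comp} \sqcup \theta^* \in \Dom(\DCB)$ by lub closure, is $\sqsubseteq \theta'$, and strictly exceeds $\theta^*$, contradicting maximality. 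Newly added $\theta_{comp}$'s satisfy $\theta \sqsubseteq \theta_{comp}$, so $\theta_{comp} \sqcup \theta = \theta_{comp}$, which either equals $\theta'$ (and line~10 skips) or is irrelevant. Since the reversed-topological scan visits every $\theta_{max} \sqsubset \theta$ with $\Dom(\theta_{max}) \supsetneq \Dom(\theta_{base})$, or of size $|\Dom(\theta_{base})|$ but incomparable to $\theta_{base}$, before $\theta_{base}$ itself, fact~(ii) precludes any \textsf{defineTo} for $\theta'$ until $\theta_{max} = \theta_{base}$ is reached, and at that point fact~(i) forces $\theta_{comp} = \theta^*$; all later iterations with $\theta_{max} \sqsubsetneq \theta_{base}$ are gated off by line~10's undefined-guard.

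Part~(3) then drops out. By Lemma~\ref{lemma:u} and part~(1), $\{\theta\} \cup \U(\theta) = \{\theta' \in \Dom(\DCM) \mid \theta \sqsubseteq \theta'\} = \{\theta\} \sqcup \Dom(\DCB)$, which is precisely the set enumerated by line~3 of $\B$ on input $e\langle\theta\rangle$. For each such $\theta'$, part~(2) yields $\DCM(\theta') = \DCB(\max\,(\theta']_{\Dom(\DCB)})$, so line~17 of $\C$ computes $\sigma(\DCB(\max\,(\theta']_{\Dom(\DCB)}), e)$, matching line~4 of $\B$; line~18 similarly matches line~5. All remaining entries are untouched by both algorithms and coincide by the hypothesis $\DCB = \B(M)(\tau).\Delta$, $\GCB = \B(M)(\tau).\Gamma$. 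The main obstacle throughout is the combinatorial argument in the preceding paragraph that the reversed-topological scan must select $\theta^*$ rather than any strictly smaller compatible instance.
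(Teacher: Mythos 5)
Your proof is correct and follows essentially the same route as the paper's: the same case split on whether $\Delta(\theta)$ is already defined, the same reliance on Lemma~\ref{lemma:u} and Proposition~\ref{prop:max}, and the same reduction of part~(3) to parts~(1)--(2) via the identity $\{\theta\}\cup\U(\theta)=\{\theta\}\sqcup\Dom(\DCB)$. Your handling of the delicate line-11 case in part~(2) --- pinning the unique firing of $\textsf{defineTo}(\theta',\theta^*)$ to the outer iteration $\theta_{max}=\theta\sqcap\theta^*$ via the two domain-counting facts, and explicitly disposing of the newly inserted elements of $\U(\theta_{max})$ --- is a more detailed rendering of the paper's maximality-of-$\theta_{max}$ contradiction, which leaves the existence of $\sqcap$ and one monotonicity step as exercises.
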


\begin{proof}
Let $\TC = \Dom(\DCB) = \Dom(\DB)$ and $\DB = \B(M)(\tau\langle\theta\rangle).\Delta$ for simplicity.

\vspace{1ex}
\noindent
\textit{1.}
There are two cases to analyze, depending upon whether $\theta$ is in 
$\TC$ or not.
If $\theta \in \TC$ then the lines 2 to 14 are skipped
and $\Dom(\DC)$ remains unchanged, that is,
$\{\bot, \theta\} \sqcup \TC = \TC =
\Dom(\DCB) = \Dom(\DCM)$ when $\textsf{\footnotesize main}(e\langle\theta\rangle)$ reaches line 16.
If $\theta \notin \TC$ then lines 2 to 14 are executed
to add new parameter instances into $\Dom(\DC)$. 
First, an entry for $\theta$ will be added to $\DC$ at line 7.
Second, an entry for $\theta_{comp}\sqcup\theta$ will be added to
$\DC$ at line 11 (if $\DC$ not already defined on
$\theta_{comp}\sqcup\theta$) eventually for any
$\theta_{comp}\in\TC$ compatible with $\theta$: that
is because $\theta_{max}$ can also be $\bot$ at line 8, in which case
Lemma \ref{lemma:u} implies that $\U(\theta_{max})=\TC$.
Therefore, when line 16 is reached, $\Dom(\DCM)$ is defined on all
the parameter instances in 
$\{\theta\}\cup(\{\theta\}\sqcup\TC)$.  Since
$\bot\in\TC$, the latter equals 
$\{\theta\}\sqcup\TC$, and since $\DCM$ remains
defined on $\TC$, we conclude that $\DCM$ is defined
on all instances in 
$(\{\theta\}\sqcup\TC)\cup\TC$, which
by \textit{5.} and \textit{7.} in Proposition \ref{prop:sqcup-simple}
equals $\{\bot, \theta\} \sqcup \TC$.

\vspace{1ex}
\noindent
\textit{2.} 
We analyze the same two cases as above.
If $\theta \in \TC$ then lines 2 to 14 are skipped
and $\Dom(\DC)$ remains unchanged.  Then
$\max\,(\theta']_{\TC}=\theta'$ for each
$\theta'\in\Dom(\DCM)$, so the result follows.
Suppose now that $\theta \notin \TC$.  By \textit{1.}
and its proof, each $\theta'\in\Dom(\DCM)$
is either in $\TC$ or otherwise in 
$(\{\theta\} \sqcup \TC) - \TC$.
The result immediately holds when
$\theta' \in \TC$ as
$\max\,(\theta']_{\TC} = \theta'$
and $\Delta(\theta')$ stays unchanged until line 16.
If $\theta' \in (\{\theta\} \sqcup \TC) -
\TC$ then $\Delta(\theta')$ is set at either line 7
($\theta'=\theta$) or at line 11 ($\theta'\neq\theta$):

(a) For line 7, the loop at lines 2 to 6 checks all the parameter
instances that are less informative than $\theta$ to find the first
one in $\TC$ in reversed topological order (i.e., if
$\theta_1 \sqsubset \theta_2$ then $\theta_2$ will be checked before
$\theta_1$).
Since by \textit{1.} in Proposition \ref{prop:max} we know that 
$\max\,(\theta]_{\TC}\in\TC$ exists
(and it is unique), the loop at lines 2 to 6 will break precisely
when $\theta_{max}=\max\,(\theta]_{\TC}$, so the
result holds when $\theta'=\theta$ because of the entry introduced for
$\theta$ in $\DC$ at line 7 and because the remaining lines 8 to 14
do not change $\DC(\theta)$.

(b) When $\DC(\theta')$ is set at line 11, note that
the loop at lines 8 to 14 also iterates over all
$\theta_{max} \sqsubset \theta$ in reversed topological order, so
$\theta'=\theta_{comp}\sqcup\theta$ for some
$\theta_{comp}\in\TC$ compatible with $\theta$ such
that $\theta_{max}\sqsubset\theta_{comp}$, where
$\theta_{max}\sqsubset\theta$ is such that there is no
other $\theta'_{max}$ with
$\theta_{max}\sqsubset\theta'_{max}\sqsubset\theta$ and
$\theta'=\theta'_{comp}\sqcup\theta$ for some
$\theta'_{comp}\in\TC$ compatible with $\theta$ such
that $\theta'_{max}\sqsubset\theta'_{comp}$.
%moreover, since the loop
%at line 9 iterates over all $\theta_{comp}\in\U(\theta_{max})$ also in
%reversed topological order, there is no other
%$\theta'_{comp}\in\TC$ such that
%$\theta_{comp}\sqsubset\theta_{comp'}$ and 
%$\theta'=\theta'_{comp}\sqcup\theta$.
%
We claim that there is only one such $\theta_{comp}$, which is
precisely $\max\,(\theta']_{\TC}$:
Let $\theta'_{comp}$ be the parameter instance
$\max\,(\theta']_{\TC}$.
The above implies that
$\theta_{comp}\sqsubseteq\theta'_{comp}\sqsubseteq\theta'$.
Also, $\theta'_{comp}\sqcup\theta=\theta'$ because
$\theta'=\theta_{comp}\sqcup\theta\sqsubseteq\theta'_{comp}\sqcup\theta\sqsubseteq\theta'$.
Let $\theta'_{max}$ be $\theta'_{comp}\sqcap\theta$, that is, the
largest with $\theta'_{max}\sqsubseteq\theta'_{comp}$ and 
$\theta'_{max}\sqsubseteq\theta$ (we let its existence as exercise).
It is relatively easy to see now that
$\theta_{comp}\sqsubset\theta'_{comp}$ implies
$\theta_{max}\sqsubset\theta'_{max}$ (we let it as an exercise, too),
which contradicts the assumption of this case that $\DC$ was not
defined on $\theta'$.  Therefore,
$\theta_{comp}=\max\,(\theta']_{\TC}$ before line 11
is executed, which means that, after line 11 is executed,
$\DC(\theta') = \DC(\max\,(\theta']_{\TC})$;
moreover, none of these will be changed anymore until line 16 is
reached, which proves our result.

\vspace{1ex}
\noindent
\textit{3.}
Since $\Gamma$ is updated according to $\Delta$ in both $\C$ and $\B$,
it is enough to prove that $\DCE = \DBE$.
For $\B$, we have \\
1) $\Dom(\DBE) = \{\bot, \theta\} \sqcup \TC = (\{\theta\} \sqcup \TC) \cup \TC$;\\
2) $\forall$ $\theta' \in \{\theta\} \sqcup \TC$, $\DBE(\theta') = \sigma(\DB(\max,(\theta']_{\TC}), e)$;\\
3) $\forall$ $\theta' \in \TC - \{\theta\} \sqcup \TC$, $\DBE(\theta') = \DB(\theta')$.\\
So we only need to prove that\\
1) $\Dom(\DCE) = \{\bot, \theta\} \sqcup \TC$;\\
2) $\forall$ $\theta' \in \{\theta\} \sqcup \TC$, $\DCE(\theta') = \sigma(\DCB(\max,(\theta']_{\TC}), e)$;\\
3) $\forall$ $\theta' \in \TC - \{\theta\} \sqcup \TC$, $\DCE(\theta') = \DCB(\theta')$.\\
By \textit{1.}, we have $\Dom(\DCM) = \{\bot,
\theta\} \sqcup \TC$.  Since lines 16 to 19 do not change
$\Dom(\DC)$, $\Dom(\DCE) = \Dom(\DCM) = \{\bot, \theta\} \sqcup
\TC$. 1) holds.  

By \textit{2.} and Lemma \ref{lemma:u}, $\DCM(\theta') =
\DCB(\max,(\theta']_{\TC})$ for any $\theta' \in \Dom(\DCM)$.
Also, notice that line 17 sets $\DC(\theta')$ to $\sigma(\DC(\theta'), e)$, which is $\sigma(\DCB(\max,(\theta']_{\TC}), e)$, for the $\theta'$ in the loop.
So, to show 2) and 3), we only need to prove that the loop at line 16 to 19 iterates over $\{\theta\} \sqcup \TC$.
Since lines 16 to 19 do not change $\UC$, we need to show $\{\theta\} \cup \UCM(\theta) = \{\theta\} \sqcup \TC$.
Since $\Dom(\DCM) = \{\bot, \theta\} \sqcup \TC$, we have $\{\theta\} \sqcup \Dom(\DCM) = \{\theta\} \sqcup (\{\bot, \theta\} \sqcup \TC) = \{\theta\} \sqcup ((\{\theta\} \sqcup \TC) \cup \TC)$.
By Proposition \ref{prop:sqcup-simple}, $\{\theta\} \sqcup \Dom(\DCM) = (\{\theta\} \sqcup (\{\theta\} \sqcup \TC)) \cup (\{\theta\} \sqcup \TC) = (\{\theta\} \sqcup \TC) \cup (\{\theta\} \sqcup \TC) = \{\theta\} \sqcup \TC$.
Also, as $\theta \in \Dom(\DCM)$, we have $\{\theta\} \sqcup \Dom(\DCM) = \{\theta' \mid \theta' \in \Dom(\DCM) \mbox{ and } \theta \sqsubseteq \theta'\} = \{\theta\} \sqcup \UCM(\theta)$ by Lemma \ref{lemma:u}.
So $\{\theta\} \cup \UCM(\theta) = \{\theta\} \sqcup \TC$.
\end{proof}

We conclude this section with a discussion on the complexity of the parametric monitoring algorithms 
$\A$ and $\C$ above.  Note that, in the worst case, to process a newly received parametric event
$e\langle\theta\rangle$ after $\A$ or $\C$ has already processed a parametric trace $\tau$, each of $\A$
or $\C$ takes at least linear time/space in the number of $\theta$-compatible parameter instances occurring
in events in $\tau$.  Indeed, $\A$ iterates explicitly through all such parameter instances
(line 3 in Figure~\ref{fig:B}), while $\C$ optimizes this traversal by only enumerating through maximal
parameter instances; in the worst case, we can assume that $\tau$ is such that each event comes with a new
parameter instance which is maximal, so in the worst case $\A$ and $\C$ can take linear time/space in $\tau$
to process $e\langle\theta\rangle$, which is, nevertheless, bad.  Indeed, it means that monitoring some traces
is incrementally slower (with no upper bound) as events are received, until the monitor eventually
runs out of resources.

Unfortunately, there is nothing to be fundamentally done to avoid
the problem above.  It is an inherent problem of parametric monitoring.
Consider, for example, the ``authenticate before use'' parametric property
specified in Section~\ref{sec:authenticate} using parametric LTL as
$\parametric{k}{\always(\textsf{\footnotesize use}\langle k \rangle \rightarrow \eventuallyPast
\textsf{\footnotesize authenticate}\langle k \rangle)}$; to make it clear that events depend on the parameter
key $k$, we tagged them with the key.  Without any knowledge about the semantics of the program to
monitor, any monitor for this property {\em must} store all the authenticated keys, i.e., all the instances of
the parameter $k$.  Indeed, without that, there is no way to know whether a key instance has been
authenticated or not when a \textsf{\footnotesize use} event is observed on that key.
The number of such key instances is theoretically unbounded, so, in the worst case, any
monitor for this property can be incrementally slower and eventually run out of resources.

As seen in Section~\ref{sec:implementation}, the runtime overhead due to
opaque monitoring of parametric properties tends to be manageable in practice.  By ``opaque''
we mean that no semantic information about the source code of the monitored program is used.
If the lack of an efficiency guarantee is a problem in some applications, then the alternative
is to statically analyze the monitored program and to use the obtained semantic information
to eliminate the need for monitoring.  For example, static analyses like those in
\cite{pql-oopsla,bodden-chen-rosu-2009-aosd,clara,dwyer-purandare-person-2010-rv} may
significantly reduce the need for instrumentation, even eliminate it completely.
Moreover, model-checking techniques for parametric properties could also be used for actually proving
that the properties hold and thus they need not be monitored; however, we are not aware of model checking
approaches to verifying parametric properties as presented in this paper.

\section{Implementation in JavaMOP and RV}
\label{sec:implementation}

The discussed parametric monitoring technique is now
fully implemented in two runtime verification systems, namely in
JavaMOP (see \texttt{http://javamop.org}) and in RV \cite{meredith-rosu-2010-rv}
(developed by a startup company, Runtime Verification, Inc.;
the RV system is currently publicly unavailable -- contact the first
author for an NDA-protected version of RV).
Here we first informally discuss several optimizations implemented
in the two runtime verification systems, and then we discuss our experiments
and the evaluation of the two systems.

\subsection{Implementation Optimizations}
Both JavaMOP and RV apply several optimizations
to the algorithm $\C$ in Section \ref{sec:optimized}, to reduce its runtime overhead.
These  are not discussed in depth here, because they are
orthogonal to the main objective of this paper.

\subsubsection{Optimizations in JavaMOP}

Note that $\C$ iterates through all the possible parameter
instances that are less informative than $\theta$ in three different
loops: at lines 2 and 8 in the \textsf{\footnotesize main} function, and at line 2
in the \textsf{\footnotesize defineTo} function.  Hence, it is important to reduce
the number of such instances in each loop.  Even though our semantics
and theoretical algorithms for parametric monitoring in this paper
work with infinite sets of parameters, our current implementation in
JavaMOP assumes that the set of parameters $X$ is bounded and fixed
apriori (declared as part of the specification to monitor).
A simple analysis of the events appearing in the specification
allows to quickly detect parameter instances that can never appear as
lubs of instances of parameters carried by events; maintaining any
space for those in $\Delta$, or $\Gamma$, or iterating over them in
the above mentioned loops, is a waste.  For example, if a
specification contains only two event definitions,
$e_1\langle p_1 \rangle$ and $e_2\langle p_1, p_2 \rangle$,
parameter instances defining only parameter $p_2$ can never appear as
lubs of observed parameter instances.
A static analysis of the specification, discussed in
\cite{chen-meredith-jin-rosu-2009-ase,bodden-chen-rosu-2009-aosd},
exhaustively explores all possible event combinations that can lead to
situations of interest to the property, such as to violation,
validation, etc.  Such information is useful to reduce the number of
loop iterations by skipping iterations over parameter instances that
cannot affect the result of monitoring.  These static analyses are
currently used at compile time in our new JavaMOP implementation to
unroll the loops in $\C$ and reduce the size of $\Delta$ and $\U$.

Another optimization is based on the observation that 
it is convenient to start the monitoring process only when certain events
are received.  Such events are called monitor creation events in
\cite{chen-rosu-2007-oopsla}.  The parameter instances carried by such
creation events may also be used to reduce the number of parameter
instances that need to be considered.  An extreme, yet surprisingly
common case is when creation events instantiate {\em all} the property
parameters.  In this case, the monitoring process does not need to
search for compatible parameter instances even when an event with an
incomplete parameter instance is observed.  The old JavaMOP
\cite{chen-rosu-2007-oopsla} supported only traces whose monitoring
started with a fully instantiated monitor creation event; 
this was perceived (and admitted) as a performance-tradeoff limitation
of JavaMOP \cite{oopsla07abc} (and \cite{chen-rosu-2007-oopsla}).
Interestingly, it now becomes just a common-case optimization of our
novel, general and unrestricted technique presented here.

\subsubsection{Optimizations in RV}
The RV system implements all the optimizations in JavaMOP and adds
two other important optimizations that significantly reduce the overhead.

The first additional optimization of RV is a non-trivial garbage-collector \cite{jin-meredith-griffith-rosu-2011-pldi}.
Note that JavaMOP also has a garbage collector, but it only does the obvious:
it garbage collects a monitor instance only when all its corresponding parameter
instances are collected.  RV performs a static analysis of the property
to monitor and, based on that, it garbage collects a monitor instance as soon
as it realizes that it can never trigger in the future.  This
can happen when any triggering behavior needs at least one event that
can only be generated in the presence of a parameter instance that is already
dead.  Consider, for example, the safe iterator example in Section~\ref{ex:safe-iterators},
and consider that iterator $i_7$ is created for collection $c_3$.  Then a monitor instance
corresponding to the parameter instance $\langle c_3\ i_7 \rangle$ is created and manged.
Suppose that, at some moment, the iterator $i_7$ is garbage collected by the JVM.
Can the monitor instance corresponding to $\langle c_3\ i_7 \rangle$ be garbage collected?
Not in JavaMOP, because, for safety, JavaMOP collects a monitor only when all its parameter
instances are collected, and in this case $c_3$ is still alive.  However, this monitor is flagged for
garbage collection in RV.  The rationale for doing so is that the only way for the monitor to
trigger is to eventually encounter a \textsf{\footnotesize next} event with $i_7$ as parameter,
but that event can never be generated because $i_7$ is dead.  Note, on the other hand,
that the monitor $\langle c_3\ i_7 \rangle$ cannot be garbage collected if $c_3$ is collected but
$i_7$ is still alive, because the iterator alone can still violate the safe-iterator property, even if
its corresponding collection is already dead.

Runtime verification systems like JavaMOP and Tracematches use off-the-shelf
weak reference libraries to implement their garbage collectors.  However, it turns
out that these libraries, in order to be general and thus serve their purpose,
perform many checks that are unnecessary in the context of monitoring.
The second optimization of RV in addition to those of JavaMOP consists of a collection of
data structures based on weak references, which was carefully engineered to
take full advantage of the particularities of monitoring parametric properties.
These data structures allow for effective indexing and lazy collection of monitors,
to minimize the number of expensive traversals of the entire pool of monitors.
Moreover, RV caches monitor instances to save time when the same monitor instances are
accessed frequently.  For example, there is a high chance that the same iterator is accessed
several times consecutively by a program, in which case saving and then retrieving the same
corresponding monitor instances from the data-structures at each iterator access can take
considerable unnecessary overhead.

\subsection{Experiments and Evaluation}

%%%%%%%%%%%%%%%%%%%%%%%%%%%%%%%%%%%%%%%%%%%
% Implementation And Evaluation
%%%%%%%%%%%%%%%%%%%%%%%%%%%%%%%%%%%%%%%%%%%
%\vspace{3ex}
%\section{Evaluation of the RV System}
\label{sec:imp_eval}
%\vspace{3ex}

We next discuss our experience with using the two runtime verification systems that implement
optimized variants of the parametric property monitoring techniques describe in this paper.
Also, we compare their performance with that of {Tracematches}, which is, at our knowledge,
the most efficient runtime verification system besides JavaMOP and RV.  Recall that Tracematches
achieves virtually the same semantics of parametric monitoring like ours, but using a considerably
different approach.

\subsubsection{Experimental Settings}
We used a Pentium 4 2.66GHz / 2GB RAM / Ubuntu 9.10
machine and version 9.12 of the DaCapo (DaCapo 9.12) benchmark
suite~\cite{DaCapo:paper}, the most up-to-date version.  We also present some of the
results of our experiments using the previous version of DaCapo,
2006-10 MR2 (DaCapo 2006-10), namely those for the {bloat} and {jython} benchmarks.
DaCapo 9.12 does not provide the {bloat} benchmark from the DaCapo 2006-10, which
we favor because it generates large overheads when monitoring
{iterator}-based properties.  The {bloat} benchmark with the
\unsafeiter{} specification causes 19194\% runtime overhead (i.e., 192~times
slower) and uses 7.7MB of heap memory in {Tracematches}, and causes
569\% runtime overhead and uses 147MB in {JavaMOP}, while the original
program uses only 4.9MB.  Also, although the DaCapo 9.12 provides
{jython}, {Tracematches} cannot instrument {jython}
due to an error that we were not able to understand or fix.  Thus, we present the result of
{jython} from the DaCapo 2006-10.  The default data input for DaCapo was used and the
{-converge} option to obtain the numbers after convergence within
$\pm3\%$. 
%We also evaluated all the systems using the {Tracematches} benchmark
%suite~\cite{tm-oopsla-07}, but we show only one interesting case among them
%for the space limitation.  and SPECjvm 2008~\cite{specjvm2008} and show the
%result of them as well.
Instrumentation introduces a
different garbage collection behavior in the monitored program, sometimes
causing the program to slightly outperform the original program; this accounts
for the negative overheads seen in both runtime and memory. 

We used the Sun JVM 1.6.0 for the entire evaluation.  The AspectJ
compiler ({ajc}) version 1.6.4 is used for weaving the aspects generated by {JavaMOP} and
{RV} into the target benchmarks.  Another AspectJ compiler, 
{abc}~\cite{abc-05} 1.3.0, is used for weaving 
{Tracematches} properties because {Tracematches}
is part of {abc} and does not work with {ajc}. 
For {JavaMOP}, we used the most recent
release version, 2.1.2. For
{Tracematches}, we used the most recent release version, 1.3.0,
from~\cite{tm-bench-march-2008}, which is included in the {abc}
compiler as an extension.  To figure out the reason that some examples do not
terminate when using {Tracematches}, we also used the {abc} compiler
for weaving aspects generated by JavaMOP and {RV}.  Note that JavaMOP and {RV} are
AspectJ compiler-independent. They show similar overheads and
terminate on all examples when using the {abc} compiler for
weaving as when {ajc} is used.  Because the overheads are similar,
we do not present the results of using {abc} to weave JavaMOP and 
{RV} generated aspects in this paper.  However, using {abc} to weave
JavaMOP and {RV} properties confirms that the high overhead and
non-termination come from {Tracematches} itself, not from
the~{abc}~compiler.

The following properties are used in our experiments.  Some of them were already discussed
in Sections \ref{sec:motivating-examples} and \ref{sec:examples}, others are borrowed from
\cite{bodden-chen-rosu-2009-aosd, tm-static-ecoop,
  meredith-jin-chen-rosu-2008-ase,
  chen-meredith-jin-rosu-2009-ase}.\medskip

\begin{iteMize}{$\bullet$}
\item \hasnext: Do not use the next element in an iterator without
  checking for the existence of it;\medskip

\item \unsafeiter: Do not update a collection when using the iterator
  interface to iterate its elements;\medskip

\item \unsafemap: Do not update a map when using the iterator
  interface to iterate its values or its keys;\medskip

\item \unsafesynccoll: If a collection is synchronized, then its
  iterator also should be accessed synchronously;\medskip

\item \unsafesyncmap: If a collection is synchronized, then its iterators on values and keys also should be accessed synchronized.\medskip
\end{iteMize}

\noindent All of them are tested on {Tracematches}, {JavaMOP}, and {RV}
for comparison. We also monitored all five properties at the same time in {RV},
which was not possible in other monitoring systems for performance reasons or structural limitations.

\begin{figure*}[t!]
%\figspacetop
\begin{center}
\scalebox{0.65}{
\begin{tabular}{|c@{\hspace{2pt}}
%|l@{}
|@{\hspace{2pt}}r@{\hspace{2pt}}
|@{\hspace{2pt}}rrr@{\hspace{2pt}}
|@{\hspace{2pt}}rrr@{\hspace{2pt}}
|@{\hspace{2pt}}rrr@{\hspace{2pt}}
|@{\hspace{2pt}}rrr@{\hspace{2pt}}
|@{\hspace{2pt}}rrr@{\hspace{2pt}}
||r|}
\hline

% & \multicolumn{1}{@{\hspace{2pt}}c|@{\hspace{2pt}}}{}
 & \multicolumn{1}{@{\hspace{2pt}}c|@{\hspace{2pt}}}{\property{Orig} (sec)}
 & \multicolumn{3}{@{\hspace{2pt}}c|@{\hspace{2pt}}}{\hasnext}
 & \multicolumn{3}{@{\hspace{2pt}}c|@{\hspace{2pt}}}{\unsafeiter}
 & \multicolumn{3}{@{\hspace{2pt}}c|@{\hspace{2pt}}}{\property{UnsafeMapIter}} 
 & \multicolumn{3}{@{\hspace{2pt}}c|@{\hspace{2pt}}}{\property{UnsafeSyncColl}} 
 & \multicolumn{3}{@{\hspace{2pt}}c@{\hspace{2pt}}||}{\property{UnsafeSyncMap}} 
 & \multicolumn{1}{@{\hspace{2pt}}c|}{\property{All}} \\

\hline

(A)
 &
 & {TM}
 & {MOP}
 & {RV}
 & {TM}
 & {MOP}
 & {RV}
 & {TM}
 & {MOP}
 & {RV}
 & {TM}
 & {MOP}
 & {RV}
 & {TM}
 & {MOP}
 & {RV}
 & {RV}
\\

\hline

%DaCapo
 bloat
& 3.6
& 2119
& 448
& 116
& 19194
& 569
& 251
& $\infty$
& 1203
& 178
& 1359
& 746
& 212
& 1942
& 716
& 130
& 982
\\

%2006-10
jython
& 8.9
& 13
& 0
& 0
& 11
& 0
& 1
& 150
& 18
& 3
& 11
& 1
& 1
& 10
& 0
& 0
& 4
\\

%antlr
%& 
%& 1
%& 4
%& -2
%& 0
%& 3
%& -2
%& 3
%& 3
%& 1
%& -1
%& -1
%& -1
%& 0
%& -2
%& 0
%& 0
%\\
%
%chart
%& 
%& 1
%& 0
%& -2
%& 15
%& 2
%& -1
%& 1
%& 0
%& -2
%& -2
%& -2
%& -1
%& -2
%& -2
%& -2
%& 0
%\\
%
%eclipse
%& 
%& 1
%& -4
%& -2
%& 1
%& -5
%& -4
%& 0
%& -5
%& -3
%& -5
%& -4
%& -5
%& -5
%& -2
%& -3
%& -3
%\\
%
%fop
%& 
%& 2
%& 4
%& -2
%& 4
%& 7
%& -1
%& 9
%& 7
%& -2
%& 1
%& -2
%& -2
%& -1
%& -3
%& -1
%& 1
%\\
%
%hsqldb
%& 
%& 15
%& 0
%& -3
%& 13
%& -1
%& -3
%& 13
%& 1
%& -3
%& 9
%& -4
%& -2
%& 7
%& -3
%& -3
%& -3
%\\
%
%luindex
%& 
%& -7
%& 1
%& -1
%& 4
%& -2
%& -1
%& 3
%& -1
%& 0
%& -1
%& 2
%& 0
%& -1
%& 2
%& 0
%& 12
%\\
%
%lusearch
%& 
%& 3
%& -1
%& -2
%& 22
%& 1
%& 2
%& 7
%& 0
%& -7
%& 3
%& 0
%& -6
%& 5
%& 4
%& 0
%& 3
%\\
%
%pmd
%& 
%& 70
%& 26
%& -1
%& 207
%& 12
%& 5
%& OOM
%& 181
%& 56
%& 40
%& 13
%& 2
%& 58
%& 17
%& -1
%& 69
%\\
%
%xalan
%& 
%& 5
%& 1
%& -1
%& 16
%& 4
%& 0
%& 5
%& 5
%& 0
%& 7
%& -1
%& -2
%& 7
%& 0
%& -1
%& 1
%\\

\hline

 avrora
& 13.6
& 45
& 54
& 55

& 637
& 311
& 118

& $\infty$
& 113
& 42

& 75
& 144
& 80

& 54
& 74
& 16
& 275
\\

 batik
& 3.5
& 3
& 2
& 3

& 355
& 9
& 8

& $\infty$
& 8
& 5

& 208
& 9
& 9

& 5
& 3
& 0
& 28
\\

 eclipse
& 79.0

& -2
& 4
& -1

& 0
& -1
& -1

& 5
& -3
& 0

& -4
& 2
& 1

& $\infty$
& -1
& -1

& 0
\\

 fop
& 2.0
& 200
& 49
& 48

& 350
& 21
& 13

& $\infty$
& 58
& 14

& $\infty$
& 78
& 25

& $\infty$
& 71
& 19

& 133
\\

 h2
& 18.7
& 89
& 17
& 13

&128
& 9
& 4

& 1350
& 21
& 6

& 868
& 21
& 4

& 83
& 20
& 5

& 23
\\

%DaCapo
luindex
& 2.9
& 0
& 0
& 1

& 0
& 0
& 1

& 1
& 4
& 1

& 1
& 1
& 1

& 2
& 0
& 0

& 1
\\

%9.12
lusearch
& 25.3
& -1
& 1
& 0

& 1
& 2
& 2

& 2
& 2
& 0

& 4
& 0
& 1

& 3
& 1
& 1

& 3
\\

pmd
& 8.3
& 176
& 84
& 59

& 1423
& 162
& 123

& $\infty$
& 571
& 188

& 1818
& 192
& 76

& $\infty$
& 144
& 26

& 620
\\

sunflow
& 32.7
& 47
& 5
& 3

& 7
& 2
& 0

& 9
& 4
& 1

& 13
& 6
& 5

& 17
& 6
& 6

& 6
\\

tomcat
& 13.8
& 8
& 1
& 1

& 37
& 1
& 1

& 3
& 1
& 1

& 2
& 0
& 1

& 2
& 1
& 3

& 1
\\

tradebeans
& 45.5
& 0
& -1
& 1

& 1
& 1
& 2

& 5
& 3
& -1

& -1
& 1
& 2

& 3
& 1
& 5

& 2
\\

tradesoap
& 94.4
& 1
& 3
& 0

& 2
& 1
& 1

& 2
& 0
& 1

& 0
& 0
& 1

& 2
& 2
& 5

& 1
\\

xalan
& 20.3
& 4
& 2
& 2

& 27
& 7
& 2

& 10
& 5
& 2

& 3
& 2
& 3

& 4
& 4
& 3

& 4
\\

%\hline
%
%tmbench
%& jHotDraw
%& 3.0
%& 1094 *
%& 797 *
%& 652 *
%& 2271 *
%& 1123 *
%& 936 *
%& -2
%& 0
%& -1
%& -2
%& -1
%& -2
%& 1
%& 0
%& 1
%& 1524
%\\

\hline
\hline

 (B)
 &
 & {TM}
 & {MOP}
 & {RV}
 & {TM}
 & {MOP}
 & {RV}
 & {TM}
 & {MOP}
 & {RV}
 & {TM}
 & {MOP}
 & {RV}
 & {TM}
 & {MOP}
 & {RV}
 & {RV}
\\

\hline

%antlr
%& 4.4
%& 4.3
%& 4.2
%& 4.2
%& 4.2
%& 4.1
%& 4.2
%& 4.2
%& 4.1
%& 4.1
%& 4.2
%& 4.1
%& 4.9
%& 4.2
%& 4.3
%& 4.1
%& 4.0
%\\

%DaCapo
bloat
& 4.9
& 56.8
& 19.3
& 13.2
& 7.7
& 146.8
& 79.0
& $\infty$
& 173.4
& 56.1
& 6.8
& 127.9
& 48.3
& 6.9
& 55.4
& 12.7
& 340.9
\\

%chart
%& 17.0
%& 17.8
%& 17.0
%& 17.1
%& 20.7
%& 17.3
%& 17.3
%& 17.5
%& 20.0
%& 16.8
%& 16.9
%& 17.1
%& 17.5
%& 17.5
%& 17.1
%& 17.3
%& 20.4
%\\

%eclipse
%& 32.1
%& 31.2
%& 30.1
%& 28.4
%& 28.7
%& 32.3
%& 30.5
%& 28.2
%& 28.4
%& 30.1
%& 27.7
%& 27.1
%& 29.4
%& 28.0
%& 31.6
%& 29.8
%& 33.2
%\\

%fop
%& 9.9
%& 14.0
%& 10.8
%& 10.1
%& 12.5
%& 10.3
%& 10.7
%& 11.5
%& 14.3
%& 10.7
%& 10.4
%& 10.3
%& 11.2
%& 10.1
%& 11.8
%& 10.6
%& 10.3
%\\

%hsqldb
%& 141.0
%& 140.6
%& 137.2
%& 138.3
%& 138.9
%& 136.4
%& 136.9
%& 142.7
%& 143.0
%& 136.9
%& 136.1
%& 146.8
%& 136.8
%& 142.8
%& 136.2
%& 136.9
%& 140.8
%\\

%2006-10
 jython
& 5.3
& 5.7
& 4.6
& 4.8
& 4.9
& 4.6
& 4.8
& 6.0
& 19.5
& 4.7
& 5.3
& 4.5
& 4.4
& 5.9
& 4.8
& 5.1
& 4.7
\\

%luindex
%& 4.8
%& 4.6
%& 4.9
%& 5.2
%& 5.0
%& 5.1
%& 5.3
%& 4.6
%& 5.0
%& 5.1
%& 4.9
%& 5.7
%& 4.9
%& 4.9
%& 6.1
%& 5.3
%& 5.3
%\\

%lusearch
%& 6.4
%& 6.0
%& 6.3
%& 6.3
%& 7.3
%& 6.8
%& 6.4
%& 6.8
%& 6.5
%& 5.9
%& 6.5
%& 6.0
%& 6.2
%& 6.4
%& 6.4
%& 5.9
%& 6.4
%\\

%pmd
%& 25.5
%& 24.3
%& 31.4
%& 33.3
%& 23.6
%& 49.8
%& 45.9
%& OOM
%& 113.1
%& 56.6
%& 23.2
%& 55.8
%& 38.1
%& 22.3
%& 49.8
%& 33.2
%& 121.7
%\\

%xalan
%& 28.6
%& 26.1
%& 30.2
%& 29.9
%& 31.7
%& 32.0
%& 30.9
%& 31.0
%& 33.4
%& 28.7
%& 30.8
%& 30.6
%& 31.5
%& 29.2
%& 26.4
%& 32.9
%& 29.0
%\\

\hline

 avrora
& 4.7
& 4.6
& 12.4
& 9.1
& 4.4
& 136.2
& 15.8
& $\infty$
& 14.7
& 8.5
& 4.3
& 28.0
& 12.6
& 4.4
& 13.0
& 4.9
& 22.3
\\

 batik
& 79.1
& 79.2
& 78.7
& 79.3
& 75.2
& 93.6
& 86.6
& $\infty$
& 91.2
& 79.6
& 78.2
& 93.2
& 85.1
& 79.9
& 86.9
& 76.7
& 104.3
\\

 eclipse
& 95.9
& 100.8
& 107.6
& 97.1
& 98.3
& 100.0
& 110.3
& 106.9
& 93.8
& 101.1
& 100.4
& 109.2
& 90.1
& $\infty$
& 98.6
& 98.7
& 98.9
\\

 fop
& 20.7
& 97.4
& 47.1
& 52.5
& 24.3
& 24.2
& 29.4
& $\infty$
& 69.2
& 28.1
& $\infty$
& 54.8
& 24.8
& $\infty$
& 55.9
& 25.2
& 47.5
\\

 h2
& 265.0
& 267.8
& 598.5
& 565.2
& 267.2
& 266.2
& 262.4
& 312.4
& 688.3
& 268.2
& 271.4
& 690.3
& 265.5
& 271.0
& 718.3
& 270.0
& 283.7
\\

%DaCapo
 luindex
& 6.8
& 5.6
& 5.5
& 5.6
& 6.3
& 6.9
& 6.8
& 7.4
& 8.2
& 6.9
& 7.4
& 7.4
& 7.5
& 7.1
& 7.4
& 11.0
& 11.8
\\

%9.12
 lusearch
& 4.6
& 4.7
& 4.4
& 4.8
& 4.6
& 4.8
& 4.2
& 4.0
& 4.3
& 4.8
& 4.5
& 4.5
& 4.6
& 4.6
& 4.8
& 4.7
& 4.7
\\

 pmd
& 18.0
& 56.9
& 59.8
& 48.5
& 17.2
& 146.3
& 86.4
& $\infty$
& 212.7
& 93.6
& 20.3
& 238.4
& 84.6
& $\infty$
& 117.1
& 32.9
& 420.0
\\

 sunflow
& 4.4
& 4.5
& 4.8
& 4.9
& 4.8
& 4.3
& 4.7
& 4.7
& 4.4
& 4.4
& 5.1
& 4.3
& 4.9
& 4.5
& 4.7
& 4.5 
& 4.6
\\

 tomcat
& 11.6
& 11.4
& 12.3
& 11.4
& 12.5
& 11.0
& 11.5
& 11.9
& 11.4
& 11.0
& 11.3
& 11.3
& 11.3
& 11.4
& 11.4
& 11.8
& 11.8
\\

 tradebeans
& 63.2
& 62.9
& 62.7
& 62.1
& 63.7
& 63.9
& 64.1
& 63.3
& 62.5
& 62.7
& 63.2
& 62.8
& 62.0
& 64.0
& 62.8
& 64.0
& 62.5
\\

 tradesoap
& 64.1
& 61.8
& 62.3
& 63.3
& 63.4
& 63.1
& 64.4
& 64.1
& 63.5
& 62.0
& 60.7
& 65.0
& 65.9
& 65.5
& 64.5
& 65.6
& 64.5
\\

 xalan
& 4.9
& 4.9
& 5.0
& 5.1
& 4.9
& 4.9
& 4.9
& 4.9
& 4.5
& 4.9
& 5.0
& 4.8
& 5.0
& 5.1
& 4.9
& 4.9
& 5.0
\\

%\hline
%
%tmbench
%& jHotDraw
%& 1.2
% 
%& 1.2 *
%& 3.5 *
%& 3.3 *
%
%& 1.2 *
%& 99.1 *
%& 4.3 *
%
%& 1.2
%& 1.2
%& 1.2
%
%& 1.2
%& 1.2
%& 1.2
%
%& 1.2
%& 1.2
%& 1.2
%
%& 5.0
%\\

\hline
\end{tabular}
}
\caption{Comparison of {Tracematches} (TM), {JavaMOP} (MOP), and {RV}: 
(A) average {\em percent} runtime overhead;
(B) total peak memory usage in MB. 
(convergence within 3\%, $\infty$: not terminated after 1 hour)
%\property{Orig} = Original Program without Any Property
%OOM means \underline{o}ut \underline{o}f \underline{m}emory after about 40 hours running
}
\label{table:performance_evaluation}
\end{center}
%\captionspace
\end{figure*}

\begin{figure*}[t!]
\hspace{-2pt}
%\figspacetop
\begin{center}
%\hspace*{-1.5ex}
\scalebox{0.6}{
\begin{tabular}{|l@{}
%|l@{}
|@{\hspace{2pt}}rrrr@{\hspace{2pt}}
|@{\hspace{2pt}}rrrr@{\hspace{2pt}}
|@{\hspace{2pt}}rrrr@{\hspace{2pt}}
|@{\hspace{2pt}}rrrr@{\hspace{2pt}}
|@{\hspace{2pt}}rrrr@{\hspace{2pt}}
|}
\hline

% & 
 & \multicolumn{4}{@{\hspace{2pt}}c|@{\hspace{2pt}}}{\hasnext}
 & \multicolumn{4}{@{\hspace{2pt}}c|@{\hspace{2pt}}}{\unsafeiter}
 & \multicolumn{4}{@{\hspace{2pt}}c|@{\hspace{2pt}}}{\property{UnsafeMapIter}} 
 & \multicolumn{4}{@{\hspace{2pt}}c|@{\hspace{2pt}}}{\property{UnsafeSyncColl}} 
 & \multicolumn{4}{@{\hspace{2pt}}c@{\hspace{2pt}}|}{\property{UnsafeSyncMap}} \\

\hline

%&
& E
& M
& FM
& CM
& E
& M
& FM
& CM
& E
& M
& FM
& CM
& E
& M
& FM
& CM
& E
& M
& FM
& CM
\\

\hline

%DaCapo
bloat
% HasNext
& 155M % 155,334,003
& 1.9M % 1,893,341
& 1.9M % 1,876,916
& 1.8M % 1,781,236
% UnsafeIter
& 81M % 80,779,968
& 1.9M % 1,875,376
& 1.8M % 1,841,624
& 1.6M % 1,646,822
% UnsafeMapIter
& 74M % 74,308,077
& 3.6M % 3,565,775
& 43K % 43,466
& 3.4M % 3,396,905
% UnsafeSyncColl
& 143M % 143,426,591
& 4.1M % 4,072,631
& 0
& 3.7M % 3,658,415
% UnsafeSyncMap
& 161M % 160,937,777
& 3.4M % 3,372,999
& 0
& 3.4M % 3,353,704
\\

%2006-10
jython
% HasNext
& 106
& 50
& 47
& 26
% UnsafeIter
& 179K % 179,360
& 50
& 38
& 38
% UnsafeMapIter
& 179K %179,488
& 101K % 101,122
& 94
& 101K % 101,063
% UnsafeSyncColl
& 156
& 100
& 0
& 83
% UnsafeSyncMap
& 256
& 150
& 0
& 122
\\

\hline

avrora
% HasNext
& 1.5M % 1,510,849
& 909K % 908,977
& 850K % 850,288
& 765K % 765,156
% UnsafeIter
& 1.4M % 1,367,472
& 909K % 908,974
& 860K % 859,695
& 808K % 807,983
% UnsafeMapIter
& 1.3M % 1,263,517
& 1.2M % 1,244,093
& 18 % 18
& 1.2M % 1,209,526
% UnsafeSyncColl
& 2.4M % 2,420,126
& 1.8M % 1,817,956
& 0 % 0
& 1.7M % 1,670,290
% UnsafeSyncMap
& 1.5M % 1,511,696
& 909K % 909,011
& 0 % 0
& 904K % 904,839
\\

batik
% HasNext
& 49K
& 24K
& 21K
& 21K
% UnsafeIter
& 125K
& 24K
& 21K
& 10K
% UnsafeMapIter
& 55K
& 33K
& 140
& 27K
% UnsafeSyncColl
& 73K
& 50K
& 0
& 34K
% UnsafeSyncMap
& 50K
& 26K
& 0
& 26K
\\

eclipse
% HasNext
& 226K
& 7.6K
& 5.3K
& 2.9K
% UnsafeIter
& 119K
& 6.6K
& 5.1K
& 2.6K
% UnsafeMapIter
& 113K
& 22K
& 2.2K
& 7.8K
% UnsafeSyncColl
& 233K
& 15K
& 0
& 7.5K
% UnsafeSyncMap
& 241K
& 18K
& 0
& 9.2K
\\

fop
% HasNext
& 1.0M
& 184K
& 74K
& 151K
% UnsafeIter
& 709K
& 7.7K
& 7.2K
& 1.8K
% UnsafeMapIter
& 499K
& 177K
& 67
& 160K
% UnsafeSyncColl
& 1.2M
& 239K
& 0
& 217K
% UnsafeSyncMap
& 1.2M
& 231K
& 0
& 213K
\\

h2
% HasNext
& 27M
& 6.5M
& 6.0M
& 5.6M
% UnsafeIter
& 12M
& 3.7K
& 3.3K
& 1.3K
% UnsafeMapIter
& 12M
& 6.6M
& 9
& 6.5M
% UnsafeSyncColl
& 27M
& 6.5M
& 0
& 6.5M
% UnsafeSyncMap
& 27M
& 6.5M
& 0
& 6.5M
\\

%DaCapo
luindex
% HasNext
& 371
& 66
& 40
& 2
% UnsafeIter
& 4.4K
& 65
& 39
& 0
% UnsafeMapIter
& 378
& 183
& 2
& 59
% UnsafeSyncColl
& 436
& 132
& 0
& 30
% UnsafeSyncMap
& 472
& 125
& 0
& 25
\\

%9.12
lusearch
% HasNext
& 1.4K
& 131
& 196
& 114
% UnsafeIter
& 748K
& 130
& 210
& 18
% UnsafeMapIter
& 20K
& 944
& 338
& 1.4K
% UnsafeSyncColl
& 1.7K
& 262
& 0
& 402
% UnsafeSyncMap
& 1.8K
& 263
& 0
& 158
\\

pmd
% HasNext
& 8.3M
& 789K
& 694K
& 571K
% UnsafeIter
& 6.4M
& 551K
& 473K
& 382K
% UnsafeMapIter
& 4.3M
& 1.3M
& 110K
& 1.1M
% UnsafeSyncColl
& 8.8M
& 1.5M
& 0
& 1.3M
% UnsafeSyncMap
& 8.6M
& 1.1M
& 0
& 999K
\\

sunflow
% HasNext
& 2.7M
& 101K
& 101K
& 100K
% UnsafeIter
& 1.3M
& 2
& 0
& 0
% UnsafeMapIter
& 1.3M
& 83K
& 0
& 83K
% UnsafeSyncColl
& 2.7M
& 101K
& 0
& 101K
% UnsafeSyncMap
& 2.7M
& 101K
& 0
& 101K
\\

tomcat
% HasNext
& 25
& 6
& 0
& 0
% UnsafeIter
& 132
& 4
& 0
& 0
% UnsafeMapIter
& 68
& 26
& 0
& 0
% UnsafeSyncColl
& 29
& 10
& 0
& 0
% UnsafeSyncMap
& 33
& 12
& 0
& 0
\\

tradebeans
% HasNext
& 11
& 3
& 0
& 0
% UnsafeIter
& 31
& 2
& 0
& 0
% UnsafeMapIter
& 29
& 13
& 0
& 0
% UnsafeSyncColl
& 13
& 5
& 0
& 0
% UnsafeSyncMap
& 15
& 6
& 0
& 0
\\

tradesoap
% HasNext
& 11
& 3
& 0
& 0
% UnsafeIter
& 31
& 2
& 0
& 0
% UnsafeMapIter
& 29
& 13
& 0
& 0
% UnsafeSyncColl
& 13
& 5
& 0
& 0
% UnsafeSyncMap
& 15
& 6
& 0
& 0
\\

xalan
% HasNext
& 11
& 3
& 0
& 0
% UnsafeIter
& 8.9K
& 2
& 0
& 0
% UnsafeMapIter
& 119K
& 20K
& 0
& 20K
% UnsafeSyncColl
& 13
& 5
& 0
& 0
% UnsafeSyncMap
& 15
& 6
& 0
& 0
\\

%\hline
%
%tmbench
%& jHotDraw
%% HasNext
%& 20M
%& 2.3M
%& 2.2M
%& 1.9M
%% UnsafeIter
%& 9.5M
%& 1.8M
%& 1.7M
%& 1.6M
%% UnsafeMapIter
%& 2
%& 2
%& 0
%& 2
%% UnsafeSyncColl
%& 0
%& 0
%& 0
%& 0
%% UnsafeSyncMap
%& 0
%& 0
%& 0
%& 0
%\\

\hline
\end{tabular}
}
\caption{Monitoring statistics: number of events (E), number of created monitors (M), 
number of flagged monitors (FM), number of collected monitors (CM).}
\label{table:stat}
\end{center}
\end{figure*}

\subsubsection{Results and Discussions}
\label{sec:results}

Figures~\ref{table:performance_evaluation} and~\ref{table:stat} summarize the
results of the evaluation.  Note that the structure of the DaCapo 9.12 allows
us to instrument all of the benchmarks plus all supplementary libraries that
the benchmarks use, which was not possible for DaCapo 2006-10.  Therefore,
{fop} and {pmd} show higher overheads than the benchmarks
using DaCapo 2006-10 from~\cite{chen-meredith-jin-rosu-2009-ase}.  While other
benchmarks show overheads less than 80\% in {JavaMOP},
{bloat}, {avrora}, and {pmd} show prohibitive
overhead in both runtime and memory performance. 
This is because they generate many iterators and all properties in
this evaluation are intended to monitor iterators. 
For example, {bloat} creates 1,625,770 collections and 941,466 iterators in total 
while 19,605 iterators coexist at the same time at peak, in an execution. 
{avrora} and {pmd} also create many collections and iterators.
Also, they call \textsf{\footnotesize hasNext()} 78,451,585 times, 1,158,152 times and 4,670,555 times
and \textsf{\footnotesize next()} 77,666,243 times, 352,697 times and 3,607,164 times, respectively.
Therefore, we mainly discuss those three examples in this section, 
although {RV} shows improvements for other examples as well.

Figure~\ref{table:performance_evaluation}~(A) shows the percent runtime
overhead of {Tracematches}, {JavaMOP}, and {RV}.
Overall, {RV} averages two times less runtime overhead than
{JavaMOP} and orders of magnitude less runtime overhead than
{Tracematches} (recall that these are the most optimized runtime
verification systems).
%In fact, in \unsafemap{} with {bloat} we saw overhead decrease nearly
%seven times over.
With {bloat}, {RV} shows less than 260\% runtime overhead for
each property, while {JavaMOP} always shows over 440\%
runtime overhead and {Tracematches} always shows over 1350\% for completed runs and
\emph{crashed} for \unsafemap.  With {avrora}, on average, {RV}
shows 62\% runtime overhead, while {JavaMOP} shows 139\% runtime
overhead and {Tracematches} shows 203\% and hangs for \unsafemap{}.
With {pmd}, on average, {RV} shows 94\% runtime overhead,
while {JavaMOP} shows 231\% runtime overhead and
{Tracematches} shows 1139\% and hangs for \unsafemap{} and
\unsafesyncmap{}.

Also, {RV} was tested with all five properties together and showed
982\%, 275\%, and 620\% overhead, respectively, which are still faster or
comparable to monitoring one of many properties alone in {JavaMOP} or
{Tracematches}.  The overhead for monitoring all the properties
simultaneously can be slightly larger than the sum of their individual
overheads since the additional memory pressure makes the JVM's garbage
collection behave differently.

Figure~\ref{table:performance_evaluation} (B) shows the peak memory usage of
the three systems.  {RV} has lower peak memory usage than
{JavaMOP} in most cases.  The cases where {RV} does not show
lower peak memory usage are within the limits of expected memory jitter.
However, memory usage of {RV} is still higher than the memory usage of
{Tracematches} in some cases.  {Tracematches} has several
finite automata specific memory optimizations \cite{oopsla07abc}, which cannot
be implemented in formalism-independent systems like {RV} and JavaMOP.
Although {Tracematches} is sometimes more memory efficient, it shows
prohibitive runtime overhead monitoring {bloat} and {pmd}.
There is a trade-off between memory usage and runtime overhead.  If
{RV} more actively removes terminated monitors, memory usage will be
lower, at the cost of runtime performance.  Overall, the monitor garbage collection
optimization in RV achieves the most efficient parametric monitoring system with
reasonable memory performance. 

Figure~\ref{table:stat} shows the number of triggered events, of created
monitors, of monitors flagged as unnecessary by RV's optimization, and
of monitors collected by the JVM.  Among the DaCapo examples, {bloat},
{avrora}, {h2}, {pmd} and {sunflow}
generated a very large number of events (millions) in all
properties, resulting in millions of monitors created in most cases.
{h2} does not exhibit large overhead because monitor instances in
{h2} have shorter lifetimes, therefore the created monitor instances
are not used heavily like in {bloat}.  {sunflow} has millions
of events but does not create as many monitor instances as as other benchmarks.
When monitoring the \hasnext{} and \unsafeiter{} properties, RV's garbage collector
effectively flagged monitors as unnecessary and most were collected by the JVM.

The experimental evaluation in this section shows that the approach to
parametric trace slicing and monitoring discussed in this paper is indeed feasible,
provided that it is not implemented naively.  Indeed, as seen in the tables in this section,
implementation optimizations make a huge difference in the runtime and memory overhead.
This paper was not dedicated to optimizations and implementations; its objective was to
only introduce the mathematical notions, notations, proofs and abstract algorithms underlying
the semantical foundation of parametric properties and their monitoring.  Current and future
implementations are and will build on this foundation, applying specific optimizations and
heuristics to reduce the runtime or the memory overhead caused by monitoring.

\section{Concluding Remarks, Future Work and Acknowledgments}
\label{sec:conclusion}

A semantic foundation for parametric traces, properties and
monitoring was proposed.  A parametric trace slicing
technique, which was discussed and proved correct, allows the
extraction of all the non-parametric trace slices from a parametric
slice by traversing the original trace only once and dispatching each
parametric event to its corresponding slices.  
It thus enables the leveraging of any non-parametric,
i.e., conventional, trace analysis techniques to the parametric case.
A parametric 
monitoring technique, also discussed and proved correct, makes use of it
to monitor arbitrary parametric properties against parametric
execution traces using and indexing ordinary monitors for the base,
non-parametric property.  Optimized implementations of the discussed
techniques in JavaMOP and RV reveal that their generality, compared to
the existing similar but ad hoc and limited techniques in current use,
does not come at a performance expense.  Moreover, further static analysis
optimizations like those in
\cite{pql-oopsla,bodden-chen-rosu-2009-aosd,clara,dwyer-purandare-person-2010-rv} may
significantly reduce the runtime and memory overheads of monitoring parametric properties
based on the techniques and algorithms discussed in this paper.

The parametric trace slicing technique in Section
\ref{sec:trace-slicing} enables the leveraging of any non-parametric,
i.e., conventional, trace analysis techniques to the parametric case.
We have only considered monitoring in this paper.  Another interesting
and potentially rewarding use of our technique could be in the context
of property mining. For example, one could run the trace slicing
algorithm on large benchmarks making intensive use of library classes,
and then, on the obtained trace slices corresponding to particular
classes or groups of classes of interest, run property mining
algorithms.  The mined properties, or the lack thereof, may provide
insightful formal documentation for libraries, or even detect errors.
Preliminary steps in this direction are reported in \cite{lee-chen-rosu-2011-icse}.

\subsection*{Acknowledgments}

We would like to warmly thank the other members of the MOP team who
contributed to the implementation of the new and old JavaMOP system,
as well as of extensions of it, namely to Dennis Griffith, Dongyun
Jin, Choonghwan Lee and Patrick Meredith.  We are also grateful to
Klaus Havelund, who found several errors in our new JavaMOP
implementation while using it in teaching a course at Caltech, and who
recommended us several simplifications in its user interface.
We are also grateful to Matt Dwyer and to Tewfik Bultan for using
JavaMOP in their classes at the Universities of Nebraska and of California,
respectively.  We express our thanks also to Eric Bodden for his lead of the
static analysis optimization efforts in \cite{bodden-chen-rosu-2009-aosd},
and to the Tracematches \cite{tracematches-oopsla,oopsla07abc},
PQL \cite{pql-oopsla}, Eagle \cite{DBLP:conf/vmcai/BarringerGHS04} and
RuleR \cite{DBLP:conf/rv/BarringerRH07} teams for inspiring debates
and discussions.
The research presented in this paper was generously funded by the 
NSF grants NSF CCF-0916893, NSF CNS-0720512, and NSF CCF-0448501,
by NASA grant NASA-NNL08AA23C, by a Samsung SAIT grant and by several
Microsoft gifts and UIUC research board awards.

Sadly, the second author, Feng Chen, passed away on August 8, 2009,
in the middle of this project, due to an undetected blood clot.
Feng was the main developer of JavaMOP and a co-inventor of most of
its underlying techniques and algorithms, including those in this paper.
His results and legacy will outlive him.  May his soul rest in peace.

\bibliographystyle{abbrv}

\bibliography{citations}

\end{document}